\newtheorem{theorem}{Theorem}[section]
\newtheorem{lemma}[theorem]{Lemma}
\newtheorem{corollary}[theorem]{Corollary}
\newtheorem{proposition}[theorem]{Proposition}
\theoremstyle{definition}
\newtheorem{definition}[theorem]{Definition}
\newtheorem{example}[theorem]{Example}
\newtheorem{remark}[theorem]{Remark}
\DeclareMathOperator{\NR}{NR}
\DeclareMathOperator{\Supp}{Supp}
\newcommand\LT{\mathop{\rm LT}\nolimits}
\newcommand\tsum{\textstyle\sum\limits}
\newcommand\rangleF{\rangle_{\mathbb{F}_2}}
\let\phi=\varphi
\let\rho=\varrho
\newcommand\ZZ{{\mathbb{Z}}}
\newcommand\FF{{\mathbb{F}}}
\newcommand\NN{{\mathbb{N}}}
\newcommand\BB{{\mathbb{B}}}
\newcommand\LL{{\mathbb{L}}}
\newcommand\II{{\mathbb{I}}}
\DeclareMathOperator{\True}{\mathtt{true}}
\DeclareMathOperator{\False}{\mathtt{false}}
\DeclareMathOperator{\calS}{\mathcal{S}}
\DeclareMathOperator{\calZ}{\mathcal{Z}}
\DeclareMathOperator{\calO}{\mathcal{O}}
\renewcommand{\iff}{\leftrightarrow}
\DeclareMathOperator*{\LXOR}{\bigoplus}
\DeclareMathOperator*{\LAND}{\bigwedge}
\DeclareMathOperator*{\LOR}{\bigvee}
\newcommand{\GCP}{\ensuremath{\mathtt{GGCP}}\xspace}
\newcommand{\PP}{\ensuremath{\mathtt{PP}}\xspace}
\newcommand{\eGCP}{\ensuremath{\mathtt{crGGCP}}\xspace}
\newcommand{\SCC}{\ensuremath{\mathtt{SCC}}\xspace}
\newcommand{\FLS}{\ensuremath{\mathtt{tFLS}}\xspace}
\newcommand{\UNSAT}{\ensuremath{\mathtt{UNSAT}}\xspace}
\newcommand{\SOLVER}{\ensuremath{\mathtt{G{\_\kern0.08em}2XNF{\_\kern0.12em}DPLL}}\xspace}
\newcommand{\txnfsolver}{\ensuremath{\mathtt{2\text{-}Xornado}}\xspace}
\newcommand{\cms}{\ensuremath{\mathtt{CryptoMiniSat}}\xspace}
\newcommand{\sbva}{\ensuremath{\mathtt{SBVA{\text{-}}CaDiCaL}}\xspace}
\newcommand{\cadical}{\ensuremath{\mathtt{CaDiCaL}}\xspace}
\newcommand{\xnfsat}{\ensuremath{\mathtt{xnfSAT}}\xspace}
\newcommand{\xnfbf}{\ensuremath{\mathtt{xnf{\_\kern0.1em}bf}}\xspace}
\newcommand{\PolyBoRi}{\ensuremath{\mathtt{PolyBoRi}}\xspace}
\newcommand{\Bosphorus}{\ensuremath{\mathtt{Bosphorus}}\xspace}
\newcommand{\OptiMathSAT}{\ensuremath{\mathtt{OptiMathSAT}}\xspace}
\newcommand{\MaxHS}{\ensuremath{\mathtt{MaxHS}}\xspace}
\newcommand{\XnfToTXnf}{\ensuremath{\mathtt{XNFto2XNF}}\xspace}
\newcommand{\AnfToTXnf}{\ensuremath{\mathtt{ANFto2XNF}}\xspace}
\newcommand{\QAnfToTXnf}{\ensuremath{\mathtt{QANFto2XNF}}\xspace}
\newcommand{\Ascon}{\ensuremath{\mathtt{Ascon}}\xspace}
\newcommand{\Asconnb}{\ensuremath{\mathtt{Ascon\text{-}128}}\xspace}
\newcommand{\SRES}{\ensuremath{\mathtt{s\text{-}Res}}\xspace}
\newcommand{\cpp}{\ensuremath{\mathtt{C\text{++}}}\xspace}
\newcommand{\lex}{\ensuremath{\mathtt{lex}}\xspace}
\title[XNF SAT-Solving]{SAT Solving Using XOR-OR-AND Normal Forms}
\author{Bernhard Andraschko}
\address[Bernhard Andraschko]{Fakult{\"a}t f{\"u}r Informatik und Mathematik\\
Universit\"{a}t Passau, D-94030 Passau, Germany}
\email{bernhard.andraschko@uni-passau.de}
\author{Julian Danner}
\address[Julian Danner]{Fakult{\"a}t f{\"u}r Informatik und Mathematik\\
Universit\"{a}t Passau, D-94030 Passau, Germany}
\email{julian.danner@uni-passau.de}
\author{Martin Kreuzer}
\address[Martin Kreuzer]{Fakult{\"a}t f{\"u}r Informatik und Mathematik\\
Universit\"{a}t Passau, D-94030 Passau, Germany}
\email{martin.kreuzer@uni-passau.de}
\date{24.10.2024}
\begin{document}

\begin{abstract}
This paper introduces the XOR-OR-AND normal form (XNF) for logical formulas. It
is a generalization of the well-known Conjunctive Normal Form (CNF) where
literals are replaced by XORs of literals. As a first theoretic result, we show
that every CNF formula is equisatisfiable to a formula in 2-XNF, i.e., a formula in
XNF where each clause involves at most two XORs of literals.
Subsequently, we present an algorithm which converts Boolean polynomials efficiently 
from their Algebraic Normal Form (ANF) to formulas in 2-XNF. Experiments with the
cipher ASCON-128 show that cryptographic problems, which by design are based
strongly on XOR-operations, can be represented using far fewer variables and
clauses in 2-XNF than in CNF. 
In order to take advantage of this compact representation, new SAT solvers based
on input formulas in 2-XNF need to be designed. By taking inspiration from
graph-based 2-CNF SAT solving, we devise a new DPLL-based SAT solver for
formulas in 2-XNF. Among others, we present advanced pre- and in-processing
techniques.
Finally, we give timings for random 2-XNF instances and instances related to key
recovery attacks on round reduced ASCON-128, where our solver outperforms state-of-the-art
alternative solving approaches.
\end{abstract}

\subjclass{03B70; 13P15; 05C90; 94A60}
\keywords{SAT solving, XOR constraint, algebraic normal form, implication graph, cryptographic attack}

\maketitle

%%%%%%%%%%%%%%%%%%%%%%%%%%
% Section 1: Introducton
%%%%%%%%%%%%%%%%%%%%%%%%%%

\section{Introduction}

SAT solvers are programs which decide the Boolean Satisfiability Problem
for propositional logic formulas. In the last decades there has been 
a substantial effort to improve their performance, and they have grown into
versatile tools for tackling computational problems in various domains such as
automatic theorem proving, graph theory, hardware verification, artificial
intelligence, and cryptanalysis.

Especially problems from the latter domain have been shown to be hard for
conventional SAT solvers that take a conjunctive normal form (CNF) as
input. Although many new attacks on cryptosystems and other cryptographic protocols
have been designed based on the idea of encoding the computational problem as an
instance for CNF-based SAT solvers (see for instance~\cite{DKMNPW, HE, JK,
LNV, LZLW, MZ}), many problems are still out of range (e.g., see~\cite{DK, DKMNPW}).
This can be mainly attributed to the fact that cryptographic primitives are
often built using \textit{exclusive disjunctions (XORs)} of variables which lead
to an exponential blow-up when encoded in~CNF. 

To speed up the performance of SAT solvers for such instances, one can either try to 
modify the problem encodings such that they lead to smaller instances (see~\cite{CSCM,HK2,JK}), or one can try to improve the solving strategy altogether. 
For example, the latter approach has been pursued by attempts to integrate support for XOR 
constraints on the input variables (see \cite{HJ,NLFHB,SNC,TID})
or by combining logical SAT solving with algebraic reasoning (see~\cite{CSCM,Hor}). 
While purely algebraic solving techniques (as developed for instance in \cite{BD,CKPS,CSSV}) 
have had some success in cryptanalysis (e.g., see \cite{ACGKLRS,CD,DK}), a very
promising line of research seems to be to combine logic and algebraic solving paradigms. 

One such attempt was initiated in~\cite{HK1} and refined in~\cite{Hor}, where a new proof 
system called \SRES\ was introduced. Its input are products of linear Boolean
polynomials, or, in the language of logic,  disjunctions of XORs of literals. 
Thus the \SRES\ proof system is innately suitable for dealing with cryptanalytic instances,
as these tend to be rich in XOR connectives.
The core inference rule of~\SRES\ is called \textit{$s$-resolution}. It is both a generalization of the classical 
resolution rule of propositional logic and of Buchberger's S-polynomials which form the
basis of Gr\"obner basis computations (see~\cite{KR1}).
In~\cite{Hor} and~\cite{HK1} initial DPLL-based refutation methods utilizing 
$s$-resolvents were introduced. As of today, no highly efficient implementation of 
these algorithms exists, and procedures to use \SRES\ for finding satisfying
assignments are lacking as well.
\smallskip

In the first part of this paper we strive to develop efficient methods for converting propositional 
logic formulas to suitable inputs for \SRES\ type proof systems. After recalling some basic
definitions and properties of the ring of Boolean polynomials in Section~\ref{sec:basics},
we introduce and study a new XOR-based normal form in Section~\ref{sec:XNF}.
More precisely, the new normal form is called the \textit{XOR-OR-AND normal form (XNF)} for propositional logic
formulas (see Definition~\ref{def:xnf}). It generalizes the CNF by replacing the literals with XORs of
literals which we simply call \textit{linerals}. From an algebraic perspective, 
a lineral corresponds to a linear Boolean polynomial, and a disjunction of linerals 
corresponds to a product of linear Boolean polynomials. Using this identification, one sees 
that formulas in XNF occur naturally in the proof system~\SRES. Since XNF generalizes CNF, it is
clear that every propositional logic formula is equivalent to one in XNF. 
%However, the true task is to find an XNF representation which is as compact as possible. 

While conversions from the algebraic normal form (ANF) of a Boolean polynomial to the CNF of 
the corresponding propositional logic formula and back have been studied carefully (see for instance~\cite{CSCM,HK2,JK}), conversions to systems offering some native support for XOR 
have been introduced only sparingly and elaborated much less systematically (see~\cite{LJN,NLFHB,SNC}).
 
It is well-known that one can introduce new variables and convert every Boolean polynomial system
to one involving only polynomials of degree at most two. Here we show that, in fact, every
XNF formula is equisatisfiable to one in 2-XNF, i.e., to an instance of XNF where each clause involves
at most two linerals (see Proposition~\ref{prop:2xnf}). Algebraically speaking, systems of quadratic 
Boolean polynomial equations can be transformed to systems consisting of products of at most two linear polynomials.
Furthermore, we try to optimize this transformation by introducing as few additional variables
as possible (see Propositions~\ref{prop:alg:polyTo2Xnf} and~\ref{prop:alg:quadPolyTo2Xnf}). 
To illustrate the potential of the conversion to 2-XNF, we 
apply it to instances related to algebraic attacks on the cipher \Asconnb\
(see~\cite{DEMS}) which was recently selected for standardization by NIST for lightweight cryptography.
We get 2-XNF representations which are substantially more compact than state-of-the-art representations in CNF
(see Example~\ref{exa:ascon_sbox}). 
\smallskip

In the second part of the paper we make use of this 2-XNF representation and take the first few
steps towards translating the foundations of efficient CNF-based SAT solving to 
XNF-based SAT solving. In particular, using ideas based on efficient 2-CNF solvers
and CNF pre-processing (see~\cite{APT,HMB}), we develop a graph based 2-XNF solver. 
To start with, we define an \textit{implication graph structure (IGS)} $(L,V,E)$ for a given formula~$F$
which consists of a set~$L$ of linear Boolean polynomials known to be in the
ideal $I_F$, which is the algebraic representation of~$F$, and a directed graph $(V,E)$
whose edges $(f,g)$ mean that $f\in I_F$ implies $g\in I_F$ (see Definition~\ref{def:igs}
and Remark~\ref{rem:motivation}). Our solving algorithm then starts with a trivial
IGS for~$F$ (see Remark~\ref{rem:trivialIGS}) and simplifies it using a suitable ordering
on the IGSs (see Definition~\ref{def:IGSordering}). 
Then we gradually improve the IGS by propagation, in-processing and
guessing until we arrive at an implication graph structure with an empty graph, i.e.,
a case where the corresponding ideal is generated by linear polynomials.
Given that the guesses were correct, a satisfying assignment for~$F$ can be
deduced immediately from a solution of the corresponding system of linear
equations. The improvement of an IGS is measured in terms of
the size of the linear part~$L$ and in the size of the graph $(V,E)$. 

Propagation is achieved using a generalization of the classical Boolean constraint propagation
which we call \textit{Gau{\ss}ian Constraint Propagation} (see Proposition~\ref{prop:alg:GCP}).
Two pre-processing methods are examined which yield new linear forms or new edges for the IGS
(see Proposition~\ref{prop:preprocessing}). Unfortunately, they are too expensive to be
executed repeatedly during the main solving procedure. For such in-processing methods,
we provide two more efficient suggestions. Firstly, using the calculation of strongly connected
components of~$(V,E)$, we are able to reach an acyclic graph (see Proposition~\ref{prop:alg:eGCP}).
Secondly, we introduce the notion of \textit{failed linerals} (see Definition~\ref{def:failed})
and apply them in order to learn new linear polynomials in~$I_F$ (see Proposition~\ref{prop:alg:FLS}).

Several heuristics for producing good \textit{decisions} for an IGS, i.e., for making good 
guesses (see Definition~\ref{def:decision}) are discussed next. Moreover, we offer
some suggestions how to implement these heuristics efficiently (see Remarks~\ref{rem:decheu}
and~\ref{rem:decheu_impl}). Finally, we combine everything and present our new graph-based
2-XNF solver (see Proposition~\ref{prop:alg:solver}) together with suggestions how to
implement it well using suitable data structures (see Remark~\ref{rem:DataStructures}).

The last section contains the results of some experiments and comparisons to
established CNF-based SAT solvers, especially ones that offer some support for XOR constraints.
Usually, they allow separate XOR constraints on variables in addition to a CNF, a type
of input that is known as CNF-XOR. One of the first solvers for such formulas was described in~\cite{SM,SNC}
and is implemented in \cms. Another one is \xnfsat\ (see~\cite{NLFHB}) which uses stochastic local search methods for  CNF-XOR inputs. 
The solver \Bosphorus\ allows ANF and CNF inputs, but no CNF-XOR inputs (see~\cite{CSCM}). Moreover, we compare our method with the winner of the 2023 SAT competition \sbva (see~\cite{HGH}) which admits CNF formulas.

In our experiments we compare the new 2-XNF solver using the three decision heuristics 
explained in Section~\ref{sec:solving} to the CNF-XOR solvers \cms\ and \xnfsat, to the
algebraic solver \PolyBoRi\ (see~\cite{BD}), to a brute force XNF solver \xnfbf, and to the CNF solver \sbva.
We created timings for two types of inputs. The first type are random 2-XNF instances. It turns out that
our graph based 2-XNF solver involving merely some basic DPLL techniques outperforms
state-of-the-art solving approaches on small random instances, both for satisfiable 
and unsatisfiable cases (see Figures~\ref{fig:cactus_sat} and~\ref{fig:cactus_unsat}).

The second type of experiments was to try the solvers on key recovery attacks for
round reduced versions of the \Asconnb\ cryptosystem. This lightweight cipher was
recently selected for standardization by NIST. Again it turns out that, even with our very simple
decision heuristics, the graph based 2-XNF solver performs surprisingly well 
(see Figure~\ref{fig:bench_rr_ascon}). Here it may be worthwhile to note that 
some of these round reduced key recovery attacks can be solved already in the pre-processing
phase. Altogether, one main advantage of XNF solving
is that the more compact problem representations require fewer decisions, and this leads
to meaningful speed-ups. Finally, let us point out that the desirable extension of
XNF solving to include CDCL techniques is not straightforward and will require 
non-trivial new tools.
\smallskip

Due to its simpler description, we chiefly use the algebraic point of view when
we work with formulas in XNF, i.e., we regard them as products of linear Boolean polynomials.
Unless explicitly noted otherwise, we use the definitions and notation
introduced in~\cite{HK2} and~\cite{KR1}. The algorithms of
Section~\ref{sec:XNF} were implemented by B.~Andraschko and the solving
methods of Section~\ref{sec:solving} by J.~Danner. All source code is
available at \url{https://github.com/j-danner/2xnf_sat_solving}.

%%%%%%%%%%%%%%%%%%%%%%%%%%%%%%%%%%%%%%%%%%%%%%%%%%%%%%%%%%%%
%
% Section 2: The Ring of Boolean Polynomials
%
%%%%%%%%%%%%%%%%%%%%%%%%%%%%%%%%%%%%%%%%%%%%%%%%%%%%%%%%%%%%

\section{The Ring of Boolean Polynomials}%
\label{sec:basics}

Throughout this paper we let $\FF_2$ be the field with two elements, $n \in
\NN_+$, and $P = \FF_2[X_1,\dots,X_n]$ the polynomial
ring over~$\FF_2$ in the indeterminates $X_1,\dots,X_n$. Recall that
the \textit{ring of Boolean polynomials} is
$$
\BB_n \;=\; P/ \langle X_1^2-X_1,\dots,X_n^2-X_n \rangle 
$$
where $\II_n = \langle X_1^2 - X_1, \dots, X_n^2 - X_n \rangle$ is also called the
\textit{field ideal} in~$P$.
Whenever additional indeterminates are required, we write 
$$
\BB_{n,m} \;=\; \FF_2[X_1,\dots,X_n, Y_1,\dots,Y_m]/\II_{n,m}
$$
where $\II_{n,m} = \langle X_1^2-X_1, \dots, X_n^2-X_n, Y_1^2-Y_1,\dots, Y_m^2-Y_m\rangle $.
For $i\in\{1,\dots,n\}$ and $j\in\{1,\dots,m\}$, we denote the residue class of~$X_i$ in~$\BB_n$ and $\BB_{n,m}$
by~$x_i$ and the residue class of~$Y_j$ in~$\BB_{n,m}$ by~$y_j$. These residue classes will be called
the \textit{indeterminates} of~$\BB_n$ and~$\BB_{n,m}$, respectively, and the elements of these rings are
called \textit{Boolean polynomials}.

Every Boolean polynomial $f\in \BB_n$ can be uniquely written 
as a sum of distinct square-free terms, where a \textit{term} is a product of distinct 
residue classes~$x_i$. This is known as the \textit{algebraic normal form (ANF)}
of~$f$. (See for instance \cite[Sec.~2.1]{Hor} or~\cite{Bri} for an in-depth
study of ANFs.) Altogether, we have $\BB_n = \FF_2[x_1,\dots,x_n]$ as an $\FF_2$-algebra 
and $\dim_{\FF_2}(\BB_n) = 2^n$.

Given $f\in \BB_n\setminus \{0\}$ in ANF, replacing each $x_i$ by~$X_i$ yields its
\textit{canonical representative} $F\in P$. Then the \textit{support} of~$f$
is $\Supp(f) = \{t+\II_n \mid t \in \Supp(F)\}$ and 
the \textit{degree} of~$f$ is given by 
$$
\deg(f) \;=\; \min\{\deg(F) \mid F\in P,\; f = F+\II_n\}.
$$
The $\FF_2$-linear span of all Boolean polynomials of degree $\le 1$ plays a major role subsequently.
It is denoted by 
$$
\LL_n \;=\; \langle 1,x_1,\dots,x_n \rangleF \;=\; \FF_2 \oplus \FF_2\, x_1 \oplus \cdots \oplus \FF_2\, x_n 
$$
and called the vector space of \textit{linear Boolean polynomials}. (Note that $\LL_n$ includes the element~1.)

In Section~\ref{sec:solving} we also need division with remainders for Boolean polynomial
rings. Let $\sigma$ be a term ordering on~$P$, and let $f, g_1,\dots,g_s\in\BB_n$ be 
Boolean polynomials in ANF. Let $F, G_1,\dots,G_s \in P$ be the canonical representatives of
$f,g_1,\dots,g_s$, respectively. Then the \textit{normal remainder} of~$f$ under the division
by $G=(g_1,\dots,g_s)$ is defined by 
$$
\NR_{\sigma,G}(f) \;=\; \NR_{\sigma,(G_1,\dots,G_s)}(F) + \II_n .
$$
Moreover, we denote the ordering induced by~$\sigma$ on the terms in~$\BB_n$ by~$\sigma$ again
and call $\LT_\sigma(f) = \LT_\sigma(F)+\II_n$ the \textit{leading term} of~$f$ with respect to~$\sigma$.
For the definitions and an explanation of these concepts in~$P$ see~\cite{KR1}, Chapter~I, and
for more details about orderings on Boolean polynomial rings see~\cite{Bri}.

%%%%%%%%%%%%%%%%%%%%%%%%%%%%%%%%%%%%%%%%%%%%%%%%%%%%%%%%%%%%%%%%%%%
%
% Section 3: Logical Representations of Boolean Polynomials
%
%%%%%%%%%%%%%%%%%%%%%%%%%%%%%%%%%%%%%%%%%%%%%%%%%%%%%%%%%%%%%%%%%%%

\section{Logical Representations of Boolean Polynomials}%
\label{sec:XNF}

In the following we let $\BB_n = \FF_2[x_1,\dots,x_n]$ be the ring of Boolean polynomials.
Our goal is to connect Boolean polynomials to propositional logic formulas.
We assume that the readers are familiar with the syntax of propositional logic 
and identify $\True \equiv 1$ as well as $\False \equiv 0$.

\begin{definition} Let $S$ be a subset of~$\BB_n$, and let $F$ be a propositional logic formula
in the logical variables $X_1,\dots,X_n$.
\begin{enumerate}
\item[(a)] The set $\calZ(S) = \{ (a_1,\dots,a_n)\in\FF_2^n \mid f(a_1,\dots,a_n)=0$ for all $f\in S\}$
is called the \textbf{zero set} of~$S$.

\item[(b)] The set $\calS(F) = \{(a_1,\dots,a_n)\in\FF_2^n \mid F|_{X_1\mapsto a_1,\dots,X_n\mapsto a_n} \equiv \True\}$
is called the \textbf{set of satisfying assignments} of~$F$.

\item[(c)] Given an ideal~$I$ in~$\BB_n$, a propositional logic formula~$F$ is called 
a \textbf{logical representation} of~$I$ if $\calS(F)=\calZ(I)$.

\item[(d)] Given a propositional logic formula~$F$, the uniquely determined ideal~$I_F$ 
in~$\BB_n$ such that $\calZ(I_F) = \calS(F)$ is called the \textbf{algebraic representation} of~$F$.
\end{enumerate}
\end{definition}

Recall that $\BB_n$ is a principal ideal ring in which every ideal has a unique generator and that every
propositional logic formula is equivalent to a formula in \textbf{conjunctive normal form (CNF)}.
%In this way the definition provides a bijection between Boolean polynomials and propositional logic
%formulas in~CNF.
Effective transformations between these representations have been studied extensively
(see for instance~\cite{CSCM} and~\cite{JK}).

One disadvantage of converting Boolean polynomials to CNF is that sums correspond to XOR connectives
and a long chain of XOR connectives yields an exponentially large set of CNF clauses.
To address this problem, we introduce a new type of normal form next. Afterwards, we examine
algorithms for converting Boolean polynomials to this normal form and back.

\begin{definition}[XOR-OR-AND Normal Form]\label{def:xnf}
Let $F$ be a propositional logic formula.
\begin{enumerate}
\item[(a)] A formula of the form $L_1 \oplus L_2 \oplus \cdots \oplus L_m$ with literals $L_1,\dots,L_m$ is called a \textbf{lineral}.

\item[(b)] A disjunction of linerals is called an \textbf{XNF clause}.

\item[(c)] The formula~$F$ is said to be in \textbf{XOR-OR-AND normal form (XNF)}
if~$F$ is a conjunction of XNF clauses.

\item[(d)] Let $k\in\NN$. If~$F$ is in XNF and every XNF clause of~$F$ involves
at most $k$ linerals then we say that~$F$ is in \textbf{$k$-XNF}.

\end{enumerate}
\end{definition}

Notice that every literal is also a lineral. Hence every formula in CNF is already
in XNF. The negation of a lineral is a lineral because of
$$
\lnot ( L_1 \oplus L_2 \oplus \cdots \oplus L_m) \;\equiv\; \lnot L_1 \oplus
L_2 \oplus \cdots \oplus L_m .
$$
Moreover, every lineral is equivalent to a lineral of the
form~$\LXOR_i X_i$ or~$\lnot (\LXOR_i X_i)$.

Observe that~\cite{NLFHB} introduces a normal form with the same name, but for 
formulas that consist of CNF clauses and XOR constraints on the variables.
In the terminology defined here, these are XNF unit clauses.
We also refer to a formula of this type as a CNF-XOR, 
consistent with related research (see~\cite{DMV,JK,LJN,SM,SNC,TID}).

The motivation for introducing the XNF is its algebraic representation
which can be described as follows.

\begin{remark}\label{rem:xnfAlg}
Let $X_1,\dots,X_n$ be propositional logic variables.
\begin{enumerate}
\item[(a)] Let $L = X_{i_1} \oplus \dots \oplus X_{i_t}$ be a lineral with $i_1,\dots,i_t\in\{1,\dots,n\}$.
Then the algebraic representation of~$L$ is the ideal $\langle x_{i_1}+\dots+x_{i_t}+1 \rangle $ in~$\BB_n$.
Thus linerals correspond to linear polynomials in~$\BB_n$.
      
\item[(b)] Let $L_1,\dots, L_s$ be linerals, and let $C = L_1 \vee \cdots \vee L_s$
be an XNF clause. For $i\in\{1,\dots,s\}$, let $\ell_i \in \LL_n$ be the algebraic representation of~$L_i$.
Then $\langle \ell_1\cdots \ell_s \rangle$ is the algebraic representation of~$C$.
Thus XNF clauses correspond to products of linear Boolean polynomials. 

\item[(c)] Let $C_1,\dots,C_r$ be XNF clauses, and let $F = C_1 \wedge \cdots \wedge C_r$
be a logical formula in XNF. For $i\in\{1,\dots,r\}$, let $c_i\in \BB_n$ be the product of linear
Boolean polynomials representing~$C_i$. Then the algebraic representation of~$F$ is
the ideal $\langle c_1,\dots,c_r\rangle$ in~$\BB_n$.
\end{enumerate}
\end{remark}

For the converse transformation, we could use the logical representation of a Boolean polynomial
which is in CNF, and hence in XNF. However, as we are striving for logical formulas in XNF
which have few and short clauses, i.e., correspond to few low-degree Boolean polynomials,
we proceed along a different path in the following two subsections.

Moreover, the XNF is the natural input to the proof system~\SRES (see~\cite{Hor}), 
and therefore builds the basis for any~\SRES-based solving algorithms.

\subsection{Reduction of Formulas in XNF to 2-XNF}

It is a well-known property of propositional logic formulas that they
can be transformed to equisatisfiable formulas in 3-CNF by introducing
additional variables. In the following we focus on an analogous transformation
of formulas in XNF.

\begin{definition}\label{def:equivalence}
Let $S \subseteq \FF_2^n$, and let $T \subseteq \FF_2^{n+m}$ for some $n,m \in \NN$.
Then we write $S \equiv_n T$ if the projection map $\pi:\; \FF_2^{n+m} \longrightarrow
\FF_2^n$ defined by $\pi((a_1,\dots,a_{n+m})) = (a_1,\dots,a_n)$ induces a
bijection $\pi\vert_T:\; T \longrightarrow S$.
\end{definition}

The relation $\equiv_n$ has the following useful properties.

\begin{remark}\label{rmk:equivTrans}
Let $F$ be a logical formula involving the variables $X_1,\dots,X_n$, and let~$G$
be a formula involving the variables $X_1,\dots,X_n,Y_1,\dots,Y_m$.
\begin{enumerate}
\item[(a)] If we have $\calS(F)\equiv_n\calS(G)$ then the formulas~$F$ and~$G$ 
are clearly equisatisfiable. More precisely, the satisfying assignments of~$G$ are 
in one-to-one correspondence with the satisfying assignments of~$F$ via the projection~$\pi$ 
to the first $n$~coordinates.

\item[(b)] In general, the relation $\equiv_n$ is not symmetric, but it is transitive 
in the following sense. Let $k,m,n \in \NN$, let $S \subseteq \FF_2^n$, let $T \subseteq 
\FF_2^{n+m}$, and let $U \subseteq \FF_2^{n+m+k}$. 
If we have $S \equiv_n T$ and $T \equiv_{n+m} U$ then $S \equiv_n U$.
\end{enumerate}
\end{remark}

The following lemma provides the key step for the reduction of formulas in XNF to 2-XNF.
It can be easily verified using a truth table.

\begin{lemma}\label{lem:2xnf1}
Let $L_1,L_2$ be two linerals, and let~$Y$ be an additional logical variable. Then we have
$$
Y \iff (L_1 \lor L_2) \;\equiv\; (Y \lor \lnot L_2) \land (\lnot(Y \oplus L_1) \lor L_2).
$$
\end{lemma}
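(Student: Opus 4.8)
The plan is to verify the logical equivalence
$$
Y \iff (L_1 \lor L_2) \;\equiv\; (Y \lor \lnot L_2) \land (\lnot(Y \oplus L_1) \lor L_2)
$$
by a case analysis on the truth values of the three quantities $Y$, $L_1$, $L_2 \in \{\True,\False\}$ — effectively filling in a $2^3 = 8$ row truth table, as the statement already advertises. Since $L_1$ and $L_2$ are linerals, each evaluates to a definite Boolean value under any assignment to $X_1,\dots,X_n$, so it suffices to treat $L_1$ and $L_2$ as atomic Boolean values; the lineral structure plays no role in this particular step. I would also silently use that $\lnot(Y \oplus L_1)$ is itself a lineral (as noted after Definition~\ref{def:xnf}), which is why the right-hand side is genuinely an XNF formula with two XNF clauses, each involving at most two linerals — this is the point that makes the lemma useful for the reduction to 2-XNF, even though it is not part of the equivalence claim itself.

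Concretely, I would organize the case analysis around the disjunction $L_1 \lor L_2$. \emph{Case 1: $L_1 \lor L_2 \equiv \True$.} Then the left-hand side reduces to $Y$. On the right, the first clause $Y \lor \lnot L_2$ and the second clause $\lnot(Y\oplus L_1)\lor L_2$ must be analyzed according to which of $L_1,L_2$ is true; in each subcase one checks that the conjunction collapses to $Y$. For instance, if $L_2 \equiv \True$, the first clause is automatically $\True$ and the second clause is $\lnot(Y\oplus L_1)\lor\True \equiv \True$, which is wrong — so I would instead note that when $L_2\equiv\True$ the second clause is trivially true and the first clause is $Y\lor\False\equiv Y$; when $L_2\equiv\False$ (forcing $L_1\equiv\True$) the first clause is $Y\lor\True\equiv\True$ and the second is $\lnot(Y\oplus\True)\lor\False \equiv \lnot\lnot Y \equiv Y$. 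Either way the right-hand side equals $Y$, matching the left. \emph{Case 2: $L_1\lor L_2\equiv\False$}, i.e.\ $L_1\equiv L_2\equiv\False$. Then the left-hand side is $Y\iff\False \equiv \lnot Y$. On the right, the first clause is $Y\lor\lnot\False \equiv Y\lor\True\equiv\True$, and the second clause is $\lnot(Y\oplus\False)\lor\False \equiv \lnot Y$. Hence the conjunction is $\lnot Y$, again matching. This exhausts all cases.

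Honestly, there is no substantial obstacle here: the statement is a finite Boolean identity and the only mild subtlety is bookkeeping — making sure that when $L_1\lor L_2$ is true one correctly splits into the subcases $L_2\equiv\True$ and $(L_1\equiv\True, L_2\equiv\False)$ so that the two clauses on the right combine to exactly $Y$ rather than to something weaker. If one prefers to avoid case analysis entirely, an alternative is the algebraic route of Remark~\ref{rem:xnfAlg}: writing $y$ for the indeterminate associated with $Y$ and $\ell_1,\ell_2\in\LL_n$ for the algebraic representations of $L_1,L_2$, one can check that the two sides generate the same ideal in $\BB_{n,1}$ by comparing zero sets, which amounts to the same eight-point verification over $\FF_2$. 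I would present the truth-table version since it is the most transparent and matches the lemma's stated justification.
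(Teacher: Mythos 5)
Your proof is correct and takes essentially the same route as the paper, which simply states that the equivalence "can be easily verified using a truth table"; your case split on $L_1\lor L_2$ is just an organized way of running through that table, and the corrected subcase computations (first clause $\equiv Y$, second clause $\equiv\True$ when $L_2\equiv\True$; first clause $\equiv\True$, second clause $\equiv Y$ when $L_1\equiv\True$, $L_2\equiv\False$; right-hand side $\equiv\lnot Y$ when $L_1\equiv L_2\equiv\False$) are all accurate. The mid-paragraph false start you flag and immediately fix is harmless, though in a final write-up you should of course state only the corrected evaluation.
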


Notice that the left side of the equivalence in this lemma is
symmetric in~$L_1$ and~$L_2$. Thus, swapping~$L_1$ and~$L_2$ on the right-hand
side of the equivalence also yields an equivalent formula.
The following Algorithm~\ref{alg:XnfTo2Xnf} converts logical formulas in XNF to 2-XNF.
\smallskip

\begin{algorithm}[ht]
  \DontPrintSemicolon
  \SetAlgoLongEnd
  \SetKwInOut{Input}{Input}
  \SetKwInOut{Output}{Output}

  \Input{A logical formula~$F$ in XNF involving $n$ variables.}
  \Output{A logical formula~$G$ in 2-XNF with $\calS(F) \equiv_n \calS(G)$.}
  \BlankLine
  Let $i = 0$ and $M = \emptyset$. Write~$F = \LAND_{k=1}^r C_k$. \;
  \For{$k=1$ \KwTo $r$}{
    \While{$C_k$ contains more than two linerals}{
        Write~$C_k = \LOR_{j=1}^s L_j$ with linerals~$L_j$. \;
        Increase~$i$ by~$1$ and let~$Y_i$ be a new variable. \;
        Replace~$C_k$ by $(Y_i \lor \LOR_{j=3}^s L_j)$. \;
        % Let~$C_k = (Y_i \lor \LOR_{j=3}^s L_j)$. \;
        Adjoin~$\{(Y_i \lor \lnot L_2), (\lnot(Y_i \oplus L_1) \lor L_2)\}$ to~$M$. \;
    }
    Append~$C_k$ to~$M$. \;
  }
  \Return $\LAND M$. \;
  \caption{\XnfToTXnf \,--\, XNF to 2-XNF Conversion}
  \label{alg:XnfTo2Xnf}
\end{algorithm}

\begin{proposition}\label{prop:2xnf}
Let $F$ be a propositional logic formula in XNF involving~$n$ logical variables. 
Then \XnfToTXnf is an algorithm which returns a logical formula~$G$ in 2-XNF such that 
$\calS(F) \equiv_n \calS(G)$.
\end{proposition}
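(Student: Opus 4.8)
The plan is to verify that Algorithm~\ref{alg:XnfTo2Xnf} terminates, that its output is syntactically in 2-XNF, and that the required relation $\calS(F)\equiv_n\calS(G)$ holds; the last point is where the real work lies and it will be handled by induction using Lemma~\ref{lem:2xnf1} and the transitivity of $\equiv_n$ from Remark~\ref{rmk:equivTrans}(b).

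First I would argue termination and correctness of the output format. In each pass through the \textbf{while}-loop, a clause $C_k$ with $s>2$ linerals is replaced by a clause with $s-1$ linerals (the old $L_1,L_2$ are removed and the single new lineral $Y_i$ is prepended), so after finitely many iterations every clause appended to $M$ has at most two linerals; the clauses adjoined to $M$ inside the loop, namely $(Y_i\lor\lnot L_2)$ and $(\lnot(Y_i\oplus L_1)\lor L_2)$, visibly have at most two linerals as well (here one uses that the negation of a lineral and the XOR of two linerals are again linerals, as noted after Definition~\ref{def:xnf}). Hence $G=\LAND M$ is in 2-XNF and the algorithm halts.

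The heart of the proof is the equisatisfiability claim, which I would establish by tracking a single \textbf{while}-loop iteration and then composing. Fix an iteration that introduces the variable $Y_i$ and replaces the clause $C=\LOR_{j=1}^s L_j$ (with $s\ge 3$) by $C'=(Y_i\lor\LOR_{j=3}^s L_j)$ while adding $D_1=(Y_i\lor\lnot L_2)$ and $D_2=(\lnot(Y_i\oplus L_1)\lor L_2)$. Writing $H$ for the conjunction of all the other clauses currently present, the formula before the step is $H\land C$ on variables $X_1,\dots,X_n,Y_1,\dots,Y_{i-1}$, and after the step it is $H\land C'\land D_1\land D_2$ on those variables together with $Y_i$. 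By Lemma~\ref{lem:2xnf1}, $D_1\land D_2$ is equivalent to $Y_i\iff(L_1\lor L_2)$, so for any assignment to $X_1,\dots,X_n,Y_1,\dots,Y_{i-1}$ there is exactly one value of $Y_i$ satisfying $D_1\land D_2$, namely the truth value of $L_1\lor L_2$; and for that value, $C'$ becomes $(L_1\lor L_2)\lor\LOR_{j=3}^s L_j=C$. Consequently the satisfying assignments of $H\land C'\land D_1\land D_2$ project bijectively onto those of $H\land C$, which is precisely $\calS(H\land C)\equiv_{n+i-1}\calS(H\land C'\land D_1\land D_2)$ in the notation of Definition~\ref{def:equivalence}. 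Chaining these single-step relations across all iterations of all \textbf{for}-loop passes via Remark~\ref{rmk:equivTrans}(b)—the indices line up because the $t$-th newly introduced variable is $Y_t$—yields $\calS(F)\equiv_n\calS(G)$, using also Remark~\ref{rmk:equivTrans}(a) that $\equiv_n$ is reflexive on formulas that are left untouched.

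The main obstacle, and the place where care is needed, is the bookkeeping in the composition step rather than any hard mathematics: one must make sure that when Lemma~\ref{lem:2xnf1} is applied in the $t$-th overall iteration the ambient variable set is exactly $\{X_1,\dots,X_n,Y_1,\dots,Y_{t-1}\}$ so that the new variable is $Y_t$ and Remark~\ref{rmk:equivTrans}(b) applies with the correct subscripts, and that clauses already shortened or moved to $M$ in earlier iterations are correctly accounted for in the ``context'' formula $H$ at each step. Once this indexing discipline is in place, each individual step is the immediate consequence of Lemma~\ref{lem:2xnf1} described above, and transitivity finishes the argument.
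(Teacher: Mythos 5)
Your proposal is correct and follows essentially the same route as the paper: termination via the decreasing number of linerals per clause, and equisatisfiability by applying Lemma~\ref{lem:2xnf1} to each \textbf{while}-loop iteration and chaining the relations $\equiv_{n+i-1}$ with Remark~\ref{rmk:equivTrans}(b). Your explicit treatment of the context formula $H$ and of the unique extension of each assignment by the forced value of $Y_i$ merely spells out details the paper leaves implicit.
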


\begin{proof} As the number of linerals in~$C_k$ is decreased with every execution of
Line~$6$, the loop in Lines~$3$-$7$ stops after finitely many iterations. Thus the 
algorithm terminates after finitely many steps.

To prove correctness, we first observe that every XNF clause which is added to~$M$
contains at most two linerals, so the output formula is indeed in 2-XNF.
Moreover, by Lemma~\ref{lem:2xnf1}, we have
$$
\calS(C_k) \;\equiv_{n+i-1}\; \calS \big(\, (Y_i \lor \textstyle{\LOR_{j=3}^s} L_j) 
    \,\land\, (Y_i \lor \lnot L_2) 
    \,\land\, (\lnot(Y_i \oplus L_1) \lor L_2)\,\big)
$$
in Line~$5$ of the algorithm.  Hence we obtain $\calS(\LAND M \land C_k) \equiv_{n+i-1} 
\calS(\LAND M' \land C_k')$ in Line~$7$, where~$M'$ and~$C_k'$ denote the values of~$M$ and~$C_k$,
respectively, after their modification inside the loop (Lines~$3$-$7$). By
Remark~\ref{rmk:equivTrans}.b, this implies~$\calS(F) \equiv_n \calS(\LAND M)$
after every iteration of the outer loop (Lines~$2$-$8$), and consequently after
its termination.
\end{proof}

\begin{example}
Consider the formula $F = X_1 \lor X_2 \lor X_3$ in 3-CNF. 
Applying \XnfToTXnf to~$F$ yields the logical formula
$$
G \;=\; (Y_1 \lor X_3) \,\land\, (Y_1 \lor \lnot X_2) \,\land\, (\lnot(Y_1 \oplus X_1) \lor X_2).
$$
where $Y_1$ is a new variable, and we have $\calS(F) \equiv_3 \calS(G)$.
\end{example}

\begin{remark}
Suppose a formula $F$ is in $k$-XNF for some $k \in \NN$ and contains $r$ XNF
clauses. Then \XnfToTXnf introduces at most $r(k-2)$ new variables and~$2r(k-2)$ new clauses, since at
most $k-2$ new variables are added for each clause. This shows that every
formula in CNF can be converted to a formula in 2-XNF in polynomial time.
Consequently, the decision problem for 2-XNF instances is NP-complete.
\end{remark}

In spite of this seemingly negative worst-case complexity, it is well-known that
2-CNF formulas can be solved in linear time (e.g., see~\cite{APT}).
In Section~\ref{sec:solving}, we will further address how some of the core ideas
of 2-CNF solving can be translated to solving formulas in 2-XNF.
Finally, note that one can not only reduce the size of the XNF clauses, 
but also the \textit{length} of its linerals, i.e., the number of variables it contains, 
by using additional variables.

\begin{remark}
Let $L,L_1,L_2$ be linerals with $L\equiv L_1\oplus L_2$.
If~$Y$ is an additional logical variable, then we have
$L \equiv (L_1\oplus Y) \land (L_2\oplus \neg Y)$.
Repeated application and addition of new variables shows that every XNF formula 
can be reduced to a 2-XNF formula in which each lineral is a XOR of at most~$3$ variables.
\end{remark}

Better constructions to trade the length of linerals with additional variables 
can be derived from the methods of~\cite{EKMS} and~\cite{NLFHB}.

\subsection{2-XNF Representations of Boolean Polynomials}

In order to apply 2-XNF solving algorithms to practical instances,
we first need to create tools to convert problems given via Boolean
polynomials in ANF to logical formulas in 2-XNF.

A straightforward approach is to search for XORs of variables in a CNF
representation of the problem which correspond to XNF clauses of size~$1$
as for instance done in~\cite{NLFHB}.
While this produces XNF instances, in many situations
it does not capture the XOR-rich information well.
In fact we should find non-trivial XNF clauses when they exist to harness the
full potential of XNF-SAT solvers.
This is why we suggest to start with an ANF representation of the problem under
investigation, as it is more compact and uses fewer variables. So, in this
section we show how Boolean polynomials can be converted to 2-XNF. To illustrate the
algorithm, we apply it to problems with a cryptographic background.

To ease the notation we switch completely to the algebraic point of view.
Not only the input of our conversion algorithm is denoted algebraically, but
also the output 2-XNF. In view of Remark~\ref{rem:xnfAlg}, the following definition
captures this approach.

\begin{definition}[2-XNF Representation]\label{def:2xnfRepr}
Let~$I$ be an ideal in~$\BB_n$.
A set of Boolean polynomials of the form 
$S = \{ f_1g_1, \dots, f_k g_k, \ell_1, \dots, \ell_s\} \subseteq \BB_{n,m}$
with $f_i,g_i,\ell_j \in \mathbb{L}_n$ is called a \textbf{2-XNF representation} of~$I$
if $\calZ(I) \equiv_n \calZ(S)$.

Similarly, a set $S \subseteq \BB_{n,m}$ as above is called a 2-XNF
representation of a Boolean polynomial $f\in\BB_n$ if~$S$ is a 2-XNF representation
of~$\langle f\rangle$.
\end{definition}
  
%Recall that ideals in~$\BB_n$ are uniquely determined by their zero sets. Hence, if
%$S$ is already contained in~$\BB_n$, then $\calZ(I) \equiv_n \calZ(S)$ if and only 
%if $I = \langle S\rangle$.
Now Proposition~\ref{prop:2xnf} immediately implies the following result.

\begin{corollary}\label{cor:poly2xnf}
Let $I$ be an ideal in~$\BB_n$. Then there exists a 2-XNF representation of~$I$.
\end{corollary}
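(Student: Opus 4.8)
The plan is to reduce Corollary~\ref{cor:poly2xnf} to the already-established Proposition~\ref{prop:2xnf} by passing through the logical world, then translating back. Concretely, the first step is to observe that every ideal $I$ in $\BB_n$ has a logical representation in XNF: since $\BB_n$ is a principal ideal ring, write $I = \langle f \rangle$ with $f \in \BB_n$ in ANF, and let $F$ be any propositional logic formula (e.g.\ the CNF of the Boolean function defined by $f+1$, or simply any formula with $\calS(F) = \calZ(I)$) that is a logical representation of $I$. Since every propositional formula is equivalent to one in CNF, and CNF is a special case of XNF, we may assume $F$ is in XNF.

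Next I would apply \XnfToTXnf (Proposition~\ref{prop:2xnf}) to $F$, obtaining a formula $G$ in 2-XNF involving $n+m$ variables $X_1,\dots,X_n,Y_1,\dots,Y_m$ with $\calS(F) \equiv_n \calS(G)$. Now I translate $G$ back to algebra using Remark~\ref{rem:xnfAlg}: each 2-XNF clause of $G$ is either a single lineral, corresponding to some $\ell_j \in \LL_{n+m}$, or a disjunction of two linerals, corresponding to a product $f_ig_i$ with $f_i,g_i \in \LL_{n+m}$. Collecting these gives a set $S = \{f_1g_1,\dots,f_kg_k,\ell_1,\dots,\ell_s\} \subseteq \BB_{n,m}$ with $\calZ(S) = \calZ(I_G) = \calS(G)$. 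Combining, $\calZ(I) = \calS(F) \equiv_n \calS(G) = \calZ(S)$, which is exactly the defining property of a 2-XNF representation of $I$ from Definition~\ref{def:2xnfRepr}.

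There is one bookkeeping subtlety to handle carefully, and this is the step I expect to require the most attention: Definition~\ref{def:2xnfRepr} demands $f_i,g_i \in \LL_n$, i.e.\ linerals in the \emph{original} variables only, whereas the construction above naturally produces linerals in $\LL_{n+m}$. This is resolved by noting that the extra variables $Y_1,\dots,Y_m$ are precisely playing the role that $X_1,\dots,X_{n+m}$ play for a fresh polynomial ring $\BB_{n+m}$; one simply relabels, viewing $\BB_{n,m}$ as $\BB_{N}$ with $N = n+m$ indeterminates, of which the first $n$ are distinguished. Under this identification the linerals of $G$ genuinely lie in $\LL_n$ relative to the enlarged variable set, so $S$ has exactly the required shape. (Alternatively, one observes that the statement of Definition~\ref{def:2xnfRepr} as written already permits $f_i, g_i$ to involve the $Y_j$, reading $\LL_n$ loosely as linear forms in all available indeterminates — but making the relabelling explicit avoids any ambiguity.)

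Finally I would note that no further work is needed: the algorithm \XnfToTXnf together with the CNF-to-algebra dictionary of Remark~\ref{rem:xnfAlg} is constructive, so this also shows the 2-XNF representation can be computed effectively, though for the mere existence claim of the corollary it suffices to invoke Proposition~\ref{prop:2xnf} as a black box. The proof is therefore short; its only content beyond citing earlier results is the round trip $\text{ideal} \to \text{XNF formula} \to \text{2-XNF formula} \to \text{2-XNF representation}$ and the verification that $\equiv_n$ is preserved at each stage, which is immediate from Remark~\ref{rmk:equivTrans} and the fact that $\calZ$ and $\calS$ match up by construction of $I_F$.
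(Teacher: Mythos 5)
Your proof is correct and follows exactly the route the paper intends: the paper gives no explicit argument, simply stating that Proposition~\ref{prop:2xnf} ``immediately implies'' the corollary via the same round trip (logical representation in XNF, \XnfToTXnf, translate back through Remark~\ref{rem:xnfAlg}). Your handling of the $\LL_n$ versus linear-forms-in-$\BB_{n,m}$ bookkeeping is also the right reading of Definition~\ref{def:2xnfRepr}, as the paper's own examples (e.g.\ $x_2(x_1+y_1)$) confirm.
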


The next proposition shows a direct way to compute 2-XNF representations of
certain polynomials. It is an algebraic formulation of Lemma~\ref{lem:2xnf1}.

\begin{proposition}\label{prop:polySubLogic}
Let $g = \ell_1\ell_2+\ell_3 \in \BB_{n}$, where $\ell_1,\ell_2,\ell_3 \in \LL_n$,
and let
$$  
  S \;=\; \{ \ell_3(\ell_2+1),\; \ell_2(\ell_1+\ell_3)\}.
$$
Then we have $\langle S\rangle = \langle g\rangle$. In particular, the set 
$S$ is a 2-XNF representation of~$g$.
\end{proposition}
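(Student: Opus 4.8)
The plan is to prove the ideal equality $\langle S \rangle = \langle g \rangle$ directly in $\BB_n$, since the ``in particular'' clause then follows: if $\langle S \rangle = \langle g \rangle$, then $\calZ(S) = \calZ(\langle g \rangle) = \calZ(g)$ inside $\FF_2^n$ (no new variables are introduced here, so $m=0$ and $\equiv_n$ is literal equality of zero sets), hence $S$ is a 2-XNF representation of $g$ by Definition~\ref{def:2xnfRepr}. For the ideal equality I would show the two inclusions separately by exhibiting explicit $\BB_n$-linear combinations.

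First I would establish $\langle S \rangle \subseteq \langle g \rangle$, equivalently that both generators $\ell_3(\ell_2+1)$ and $\ell_2(\ell_1+\ell_3)$ lie in $\langle g \rangle = \langle \ell_1\ell_2 + \ell_3\rangle$. The natural identities to write down are
$$
\ell_2 \cdot g \;=\; \ell_2\ell_1\ell_2 + \ell_2\ell_3 \;=\; \ell_1\ell_2 + \ell_2\ell_3 \;=\; \ell_3 + \ell_2\ell_3 + (\ell_1\ell_2+\ell_3) \;=\; \ell_3(\ell_2+1) + g,
$$
using the crucial Boolean relation $\ell_2^2 = \ell_2$ in $\BB_n$ (valid because $\ell_2 \in \LL_n$ and squaring is additive over $\FF_2$, so $\ell_2^2$ reduces to $\ell_2$ modulo the field ideal). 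Hence $\ell_3(\ell_2+1) = \ell_2 g + g = (\ell_2+1)g \in \langle g\rangle$. Similarly $(\ell_1+1)\cdot g = \ell_1\ell_2 + \ell_1\ell_3 + \ell_1\ell_2 + \ell_3$; I would instead compute $\ell_2(\ell_1+\ell_3)$ directly: $\ell_2(\ell_1+\ell_3) = \ell_1\ell_2 + \ell_2\ell_3 = (\ell_1\ell_2+\ell_3) + (\ell_2\ell_3 + \ell_3) = g + \ell_3(\ell_2+1)$, so $\ell_2(\ell_1+\ell_3) = g + \ell_3(\ell_2+1)$, which already shows it lies in $\langle g\rangle$ once we know $\ell_3(\ell_2+1)$ does — or, more cleanly, $\ell_2(\ell_1+\ell_3) = g + (\ell_2+1)g = \ell_2 g \in \langle g\rangle$. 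Both generators of $S$ are therefore in $\langle g\rangle$.

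For the reverse inclusion $\langle g \rangle \subseteq \langle S \rangle$, it suffices to write $g$ itself as a $\BB_n$-combination of the two elements of $S$. From the identity just derived, $g = \ell_2(\ell_1+\ell_3) + \ell_3(\ell_2+1)$ — indeed adding the two elements of $S$ gives $\ell_2\ell_1 + \ell_2\ell_3 + \ell_3\ell_2 + \ell_3 = \ell_1\ell_2 + \ell_3 = g$, since $\ell_2\ell_3 + \ell_3\ell_2 = 0$ in characteristic $2$. So $g$ is literally the sum of the two generators of $S$, giving $g \in \langle S\rangle$ and hence $\langle g\rangle \subseteq \langle S\rangle$. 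Combining the two inclusions yields $\langle S\rangle = \langle g\rangle$.

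I do not anticipate a genuine obstacle here; the only point requiring a word of care is the repeated use of $\ell_i^2 = \ell_i$ in $\BB_n$ (and the characteristic-$2$ cancellation $\ell_2\ell_3+\ell_3\ell_2=0$), which is exactly why the computation collapses so neatly — this is really the algebraic shadow of Lemma~\ref{lem:2xnf1}, with $Y$ eliminated. I would state these Boolean identities once at the start of the proof and then simply present the three displayed equalities above, noting that the cleanest route is: (i) $\ell_3(\ell_2+1) = (\ell_2+1)g$, (ii) $\ell_2(\ell_1+\ell_3) = \ell_2 g$, and (iii) $g = \ell_3(\ell_2+1) + \ell_2(\ell_1+\ell_3)$, from which both inclusions are immediate.
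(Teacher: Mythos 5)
Your proof is correct and takes essentially the same route as the paper: the three identities $\ell_3(\ell_2+1)=(\ell_2+1)g$, $\ell_2(\ell_1+\ell_3)=\ell_2 g$, and $g=\ell_3(\ell_2+1)+\ell_2(\ell_1+\ell_3)$ are exactly the ones the paper uses for the two inclusions. The extra remarks on idempotency of linear Boolean polynomials and on the ``in particular'' clause are fine but not needed beyond what the paper leaves implicit.
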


\begin{proof}
From $g = \ell_3(\ell_2+1)+\ell_2(\ell_1+\ell_3)$, we obtain $g \in \langle S \rangle$ and hence $\langle g \rangle \subseteq \langle S \rangle$.
Moreover, we have $\ell_3(\ell_2+1) = (\ell_2+1)g \in \langle g\rangle $ and
$\ell_2(\ell_1+\ell_3) = \ell_2 g \in \langle g\rangle $, which implies $S \subseteq \langle g \rangle$ and hence $\langle S \rangle \subseteq \langle g \rangle$.
\end{proof}

\begin{remark}
    To see the connection with Lemma~\ref{lem:2xnf1}, let $L_1$, and $L_2$ be linerals and $Y$ be an additional variable.
    Let $\ell_1,\ell_2,\ell_3 \in \LL_n$ such that $\langle \ell_i \rangle$ is the algebraic representation of $L_i$ for $i \in \{1,2\}$ and $\langle \ell_3 \rangle$ is the algebraic representation of $Y$. Then $\langle \ell_1\ell_2+\ell_3 \rangle$ is the algebraic representation of $Y \leftrightarrow (L_1 \lor L_2)$, $\langle \ell_3(\ell_2+1) \rangle$ is the algebraic representation of $(Y \lor \lnot L_2)$, and $\ell_2(\ell_1+\ell_3)$ is the algebraic representation of $(\lnot(Y \oplus L_1) \lor L_2)$.
\end{remark}

%This proposition
Proposition \ref{prop:polySubLogic} immediately yields the following Algorithm~\ref{alg:polyTo2Xnf} for computing a
2-XNF representation of a given Boolean polynomial.
\smallskip

\begin{algorithm}[ht]
  \DontPrintSemicolon
  \SetAlgoLongEnd
  \SetKwInOut{Input}{Input}
  \SetKwInOut{Output}{Output}

  \Input{A Boolean polynomial $f \in \BB_n$.}
  \Output{A 2-XNF representation of~$f$.}
  \BlankLine
  Set $i = 0$ and $M = \emptyset$. \;
  \For{$t \in \Supp(f)$}{
    \While{$\deg(t) > 1$}{
        Increase~$i$ by~$1$ and let~$y_i$ be a new indeterminate. \;
        Write~$f = t+f'$ and $t = \ell_1\ell_2s$
        where $s$ is a term, $\ell_1,\ell_2$ are distinct indeterminates, and $f' \in \BB_{n,i-1}$. \;
        Replace $t$ by $y_is$ and $f$ by $y_is+f'$. \;
        Adjoin $\{ y_i(\ell_2+1),\; \ell_2(\ell_1+y_i)\}$ to~$M$.
    }
  }
  \Return $M \cup \{f\}$.
  \caption{\AnfToTXnf \,--\, Boolean Polynomials to 2-XNF}
  \label{alg:polyTo2Xnf}
\end{algorithm}

\begin{proposition}\label{prop:alg:polyTo2Xnf}
Let~$f \in \BB_n$. Then~$\AnfToTXnf$ is an algorithm which returns a 2-XNF
representation $S = \AnfToTXnf(f)$ of~$f$.
\end{proposition}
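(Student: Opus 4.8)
The plan is to follow the standard template for correctness proofs of algorithms of this shape: first establish termination, then show the output has the desired structural form (here: a 2-XNF representation, i.e.\ a set of the form prescribed in Definition~\ref{def:2xnfRepr}), and finally verify the equivalence $\calZ(\langle f\rangle) \equiv_n \calZ(S)$ by tracking an invariant through the loop iterations.

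For termination, I would argue that each execution of Line~6 strictly decreases $\deg(t)$ by one: writing $t = \ell_1\ell_2 s$ with $\ell_1,\ell_2$ distinct indeterminates and replacing $t$ by $y_i s$ reduces the degree by exactly $1$ (since $\deg(\ell_1\ell_2 s) = \deg(s)+2$ while $\deg(y_i s) = \deg(s)+1$). Hence the inner \textbf{while} loop runs finitely often for each $t \in \Supp(f)$, and since $\Supp(f)$ is finite the whole algorithm halts. For the structural claim, each pair adjoined to $M$ in Line~7 consists of a product of two elements of $\LL_{n,i}$ of the form $y_i(\ell_2+1)$ and $\ell_2(\ell_1+y_i)$ — products of linerals in the extended ring — and the final element $f$ returned is, after all iterations, a linear polynomial in its support variables (every term has been reduced to degree $\le 1$); so $M \cup \{f\}$ has the form $\{f_1g_1,\dots,f_kg_k,\ell_1,\dots,\ell_s\}$ required by Definition~\ref{def:2xnfRepr}.

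The core of the argument is the equivalence. I would maintain the invariant that after processing, $\calZ(\langle f_{\mathrm{orig}}\rangle) \equiv_n \calZ(\langle M \cup \{f\}\rangle)$, where $f$ denotes the current (partially reduced) polynomial and $M$ the current accumulated set, with the index $n$ replaced by $n+i$ at stage $i$. The single-step justification is Proposition~\ref{prop:polySubLogic}: in Line~6 we introduce $y_i$ and replace $t = \ell_1\ell_2 s$ by $y_i s$; applying the proposition with $\ell_3 := y_i$ gives $\langle y_i(\ell_2+1),\,\ell_2(\ell_1+y_i)\rangle = \langle \ell_1\ell_2 + y_i\rangle$, so the zero set of the new system (over the extended variable set) projects bijectively onto the zero set of the old one — concretely, a zero of the old system extends uniquely by setting $y_i := (\ell_1\ell_2)(a)$, and any zero of the new system must satisfy $y_i = \ell_1\ell_2$ on the constraint variety, whence $y_i s = \ell_1 \ell_2 s = t$ there, so the replacement of $t$ by $y_i s$ in $f$ does not change the zero set. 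Then $\equiv_{n+i-1}$ transitivity (Remark~\ref{rmk:equivTrans}.b) chains these single steps, exactly as in the proof of Proposition~\ref{prop:2xnf}.

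The step I expect to require the most care is the verification that replacing $t$ by $y_i s$ \emph{inside $f$} is legitimate — i.e.\ that $\calZ(\langle M' \cup \{y_i s + f'\}\rangle) \equiv_{n+i-1} \calZ(\langle M \cup \{t + f'\}\rangle)$, not merely that the newly adjoined pair encodes $\ell_1\ell_2 + y_i$. This needs the observation that on the variety cut out by $M'$ (which contains the two new generators), the identity $y_i = \ell_1\ell_2$ holds, so substituting $y_i s$ for $\ell_1\ell_2 s$ is valid pointwise on that variety; combined with the bijective-projection bookkeeping this yields the required $\equiv_{n+i-1}$. The remaining subtlety is purely notational: keeping track of which ring ($\BB_{n,i-1}$ versus $\BB_{n,i}$) each object lives in as $i$ increments, and noting that terms $t$ of degree $1$ already present in $\Supp(f)$ — as well as the constant term — are simply carried into the output untouched by the \textbf{while} guard, which is consistent with the claimed form of the representation.
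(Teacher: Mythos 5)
Your proposal is correct and follows essentially the same route as the paper's proof: termination via the degree drop of $t$, the structural check that $M\cup\{f\}$ has the required 2-XNF form, and the loop invariant $\calZ(\langle f_{\mathrm{orig}}\rangle)\equiv_n\calZ(\langle M\cup\{f\}\rangle)$ established by adjoining $y_i+\ell_1\ell_2$, noting the bijective projection, and invoking Proposition~\ref{prop:polySubLogic} with $\ell_3=y_i$. The subtlety you flag (that replacing $t$ by $y_i s$ inside $f$ is legitimate) is handled in the paper by the congruence $f_i=f_{i-1}+s(\ell_1\ell_2+y_i)\equiv f_{i-1}\bmod J'$, which is just the ideal-theoretic phrasing of your pointwise argument on the constraint variety.
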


\begin{proof}
First we see that in each iteration of the inner loop (Lines~$3$-$7$), the
degree of~$t$ decreases by one, so it eventually reaches~1. Moreover,
the polynomial $f$ is updated in Line~$6$ in such a way that the term~$t$ 
in the support of $f$ is replaced by a term of smaller degree. Thus the outer loop (Lines~$2$-$7$)
terminates eventually and the procedure stops in Line~$8$.
In particular, at this point $f$ is linear and all elements of $M \cup \{f\}$ are
linear or products of two linear polynomials. Hence the output is in 2-XNF.
    
For the correctness, assume that we are in the $i$-th iteration of the inner
loop (Lines~$3$-$7$). Denote the values of $f$ and $M$ after the $i$-th
iteration by $f_i$ and $M_i$, respectively. Here we let $M_0=\emptyset$ and
$f_0$ denote the initial input value of~$f$. Consider the ideals
$J = \langle M_{i-1} \cup \{f_{i-1}\} \rangle \subseteq \BB_{n,i-1}$ and 
$J' = \langle M_{i-1} \cup \{f_{i-1}\} \cup \{y_i+\ell_1\ell_2\} \rangle \subseteq \BB_{n,i}$.
For $c = (c_1,\dots,c_{n+i}) \in \FF_2^{n+i}$, we see that $c \in \calZ(J')$
if and only if $(c_1,\dots,c_{n+i-1}) \in \calZ(J)$ 
and $c_{n+i} = (\ell_i\ell_j)(c_1,\dots,c_{n+i-1})$. Hence $\calZ(J) \equiv_{n+i-1} \calZ(J')$.
    
Now observe that $f_i = f_{i-1}+s(\ell_1\ell_2+y_i) \equiv f_{i-1} \mod J'$,
and hence $J' = \langle M_{i-1} \cup \{f_i\} \cup \{y_i+\ell_1\ell_2\} \rangle $.
Thus Proposition~\ref{prop:polySubLogic} shows~$J' = \langle M_i \cup \{f_i\}\rangle $.
Therefore we have~$\calZ(f_0) \equiv_n \calZ(M\cup\{f\})$ after every
iteration of the inner loop, i.e., the output in Line~$8$ is indeed a 2-XNF
representation of the input~$f$.
\end{proof}

\begin{example}
Consider the polynomial $f = x_1x_2x_3 \in \BB_3$. The ideal $\langle f\rangle$ 
is the algebraic representation of the clause $\neg X_1\lor \neg X_2\lor \neg X_3$.
We introduce a new indeterminate $y_1$ and construct the ideal
$$
I \;=\; \langle \, f, y_1+x_1x_2 \, \rangle \;=\; \langle \, y_1x_3,\, y_1(x_2+1),\, 
x_2(x_1+y_1) \,\rangle \; \subseteq \BB_{3,1}.
$$
Then we have $\calZ(f) \equiv_3 \calZ(I)$, which shows that the set 
$S = \{y_1x_3,\, y_1(x_2+1),\, x_2(x_1+y_1)\} \subseteq \BB_{3,2}$ is a 2-XNF representation of $f$.
Notice that $S$ corresponds to the 2-XNF formula
$$
(\neg Y_1 \lor \neg X_3) \,\land\, (\neg Y_1 \lor X_2)\,\land\,(\neg X_2 \lor \neg(X_1\oplus Y_1)\, )
$$
in the variables $X_1, X_2, X_3, Y_1$.
\end{example}

Notice that \AnfToTXnf employs Proposition~\ref{prop:polySubLogic} only for
replacing products of two indeterminates. For quadratic polynomials,
this uses one additional variable for every non-linear term. With the following 
optimised Algorithm~\ref{alg:quadPolyTo2Xnf}, one may replace more than one term at a time.
\smallskip

\begin{algorithm}[ht]
  \DontPrintSemicolon
  \SetAlgoLongEnd
  \SetKwInOut{Input}{Input}
  \SetKwInOut{Output}{Output}

  \Input{A Boolean polynomial $f \in \BB_n$ with $\deg(f)\le 2$.}
  \Output{A 2-XNF representation of~$f$.}
  \BlankLine
  Let $i = 0$ and $M = \emptyset$. \;
  \While{$\deg(f) = 2$}{
    Increase~$i$ by~$1$ and let~$y_i$ be a new indeterminate. \;
    Write $f = \ell_1\ell_2+f'$ for distinct $\ell_1,\ell_2 \in \LL_n$ and for $f' \in \BB_{n,i-1}$ 
    such that $\Supp(f')$ contains fewer non-linear terms than $\Supp(f)$. \;
    Set $f=y_i+f'$. \;
    Adjoin~$\{y_i(\ell_2+1),~\ell_2(\ell_1+y_i)\}$ to~$M$.
  }
  \Return $M \cup \{f\}$.
  \caption{\QAnfToTXnf \,--\, Quadratic Boolean Polynomials to 2-XNF
  }
  \label{alg:quadPolyTo2Xnf}
\end{algorithm}

\begin{proposition}\label{prop:alg:quadPolyTo2Xnf}
Let $f \in \BB_n$ with $\deg(f) \leq 2$. Then~$\QAnfToTXnf$ is an
algorithm which returns a 2-XNF representation $S = \QAnfToTXnf(f)$ of~$f$.
\end{proposition}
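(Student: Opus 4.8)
The plan is to mirror the structure of the proof of Proposition~\ref{prop:alg:polyTo2Xnf}, since \QAnfToTXnf\ is a straightforward optimisation of \AnfToTXnf. First I would address termination: each pass through the loop (Lines~2--6) picks a quadratic term $\ell_1\ell_2$ and replaces it by the new linear indeterminate $y_i$, so the number of non-linear terms in the support of~$f$ strictly decreases with each iteration (this is guaranteed by the requirement on~$f'$ in Line~4). Hence after finitely many steps $\deg(f)\le 1$, the loop exits, and upon return every element of $M\cup\{f\}$ is either a linear polynomial or a product $y_i(\ell_2+1)$ or $\ell_2(\ell_1+y_i)$ of two linear polynomials; thus the output is a legitimate 2-XNF set.

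For correctness I would argue by induction on the loop iterations, exactly as before. Let $f_i$ and $M_i$ denote the values of $f$ and $M$ after the $i$-th iteration, with $f_0$ the input and $M_0=\emptyset$. In the $i$-th iteration write $f_{i-1}=\ell_1\ell_2+f'$ as in Line~4. Introduce the auxiliary ideal $J'=\langle M_{i-1}\cup\{f_{i-1}\}\cup\{y_i+\ell_1\ell_2\}\rangle\subseteq\BB_{n,i}$ and compare it with $J=\langle M_{i-1}\cup\{f_{i-1}\}\rangle\subseteq\BB_{n,i-1}$. As in the earlier proof, a point $c=(c_1,\dots,c_{n+i})$ lies in $\calZ(J')$ if and only if $(c_1,\dots,c_{n+i-1})\in\calZ(J)$ and $c_{n+i}=(\ell_1\ell_2)(c_1,\dots,c_{n+i-1})$; since $y_i$ is a fresh indeterminate this gives $\calZ(J)\equiv_{n+i-1}\calZ(J')$. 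Next, modulo $J'$ we have $y_i=\ell_1\ell_2$, so $f_i=y_i+f'\equiv \ell_1\ell_2+f'=f_{i-1}\pmod{J'}$, whence $J'=\langle M_{i-1}\cup\{f_i\}\cup\{y_i+\ell_1\ell_2\}\rangle$. Now Proposition~\ref{prop:polySubLogic} applied to $g=\ell_1\ell_2+y_i$ yields $\langle y_i(\ell_2+1),\,\ell_2(\ell_1+y_i)\rangle=\langle \ell_1\ell_2+y_i\rangle$, so $J'=\langle M_i\cup\{f_i\}\rangle$. Combining, $\calZ(M_{i-1}\cup\{f_{i-1}\})\equiv_{n+i-1}\calZ(M_i\cup\{f_i\})$, and by transitivity of $\equiv$ (Remark~\ref{rmk:equivTrans}.b) we get $\calZ(f_0)\equiv_n\calZ(M\cup\{f\})$ after every iteration, hence for the returned set.

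The only genuinely new point compared with the proof of Proposition~\ref{prop:alg:polyTo2Xnf}, and the step I expect to need a sentence of care, is justifying that Line~4 can always be carried out: that whenever $\deg(f)=2$ there really is a decomposition $f=\ell_1\ell_2+f'$ with $\ell_1,\ell_2\in\LL_n$ distinct and $f'$ having strictly fewer non-linear terms than $f$. This is easy — pick any non-linear term $x_jx_k$ in $\Supp(f)$, take $\ell_1=x_j$, $\ell_2=x_k$, and set $f'=f-x_jx_k$, which removes exactly that term while possibly adding only the linear terms $x_j$, $x_k$ (from the product expansion these cancel, since $\ell_1\ell_2=x_jx_k$ exactly) — but it is worth stating explicitly, since the whole point of the optimisation is that a cleverer choice of $\ell_1,\ell_2$ may kill several non-linear terms simultaneously, and the proof only needs the weak guarantee that at least one is removed. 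Everything else is routine bookkeeping copied from the previous proof.
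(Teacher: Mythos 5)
Your proposal is correct and follows essentially the same route as the paper: termination via the strict decrease of non-linear terms, and correctness by repeating the ideal-theoretic induction of Proposition~\ref{prop:alg:polyTo2Xnf} (via $J'=\langle M_{i-1}\cup\{f_{i-1}\}\cup\{y_i+\ell_1\ell_2\}\rangle$ and Proposition~\ref{prop:polySubLogic}), which the paper simply cites rather than re-deriving. Your explicit remark that Line~4 is always feasible (take $\ell_1\ell_2$ to be a single quadratic term) is a small addition the paper leaves implicit, and it is fine as stated.
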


\begin{proof}
After each iteration of the loop (Lines~$2$-$6$), the support of~$f$
contains fewer non-linear terms. Therefore~$f$ eventually becomes linear
and the loop terminates.

For proving correctness, consider the iterations of the loop.
As in Proposition~\ref{prop:alg:polyTo2Xnf}, we see
that $\langle M_{i-1} \cup \{f_{i-1}\} \cup \{y_i+\ell_1\ell_2\}\rangle = 
\langle M_i \cup \{f_i\}\rangle $, where $M_i$ and $f_i$ denote the values of~$f$ 
and $M$ after the $i$-th iteration, and~$f_0$ is the initial value of~$f$.
In particular, this shows that we have $\calZ(f_0)\equiv_n\calZ(M\cup\{f\})$ after 
every iteration. Thus the output is a 2-XNF representation of the input~$f$.
\end{proof}

To implement Line~$4$ of \QAnfToTXnf efficiently, we may use different approaches.
The following remark collects some of them.

\begin{remark}\label{rmk:findGoodSubstitution}
Let~$f\in\BB_n$ be of degree $\le 2$. In order to find $\ell_1,\ell_2 \in \LL_n$ 
such that~$\Supp(f-\ell_1\ell_2)$ contains fewer quadratic terms than~$\Supp(f)$, we may
use one of the following methods.
\begin{enumerate}
\item[(a)]
Write $f = x_i\ell_i+g_i$ with $i \in \{1,\dots,n\}$ and $\ell_i \in \LL_n \setminus \FF_2$
such that no term in the support of~$g_i$ is divisible by~$x_i$.
Then the support of $f-x_i\ell_i = g_i$ is a proper subset of $\Supp(f)$.
In particular, it contains fewer quadratic terms.
Repeating this step requires at most~$n-1$ substitutions until all non-linear
terms in~$f$ have been replaced.
Hence any quadratic polynomial $f\in\BB_n$ has a 2-XNF representation that
uses fewer than $n-1$ additional indeterminates, even though the support of~$f$ 
may contain up to~$\binom{n}{2}$ quadratic terms.

\item[(b)]
Let $y_1,\dots,y_n, z_1,\dots,z_n$ be new indeterminates, and let
$$
G \;=\; (y_1 x_1 + \cdots + y_n x_n)\cdot (z_1 x_1 + \cdots +z_n x_n) \in \BB_n[y_1,\dots,y_n, z_1,\dots,z_n]
$$
be a product of two \textit{generic} linear Boolean polynomials. By multiplying out, we obtain a
representation 
$$
G \;=\; \tsum_{1 \le i < j \le n} G_{ij} x_i x_j + \tsum_{k=1}^n H_k x_k .
$$  
with $G_{ij},H_k\in \FF_2[y_1,\dots,y_n, z_1,\dots,z_n]$.  

Write $f = \sum_{1 \le i < j\le n} f_{ij} x_i x_j + \sum_{k=1}^n f_k x_k + f_0$ 
with $f_{ij},f_k, f_0 \in \FF_2$.     
If we find a tuple $ c = (a_1,\dots,a_n, b_1, \dots, b_n) \in \FF_2^{2n}$
such that as many of the equations $f_{ij} = G_{ij}(c)$ as possible are satisfied, 
then the linear Boolean polynomials $\ell_1 = a_1 x_1 + \cdots + a_n x_n$ and
$\ell_2 = b_1 x_1 + \cdots + b_n x_n$ satisfy the property that $f - \ell_1\ell_2$
contains as few quadratic terms in its support as possible.
Such a tuple $c$ can be found using an OMT solver, e.g., using \OptiMathSAT
(see \cite{ST}), or by rephrasing the optimization
problem as a MaxSAT problem and using an adequate solver, e.g., using \MaxHS
(see~\cite{Dav}).
\end{enumerate}

\end{remark}

The strategy of part~(b) works well on small inputs, say polynomials having fewer than 20
indeterminates. For cases involving larger numbers of indeterminates, it is better to
combine part~(a) with the next observation.

\begin{lemma}\label{lem:combiningSubs}
Let $f\in\BB_n$ and $\ell_1,\ell_2,\ell_1',\ell_2' \in \LL_n$ with $\Supp(\ell_1\ell_2) \subseteq \Supp(f)$
and $\Supp(\ell_1'\ell_2') \subseteq \Supp(f)$.
Then we have $\Supp(m_1 m_2) \subseteq \Supp(f)$ for 
$$
m_1 = \textstyle\sum(\Supp(\ell_1) \cup \Supp(\ell_1')) \quad\text{and}\quad 
m_2 = \textstyle\sum(\Supp(\ell_2) \cap \Supp(\ell_2')).
$$
\end{lemma}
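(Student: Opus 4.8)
The plan is to prove the stronger inclusion
$$
\Supp(m_1m_2)\;\subseteq\;\Supp(\ell_1\ell_2)\;\cup\;\Supp(\ell_1'\ell_2'),
$$
which together with the hypotheses $\Supp(\ell_1\ell_2)\subseteq\Supp(f)$ and $\Supp(\ell_1'\ell_2')\subseteq\Supp(f)$ immediately gives the claim. The one substantive ingredient is a product formula for linear Boolean polynomials: if $a,b\in\LL_n$ satisfy $\Supp(a)\cap\Supp(b)=\emptyset$, then $\Supp(ab)=\Supp(a)\cdot\Supp(b)$, where for sets of terms $X,Y$ we write $X\cdot Y=\{st:s\in X,\,t\in Y\}\subseteq\BB_n$. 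This disjointness holds automatically in the setting where the lemma is applied --- one combines the substitutions $x_i\ell_i$ of Remark~\ref{rmk:findGoodSubstitution}(a), in which $\ell_i$ does not involve $x_i$ --- so I would record it as part of the hypothesis or verify it at the point of application.

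To prove the product formula, expand $ab=\sum st$ with $s$ ranging over $\Supp(a)$ and $t$ over $\Supp(b)$, each $st$ taken in $\BB_n$; it suffices to show that the map $(s,t)\mapsto st$ is injective on $\Supp(a)\times\Supp(b)$, for then no term cancels and $\Supp(ab)$ is exactly its image. Injectivity is where the disjointness enters: any such $s$ and $t$ are distinct, so $st$ is a square-free term of degree $1$ or $2$, and a brief case distinction on whether $st$ equals $1$, some $x_i$, or some $x_ix_j$ --- each of which factors into a product of two elements of $\{1,x_1,\dots,x_n\}$ in essentially one way (for instance, $x_ix_j$ arises only from the ordered pairs $(x_i,x_j)$ and $(x_j,x_i)$) --- shows that any collision would place a common element into $\Supp(a)\cap\Supp(b)$. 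I expect this cancellation bookkeeping to be the main obstacle; the rest is elementary set theory.

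Granting the formula, $\Supp(\ell_1\ell_2)=\Supp(\ell_1)\cdot\Supp(\ell_2)$ and $\Supp(\ell_1'\ell_2')=\Supp(\ell_1')\cdot\Supp(\ell_2')$. Since $m_1$ and $m_2$ are sums of pairwise distinct terms, we have $\Supp(m_1)=\Supp(\ell_1)\cup\Supp(\ell_1')$ and $\Supp(m_2)=\Supp(\ell_2)\cap\Supp(\ell_2')$, and these two sets are disjoint by the disjointness assumptions on the two pairs; hence the formula applies once more and gives $\Supp(m_1m_2)=\bigl(\Supp(\ell_1)\cup\Supp(\ell_1')\bigr)\cdot\bigl(\Supp(\ell_2)\cap\Supp(\ell_2')\bigr)$. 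Distributing the union over the product and using $\Supp(\ell_2)\cap\Supp(\ell_2')\subseteq\Supp(\ell_2)$ for the first part and $\Supp(\ell_2)\cap\Supp(\ell_2')\subseteq\Supp(\ell_2')$ for the second part then yields
$$
\Supp(m_1m_2)\;\subseteq\;\Supp(\ell_1)\cdot\Supp(\ell_2)\;\cup\;\Supp(\ell_1')\cdot\Supp(\ell_2')\;=\;\Supp(\ell_1\ell_2)\cup\Supp(\ell_1'\ell_2')\;\subseteq\;\Supp(f),
$$
which is the desired inclusion.
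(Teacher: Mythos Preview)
Your approach is essentially the paper's: both show $\Supp(m_1m_2)\subseteq\Supp(\ell_1\ell_2)\cup\Supp(\ell_1'\ell_2')$ by tracking the two factors of each surviving term. The paper compresses this into two lines, picking $t=x_{i_1}x_{i_2}\in\Supp(m_1m_2)$ with $x_{i_1}\in\Supp(m_1)$, $x_{i_2}\in\Supp(m_2)$, and then asserting $x_{i_1}x_{i_2}\in\Supp(\ell_1\ell_2)$ or $x_{i_1}x_{i_2}\in\Supp(\ell_1'\ell_2')$.

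You are right that this last step tacitly uses a no-cancellation condition. Without disjointness of $\Supp(\ell_1)$ and $\Supp(\ell_2)$ (and likewise for the primed pair), the implication ``$x_{i_1}\in\Supp(\ell_1)$ and $x_{i_2}\in\Supp(\ell_2)$ imply $x_{i_1}x_{i_2}\in\Supp(\ell_1\ell_2)$'' can fail---for instance $\ell_1=\ell_2=x_1+x_2$ gives $\ell_1\ell_2=x_1+x_2$, so $x_1x_2\notin\Supp(\ell_1\ell_2)$. The paper does not record this hypothesis, so your product-formula lemma and the explicit disjointness assumption are a genuine clarification rather than a different method; in the intended application via Remark~\ref{rmk:findGoodSubstitution}(a) the pairs are of the form $(x_i,\ell_i)$ with $x_i\notin\Supp(\ell_i)$, so disjointness does hold, exactly as you note.
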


\begin{proof}
Let $t = x_{i_1}x_{i_2} \in \Supp(m_1m_2)$ where $x_{i_1} \in \Supp(m_1)$ and $x_{i_2} \in \Supp(m_2)$. 
Then $x_{i_1} \in \Supp(\ell_1)$ or $x_{i_1} \in \Supp(\ell_1')$, and $x_{i_2} \in \Supp(\ell_2) \cap \Supp(\ell_2')$.
This shows $x_{i_1}x_{i_2} \in \Supp(\ell_1\ell_2)$ or $x_{i_1}x_{i_2} \in \Supp(\ell_1'\ell_2')$. 
Both imply $t \in \Supp(f)$.
\end{proof}

Using the method of Remark~\ref{rmk:findGoodSubstitution}.a, we can now find many distinct pairs 
$(\ell_1,\ell_2)\in\LL_n^2$ with $\Supp(\ell_1\ell_2)\subseteq\Supp(f)$.
Applying the Lemma randomly to two such pairs of linear polynomials, we find more pairs 
$(m_1,m_2)\in\LL_n^2$ with $\Supp(m_1m_2)\subseteq\Supp(f)$.
Repeating this procedure for some time can generate many non-trivial such pairs, 
and we can simply choose the one which eliminates the most non-linear terms.
This has proven to produce very good results, even for polynomials with a high number of indeterminates.

\begin{example}
Let us apply Algorithm~\QAnfToTXnf to the Boolean polynomial $f = x_1x_3+x_2x_3+x_1x_4+x_2x_4+x_1 \in \BB_4$.
In Line~4 we try to cancel out as many non-linear terms as possible, following the above approach. 
Using Remark~\ref{rmk:findGoodSubstitution}.a, we see that 
$\Supp(\, x_1\cdot (x_3+x_4+1) \,) \,\subseteq\, \Supp(f)$ and $\Supp(\, x_2\cdot(x_3+x_4) \,) \,\subseteq\, \Supp(f)$. 
By applying Lemma~\ref{lem:combiningSubs} with $m_1 = x_1+x_2$ and $m_2 = x_3+x_4$, 
we get $\Supp(m_1m_2) \subseteq \Supp(f)$. Let $y_1$ be a new indeterminate and write $f = m_1m_2 + x_1$.
Now we replace $f$ by $y_1+x_1$ and set
$$
M \;=\; \{\, y_1(x_3+x_4+1),\, (x_3+x_4)(x_1+x_2+y_1) \,\}.
$$
Notice that the loop now ends, as $f$ is linear, and the 2-XNF representation $\{f\} \cup M$ of $f$ is returned.
This corresponds to the 2-XNF formula
$$\left(\neg Y_1 \lor (X_3\oplus X_4)\right) \,\land\, (\neg (X_3\oplus X_4) \lor \neg (X_1\oplus X_2\oplus Y_1))\,\land\,\neg (Y_1\oplus X_1))$$
in the variables $X_1, X_2, X_3, X_4, Y_1$.
\end{example}

After discussing the effective computation of 2-XNF representations of individual
polynomials, we now turn our attention to Boolean polynomial ideals given by several generators.
In this case we can avail ourselves of the following approaches.

\begin{remark}\label{rmk:convPolySys}
Let $f_1,\dots,f_s \in \BB_n \setminus \{0\}$, and let $I = \langle f_1,\dots,f_s\rangle $.
The following methods can be applied to find a 2-XNF representation of~$I$.
\begin{enumerate}
\item[(a)]
The most basic approach is to apply \AnfToTXnf (or \QAnfToTXnf) to~$f_i$
for~$i\in\{1,\dots,s\}$ and to combine the individual 2-XNF representations to
get one for~$I$. Unfortunately, this tends to introduce more
additional variables than necessary, since the same terms in different
polynomials will be replaced with distinct additional indeterminates.

\item[(b)]
If $f_1,\dots,f_s$ are quadratic, the problem in~(a) can be counteracted as follows. 
During the computation of the 2-XNF representations of the $f_1,\dots,f_s$, 
we remember how the additional indeterminates~$y_1,\dots,y_m$ were used to substitute
products $\ell_{11}\ell_{12},\dots,\ell_{m1}\ell_{m2}$ in the execution of 
Lines~$4$-$5$ of \QAnfToTXnf.
After those individual conversions, we compute an $\FF_2$-basis
$\{h_1,\dots,h_t\} \subseteq \LL_m$ of the set of relations 
$$
\{ g \in \LL_m \mid g(\ell_{11}\ell_{12}, \dots, \ell_{m1}\ell_{m2}) = 0 \}.
$$
Then we return the union of all the individual 2-XNF representations
and~$\{h_1,\dots,h_t\}$. Each of these linear Boolean polynomials eliminates one
variable in the process of computing $\calZ(I)$.
\end{enumerate}
\end{remark}

In particular, instances coming from cryptographic attacks can be converted
efficiently using those approaches. In many ciphers the only non-linearity
appears in the so-called \textit{S-Boxes}. Usually, these involve only a small number of
indeterminates, i.e., they can be represented by relatively few non-linear polynomials in a small
number of indeterminates. To illustrate this approach, let us examine the encryption map
of the \Ascon\ cryptosystem (see~\cite{DEMS}) which has been selected for the standardization of lightweight ciphers by NIST.

\begin{example}\label{exa:ascon_sbox}
As specified in~\cite{DEMS}, the \Ascon\ cryptosystem is a 128-bit lightweight cipher.
%using 5-bit S-boxes.
% 
\begin{enumerate}
\item[(a)] Let $s\colon \FF_2^5 \to \FF_2^5$ be the 5-bit S-box used in the \Ascon\ cipher. 
Consider the Boolean polynomial ring $\BB_{5,5} = \FF_2[x_1, \dots,x_5, y_1, \dots,y_5]$ and let $I \subseteq \BB_{5,5}$ be
the vanishing ideal of the set of points $\{(a,s(a)) \mid a\in \FF_2^5\} \,\subseteq\, \FF_2^{10}$.
Using~\cite{BCP}, we know that~$I$ is generated by five quadratic polynomials in $\BB_{5,5}$. 
Applying \QAnfToTXnf together with the method from Remark~\ref{rmk:findGoodSubstitution}.b and
Remark~\ref{rmk:convPolySys}.b, we obtain a 2-XNF representation
of~$I$ consisting of~$10$ products of two linear polynomials and not a single
additional indeterminate.

\item[(b)] Altogether, these methods construct a 2-XNF representation
of the entire \Asconnb\ cipher (processing $8$ bytes of plaintext) involving as little as $6080$ variables 
and $17\,664$ clauses. 

For comparison, converting the same polynomials to CNF using \PolyBoRi
(see~\cite{BD}) requires $12\,224$ variables and $137\,739$ clauses, 
the methods from~\cite{JK} require $55\,825$ variables and $214\,024$ clauses, 
and the conversion tool within \Bosphorus (see~\cite{CSCM}) requires $49\,289$ variables
and $1\,424\,034$ clauses for the logical representation of the cipher.
\end{enumerate}
\end{example}

This shows that encoding XOR-rich formulas in 2-XNF yields far more compact 
representations than state-of-the-art conversions to sets of CNF clauses.

\begin{remark}
    To efficiently store instances in XNF, we suggest a derivation of the
    established DIMACS standard for CNFs: in the place of literals (encoded 
    as~\texttt{-L} or~\texttt{L}) we encode linerals as literals connected
    (without whitespace) with the symbol~\texttt{+}.
    Then the \Ascon\ S-Box has the following XNF-representation:
    \bigskip
    
    \noindent\hbox{\hspace{.2\textwidth}
    \begin{minipage}{.33\textwidth}
        \texttt{p xnf 10 10}\\
        \texttt{-2 4+5+6 0}\\
        \texttt{2+3 -1+2+4+5+7 0}\\
        \texttt{-1 2+3+9 0}\\
        \texttt{-2+3 1+5+7 0}\\
        \texttt{-2 1+4+10 0}
    \end{minipage}\hspace{0.04\textwidth}
    \begin{minipage}{.33\textwidth}
        \texttt{\phantom{1}}\\
        \texttt{-4 2+3+8 0}\\
        \texttt{1 -2+3+4+5+9 0}\\
        \texttt{2 -1+3+4+6 0}\\
        \texttt{2 -4+5+10 0}\\
        \texttt{4 2+3+5+8 0}
    \end{minipage}}
    \bigskip
    
    \noindent
    Note that solvers supporting this encoding can also process usual DIMACS CNF files correctly.
\end{remark}

%%%%%%%%%%%%%%%%%%%%%%%%%%%%%%%%%%%%%%%%%%%%%%%%%%%%%%%%%%%%%%%%
%
%  Section 4: 2-XNF Solving
%
%%%%%%%%%%%%%%%%%%%%%%%%%%%%%%%%%%%%%%%%%%%%%%%%%%%%%%%%%%%%%%%%

\section{Graph-based 2-XNF SAT Solving}
\label{sec:solving}

It is well-known that a satisfiable assignment of a 2-CNF instance $F$, i.e.,
a propositional logic formula in CNF where every clause has at most two literals,
can be found with linear time and space complexity (see~\cite{APT}). 
The key idea is to express the formula~$F$ by a (directed) \textit{implication graph} 
whose set of vertices is the set of literals occurring in~$F$ and their respective negations, 
and for which every clause $L_i\lor L_j$ of~$F$ corresponds to the pair of edges $(\neg L_i, L_j)$
and $(\neg L_j, L_i)$.
Then a greedy algorithm working along a \textit{topological ordering} of the
\textit{strongly connected components} of this graph constructs a satisfying
assignment.
In this section we present a graph-based 2-XNF solver that follows a DPLL
approach where the above ideas form the basis of the in-processing step.

%%%%%%%%%%%%%%%%%%%%%%%%%%%%%%%%%%%%%%
% Subsection 4.1
%%%%%%%%%%%%%%%%%%%%%%%%%%%%%%%%%%%%%%

\subsection{Implication Graph Structures}

Recall that, for a propositional logic formula~$F$ in 2-XNF, the algebraic 
representation $I_F \subseteq \BB_n$ is of the form
$$
I_F = \langle f_1 g_1, \dots, f_k g_k, \ell_1,\dots,\ell_s \rangle \subseteq \BB_n
$$
for some $f_i,g_i,\ell_j\in\LL_n$.
Based on the central idea of implication graph based linear time 2-CNF solving, we
introduce the following notion.

\begin{definition}[Implication Graph Structures]\label{def:igs}

Let $F$ be a formula in 2-XNF.
\begin{enumerate}
\item[(a)] A tuple $(L,V,E)$, where $L,V\subseteq \LL_n$ and $E\subseteq V^2$, 
is called an \textbf{implication graph structure} (\textbf{IGS}) for~$F$ if the following conditions are satisfied:
\begin{itemize}
\item[(1)] $I_F=\langle L\rangle + \langle fg \mid (f+1,g)\in E\rangle$.

\item[(2)] (Skew-Symmetry) For all $(f+1,g)\in E$, we have $(g+1,f)\in E$. 

\item[(3)] For all $f\in V$, we have $(f,f)\notin E$.
\end{itemize}

\item[(b)] Let $\sigma$ be a term ordering. An IGS $(L,V,E)$ for~$F$
is called $\sigma$-\textbf{reduced} if the polynomials in~$L$ have pairwise distinct 
leading terms and 
    $$\LT_\sigma(L)\cap\textstyle\bigcup_{f\in V}\Supp(f)=\emptyset.$$
\end{enumerate}
\end{definition}

For an IGS $(L,V,E)$, the pair $(V,E)$ is clearly a graph.
Such graphs are called \textbf{implication graphs} in view of the following observation.

\begin{remark}\label{rem:motivation}
Let~$(L,V,E)$ be an IGS for a formula~$F$, and let $(f,g)\in E$. 
By definition, we then have $(f+1)g \in I_F$, and therefore
$$
f\in I_F \quad\implies\quad   g \;=\; fg+(f+1)g \in I_F.
$$
In other words, if the source node of an edge in the graph $(V,E)$ is contained 
in the ideal $I_F$, then its target node is in~$I_F$, too.
The set~$L$ simply collects all known linear information of $I_F$.
\end{remark}

Given an IGS $G=(L,V,E)$ for a formula~$F$, a
sequence $f_1,\dots,f_s\in V$ with $(f_i,f_{i+1})\in E$ for $i\in
\{1,\dots,s-1\}$ is called a \textbf{path} in~$G$. In this case we also write 
$f_1 \to f_s$.

\begin{lemma}\label{lem:IGStransitivity}
Let $G$ be an IGS for a formula~$F$, and let 
$f\to g$ be a path in~$G$. Then we have~$(f+1)g\in I_F$.
\end{lemma}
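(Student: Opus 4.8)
The plan is to argue by induction on the length of the path. Write the path as $f = f_1, f_2, \dots, f_s = g$ with $(f_i, f_{i+1}) \in E$ for $i \in \{1,\dots,s-1\}$. The degenerate case $s = 1$ (so $f = g$) is trivial, since $(f+1)f = f^2 + f = 0$ in $\BB_n$, because every element of $\BB_n$ is idempotent. The case $s = 2$ is a single edge $(f_1,f_2) \in E$, and then condition~(1) of Definition~\ref{def:igs} — equivalently, the observation recorded in Remark~\ref{rem:motivation} — yields $(f_1+1)f_2 \in I_F$ immediately. These serve as the base of the induction.

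For the inductive step, assume $s \ge 3$ and that the claim holds for every shorter path. Applying the base case to the first edge gives $(f_1+1)f_2 \in I_F$, and applying the induction hypothesis to the subpath $f_2 \to f_s$ gives $(f_2+1)f_s \in I_F$. The key manipulation is to split off $f_2$ using the identity $1 = f_2 + (f_2+1)$, valid in characteristic two:
$$(f_1+1)f_s \;=\; \big((f_1+1)f_2\big)\, f_s \;+\; (f_1+1)\,\big((f_2+1)f_s\big).$$
The first summand lies in $I_F$ since $(f_1+1)f_2 \in I_F$ and $I_F$ is an ideal; the second lies in $I_F$ since $(f_2+1)f_s \in I_F$. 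Hence $(f_1+1)f_s = (f+1)g \in I_F$, which completes the induction.

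I do not expect a real obstacle here: the statement is essentially the transitive closure of the one-step implication in Remark~\ref{rem:motivation}. The only point that needs a moment's thought is the telescoping trick that reduces a path of length $\ge 2$ to its first edge composed with a shorter path, exploiting that $f_2$ and $f_2+1$ sum to $1$ in $\BB_n$. It is also worth noting that neither skew-symmetry (condition~(2)) nor the absence of loops (condition~(3)) is used; only the generating description of $I_F$ in condition~(1) enters the argument.
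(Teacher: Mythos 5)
Your proof is correct and is essentially the paper's own argument: induction on the path length, splitting off the first edge and using the identity $(f_1+1)f_s = (f_1+1)f_2f_s + (f_1+1)(f_2+1)f_s$ in characteristic two. Your explicit handling of the degenerate one-vertex case and the remark that only condition~(1) of Definition~\ref{def:igs} is used are fine additions but do not change the argument.
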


\begin{proof}
Let the path $f\to g$ be given by $(f_i,f_{i+1})\in E$ for $i\in\{1,\dots,s-1\}$,
where $f=f_1$ and $g=f_s$ for some $s \in \NN_+$.
We show the claim by induction on~$s$. By Definition~\ref{def:igs}, the statement 
is true if $s=1$. Assume that the claim is correct for paths of length $s-1$.
Then we have $(f_2+1)f_s \in I_F$, and by Definition~\ref{def:igs} also $(f_1+1)f_2\in I_F$. 
This shows
$$ 
(f_1+1)f_{s}  \;=\; (f_1+1)f_2f_{s} + (f_1+1)(f_2+1)f_{s}\in I_F.
$$
\end{proof}

This lemma implies that the \textit{transitive closure} $(V,E')$ of~$(V,E)$ yields
an implication graph structure $(L,V,E')$ for~$F$.
It is easy to find an implication graph structure for a formula in 2-XNF, as the
next remark indicates.

\begin{remark}[Trivial Implication Graph Structures]\label{rem:trivialIGS}

Let~$F$ be a formula in 2-XNF with an algebraic representation of the form
$$
I_F \;=\; \langle f_1 g_1, \dots, f_k g_k, \ell_1, \dots, \ell_s \rangle \subseteq \BB_n
$$
where $f_i,g_i,\ell_j\in\LL_n$ are pairwise distinct.
\begin{enumerate}
\item[(a)] Then the implication graph structure $(L,V,E)$ given by $L=\{\ell_1,\dots,\ell_s\}$,
$$
V  \;=\; \bigcup_{i=1}^k\, \{\,f_i,\, f_i+1,\, g_i,\, g_i+1 \,\},\hbox{\quad and\quad}
E  \;=\; \bigcup_{i=1}^k\, \{\; (f_i+1,g_i),\, (g_i+1,f_i)\;\}
$$
is called the \textit{trivial implication graph structure} for~$F$.

\item[(b)] The implication graph structure $(L,V,E)$ given by $L=\{\ell_1,\dots,\ell_s\}$,
\begin{align*}
V  &\;=\; \{\,f_i,\, f_i+1,\, g_i,\, g_i+1,\, f_i+g_i,\, f_i+g_i+1\; \mid 1\leq i\leq k\},\hbox{\ and}\\
E  &\;=\; \bigcup_{i=1}^k \{\; (f_i+1,g_i),\, (f_i+1,f_i+g_i+1),\, (f_i+g_i, g_i) \;\} \\
   &\qquad\qquad  \cup\bigcup_{i=1}^k \{\; (g_i+1,f_i),\, (f_i+g_i,f_i),\, (g_i+1,f_i+g_i+1) \;\}
\end{align*}
is called the \textit{extended trivial implication graph structure} for~$F$.

\end{enumerate}

In both cases the size of the graph $(V,E)$ is linear in the input size of the formula~$F$,
because we have $\#V \le 6k$ and $\#E\le 3k$.
\end{remark}

\begin{example}\label{ex:igs}
Let~$F$ be a formula in 2-XNF with algebraic representation
\begin{align*}
    I_F = \Big\langle\,
    &(x_1+1)x_2,\,
      (x_2+1)(x_1+x_3),\, 
      (x_2+1)x_4,\, 
      (x_5+x_2+1)(x_1+x_3),\,  \\
    &(x_1+x_3+1)(x_1+x_2+x_3+1),\,
      (x_4+1)x_3,\, 
      (x_5+1)x_4
      \,\Big\rangle \;\subseteq\; \BB_5.
\end{align*}
Then the trivial IGS of $F$ is $(L_0,V_0,E_0)$ where $L_0 = \emptyset$ and $(V_0,E_0)$ is the graph given in Figure~\ref{fig:exigs}.

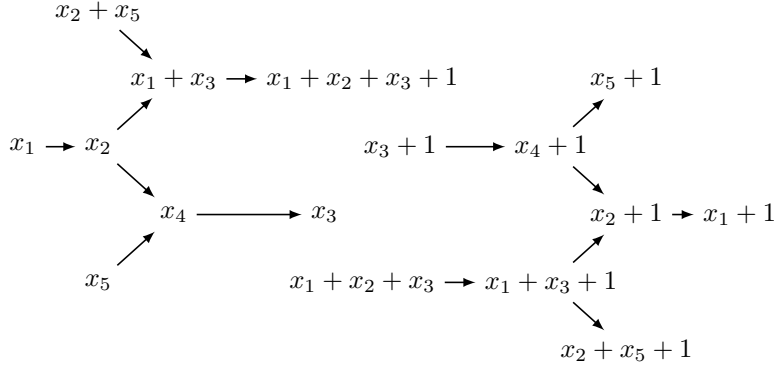
\begin{figure}
    \centering
    \tikzstyle{arrow} = [-latex,line width=0.6pt]
    \tikzstyle{revarrow} = [latex-,line width=0.6pt]
    \begin{tikzpicture}[yscale=0.9]
        % top left component
        \node (x1) at (0,0) {$x_1$};
        \node (x2) at (1,0) {$x_2$};
        \draw[arrow] (x1) -> (x2);

        \node (x1+x3) at (2,1) {$x_1+x_3$};
        \draw[arrow] (x2) -> (x1+x3);
        \node (x5+x2) at (1,2) {$x_2+x_5$};
        \draw[arrow] (x5+x2) -> (x1+x3);
        \node (x1+x2+x3+1) at (4.5,1) {$x_1+x_2+x_3+1$};
        \draw[arrow] (x1+x3) -> (x1+x2+x3+1);
        \node (x4) at (2,-1) {$x_4$};
        \draw[arrow] (x2) -> (x4);
        \node (x5) at (1,-2) {$x_5$};
        \draw[arrow] (x5) -> (x4);
        \node (x3) at (4,-1) {$x_3$};
        \draw[arrow] (x4) -> (x3);
        
        % bottom right component
        % NOTE: arrows are reversed!
        \begin{scope}[scale=-1,shift={(-9,1)}]
          \node (x1+1) at (-0.5,0) {$x_1+1$};
          \node (x2+1) at (1,0) {$x_2+1$};
          \draw[revarrow] (x1+1) -> (x2+1);
          \node (x1+x3+1) at (2,1) {$x_1+x_3+1$};
          \draw[revarrow] (x2+1) -> (x1+x3+1);
          \node (x5+x2+1) at (1,2) {$x_2+x_5+1$};
          \draw[revarrow] (x5+x2+1) -> (x1+x3+1);
          \node (x1+x2+x3) at (4.5,1) {$x_1+x_2+x_3$};
          \draw[revarrow] (x1+x3+1) -> (x1+x2+x3);
          \node (x4+1) at (2,-1) {$x_4+1$};
          \draw[revarrow] (x2+1) -> (x4+1);
          \node (x5+1) at (1,-2) {$x_5+1$};
          \draw[revarrow] (x5+1) -> (x4+1);
          \node (x3+1) at (4,-1) {$x_3+1$};
          \draw[revarrow] (x4+1) -> (x3+1);
        \end{scope}
    \end{tikzpicture}
    
    \caption{Implication graph $(V_0,E_0)$ from Example~\ref{ex:igs}.}
    \label{fig:exigs}
\end{figure}
\end{example}

Our solving algorithm starts with such a trivial IGS
for~$F$ and improves it gradually by propagation, in-processing and
guessing until we arrive at an IGS with an empty graph, i.e.,
a case where the corresponding ideal is generated just by linear polynomials.
Given that the guesses were correct, a satisfying assignment of~$F$ can then be
deduced immediately from a solution to the corresponding system of linear
equations. This improvement is measured in terms of the size of the linear part~$L$ and 
in the size of the graph $(V,E)$. The following relation specifies this in detail.

\begin{definition}\label{def:IGSordering}
Let $F$ be a formula in 2-XNF, and let $G'=(L',V',E')$ as well as $G=(L,V,E)$ be two
implication graph structures for~$F$. 
Then we write $G' \preceq G$ if and only if $\langle L'\rangleF \supseteq \langle L\rangleF$ 
and $\#V' \leq \#V$. 
Moreover, if one of the two conditions is strict, we write $G' \prec G$.
\end{definition}

The relation~$\preceq$ defines a partial quasi-ordering on the set of all
implication graph structures, i.e., it is reflexive, transitive, and by the following lemma it
satisfies the descending chain condition. The latter property is the
key ingredient for proving finiteness of the upcoming algorithms.

\begin{lemma}[Descending Chain Condition for Implication Graph Structures]\label{lem:DCC}
Let $F$ be a formula in 2-XNF. Then there is no infinite, strictly descending
chain of implication graph structures for~$F$.
\end{lemma}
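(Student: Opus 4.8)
The plan is to attach to every implication graph structure a single nonnegative integer that strictly decreases along $\prec$; the descending chain condition then follows from the well-ordering of~$\NN$. Concretely, for an IGS $G=(L,V,E)$ for~$F$ I would set
$$
\Phi(G) \;=\; \bigl(n+1-\dim_{\FF_2}\langle L\rangleF\bigr)\;+\;\#V .
$$
This quantity is a well-defined nonnegative integer: the span $\langle L\rangleF$ is an $\FF_2$-subspace of the vector space $\LL_n$, which has dimension $n+1$, so that $0\le\dim_{\FF_2}\langle L\rangleF\le n+1$, while $\#V\ge 0$.

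The one thing to check is that $G'\prec G$ forces $\Phi(G')<\Phi(G)$. By Definition~\ref{def:IGSordering}, $G'\prec G$ means $\langle L'\rangleF\supseteq\langle L\rangleF$ and $\#V'\le\#V$, with at least one of these two relations strict. Here I would use that $\LL_n$ is finite-dimensional over~$\FF_2$: the inclusion $\langle L'\rangleF\supseteq\langle L\rangleF$ yields $\dim_{\FF_2}\langle L'\rangleF\ge\dim_{\FF_2}\langle L\rangleF$, and the inequality of dimensions is strict exactly when the inclusion is strict. Consequently
$$
\Phi(G)-\Phi(G') \;=\; \bigl(\dim_{\FF_2}\langle L'\rangleF-\dim_{\FF_2}\langle L\rangleF\bigr)\;+\;\bigl(\#V-\#V'\bigr)
$$
is a sum of two nonnegative integers, at least one of which is positive, so $\Phi(G)-\Phi(G')\ge 1$. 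An infinite strictly descending chain $G_0\succ G_1\succ G_2\succ\cdots$ of IGSs for~$F$ would therefore produce an infinite strictly decreasing sequence $\Phi(G_0)>\Phi(G_1)>\Phi(G_2)>\cdots$ in~$\NN$, which is impossible; hence no such chain exists.

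I do not expect a real obstacle in this argument. The only subtlety worth spelling out is that a strict inclusion of the linear spans genuinely lowers the codimension inside~$\LL_n$ — this is where finite-dimensionality enters — and that $\#V$ can only ever stay the same or decrease as one moves down a $\prec$-chain, so the two ``coordinates'' of~$\Phi$ never work against each other. Everything else is bookkeeping.
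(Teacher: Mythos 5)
Your proof is correct, and it rests on exactly the same two finiteness facts as the paper's: $\langle L\rangleF$ lives in the finite-dimensional space $\LL_n$ (so strict inclusions strictly raise the dimension), and $\#V$ is a nonnegative integer that can only weakly decrease along $\preceq$. The only difference is packaging: the paper argues by contradiction in two stages (the ascending chain of spans stabilizes, after which the $\#V_i$ would have to decrease strictly forever, which is impossible), whereas you fold both measures into the single strictly decreasing rank $\Phi(G)=\bigl(n+1-\dim_{\FF_2}\langle L\rangleF\bigr)+\#V$, which is an equally valid and arguably tidier way to conclude.
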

\begin{proof}
For a contradiction, assume there is an infinite strictly descending chain
$(L_1,V_1,E_2) \succ (L_2,V_2,E_2) \succ \cdots$ of implication graph structures for~$F$. 
By definition, it follows that there is an ascending chain of
subspaces $\langle L_1\rangleF \subseteq \langle L_2\rangleF \subseteq \cdots$ in~$\BB_n$.
Since $\BB_n$ is a finite-dimensional $\FF_2$-vector space, this chain
becomes eventually stationary, i.e., there exists a number $k\in\NN_+$ such that
$\langle L_k\rangleF = \langle L_i \rangleF$ for all $i\geq k$. 
By Definition \ref{def:IGSordering}, this implies $\#V_{i+1} < \#V_i$ for all $i \geq k$. In this way, the $\#V_i$ form a decreasing sequence in $\NN$ which eventually becomes stationary. Consequently, at some point in the sequence, we have $\langle L_{i+1} \rangleF = \langle L_{i} \rangleF$ and $\#V_{i+1} = \#V_i$, i.e.~the chain is not strictly decreasing.
\end{proof}

To conclude this section, we present the updating Algorithm~\ref{alg:updating} of our solver
which computes the $\sigma$-reduction of any given IGS.
The method is an adaption of \textit{Gau{\ss}ian Constraint Propagation} (see~\cite[Algorithm 5.7]{Hor}) 
to implication graph structures. Note that Gau{\ss}ian Constraint Propagation itself is a 
generalization of \textit{Boolean Constraint Propagation}, also known as \textit{Unit Propagation}, 
in traditional CNF-based SAT solvers.
\smallskip

\begin{algorithm}[ht]
  \DontPrintSemicolon
  \SetAlgoLongEnd
  \SetKwInOut{Input}{Input}
  \SetKwInOut{Output}{Output}
  \Input{An IGS~$G$ for a formula~$F$, a term ordering $\sigma$.}
  \Output{A $\sigma$-reduced IGS~$G'$ for~$F$ with $G'\preceq G$.}
  \BlankLine
  Write $G=(L,V,E)$ and $\LT_\sigma$-interreduce $L$.\;
  Let $(L',V',E')=(L,\emptyset,\emptyset)$.\;
  \For{$(f,g)\in E$}{
    Let $f'=\NR_\sigma(f,L)$ and $g'=\NR_\sigma(g,L)$.\;%\tcp*{$(f'+1)\cdot g'\in I_F$}
    \lIf{$f'=0$ {\bf and} $g'\neq 0$}{append $g'$ to $L'$.}
    \lIf{$g'=1$ {\bf and} $f'\neq 1$}{append $f'+1$ to $L'$.} %actually not necessary due to skew-symmetry!
    \If{$f'\notin\FF_2$ {\bf and} $g'\notin\FF_2$ {\bf and} $f'\neq g'$}{append $(f',g')$ to $E'$, 
        append $f'$ and $g'$ to~$V'$.}
  }
  \lIf{$L\neq L'$}{set $(L,V,E)=(L',V',E')$ and go to Line~$2$.}
  \lElse{\Return $(L',V',E')$.}
  \caption{\GCP \,--\, Graph Gau{\ss}ian Constraint Propagation}
  \label{alg:updating}
\end{algorithm}

\begin{proposition}\label{prop:alg:GCP}
Let~$\sigma$ be a term ordering, let~$F$ be a formula in 2-XNF, and let~$G$
be an IGS for~$F$.
Then \GCP is an algorithm which returns a $\sigma$-reduced implication
graph structure $G'=\GCP_\sigma(G)$ for~$F$  such that $G' \preceq G$.
\end{proposition}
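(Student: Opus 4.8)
The plan is to verify the three things the statement asks for: termination, that the output is a $\sigma$-reduced IGS for $F$, and that $G' \preceq G$. I would organize the proof around the outer loop of \GCP, which repeatedly replaces $(L,V,E)$ by $(L',V',E')$ until $L=L'$.

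\medskip

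\noindent\textbf{Termination.} First I would observe that each pass through Lines~2--9 only \emph{adds} polynomials to $L'$: every $f'$, $g'+1$, etc.\ appended to $L'$ is a normal remainder modulo the old $L$ (or such a remainder plus $1$), hence lies in the ideal $\langle L\rangle$ after we also know the relevant linerals are forced. More precisely, the key monotonicity claim is $\langle L\rangleF \subseteq \langle L'\rangleF$ on each pass: the old generators of $L$ survive (up to interreduction they span the same space — this is why Line~1 uses interreduction rather than deletion — actually here I need to be a little careful, since Line~2 resets $V',E'$ to empty and $L'$ is rebuilt only from the edges; so I would instead argue that $\langle L'\rangle \supseteq \langle L\rangle$ because the edges of $E$ that reduce to trivial form contribute their forced linear consequences, while $\langle L\rangle$ itself is preserved through the $\NR_\sigma(\,\cdot\,,L)$ operations applied to the edge endpoints — this needs the observation that $L$ is a Gröbner basis of $\langle L\rangle$ after interreduction since the leading terms are linear indeterminates, so $\NR_\sigma(h,L)=0 \iff h\in\langle L\rangle$). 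Once $\langle L\rangleF \subseteq \langle L'\rangleF$ is established, the dimensions $\dim_{\FF_2}\langle L\rangleF$ form a non-decreasing sequence bounded by $n+1$; when it stabilizes, the next pass has $L = L'$ (as sets, after interreduction they have the same reduced form), and the algorithm returns. Hence finitely many passes; each pass is a finite loop over $E$, so \GCP\ terminates.

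\medskip

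\noindent\textbf{Output is a $\sigma$-reduced IGS for $F$ with $G'\preceq G$.} Here I would check Definition~\ref{def:igs}(a), conditions (1)--(3), together with $\sigma$-reducedness, for the returned triple $(L',V',E')$ on the final pass (where $L=L'$ as reduced sets). Condition (3), $(f',f')\notin E'$, is immediate from the explicit test $f'\neq g'$ in the last \texttt{If}. Condition (2), skew-symmetry, I would get from skew-symmetry of the \emph{input} IGS together with the fact that $\NR_\sigma(\,\cdot\,,L)$ is $\FF_2$-linear: if $(f+1,g)\in E$ then $(g+1,f)\in E$, and $\NR_\sigma(f+1,L) = \NR_\sigma(f,L)+\NR_\sigma(1,L)$, so the reduced endpoints of the two edges are again a skew-symmetric pair (and the $\FF_2$-membership tests are symmetric under this correspondence, so the two edges are kept or discarded together). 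Condition (1), $I_F = \langle L'\rangle + \langle f'g' \mid (f'+1,g')\in E'\rangle$, is the real content: I would show the right-hand side ideal is unchanged as we pass from $(L,V,E)$ to $(L',V',E')$. For each edge $(f,g)\in E$, write $e$ for the stored edge label so that the corresponding generator is $(e+1)g$ with $e+1 = f$; reducing, $(f+1)g \equiv (f'+1)g' \pmod{\langle L\rangle}$ after multiplying out the reductions — this is where the case split in Lines~5--8 matches exactly the possible collapses: $f'=0$ forces $g = g'\in I_F$ (so we add $g'$ to $L'$, consistent with $fg + (f+1)g$ reasoning of Remark~\ref{rem:motivation}); $g'=1$ forces $f+1\in I_F$, i.e.\ $f'+1$ is added; $f'\in\FF_2$ with $f'=1$ makes $(f'+1)g'=0$ a trivial generator that can be dropped; $g'=0$ makes the generator $0$ as well; and $f'=g'$ would give $(f'+1)f' = 0$ in $\BB_n$, again droppable. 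So in every discarded case the generator was already $0$ modulo $\langle L\rangle$ or its forced linear consequence has been put into $L'$; in every kept case the generator is reproduced up to $\langle L\rangle$. That gives condition (1). For $\sigma$-reducedness: Line~1 interreduces $L$ so the leading terms are distinct, and every endpoint placed in $V'$ is of the form $\NR_\sigma(\,\cdot\,,L)$, hence contains no term in $\LT_\sigma(L)$; since on the returning pass $L'$ and $L$ have the same reduced Gröbner basis, $\LT_\sigma(L') \cap \bigcup_{f\in V'}\Supp(f) = \emptyset$. Finally $G'\preceq G$: we showed $\langle L\rangleF \subseteq \langle L'\rangleF$ across each pass, hence $\langle L\rangleF \subseteq \langle L'\rangleF$ overall, i.e.\ $\langle L'\rangleF \supseteq \langle L\rangleF$; and $\#V' \le \#V$ because every vertex of $V'$ arises as the reduction of a vertex already ``present'' via the edges of $E$, and reductions cannot create more distinct nonconstant values than we started edge-endpoints with — more carefully, $\#V'\le 2\#E' \le 2\#E \le \#V$ using skew-symmetry and condition~(3), or one argues directly that the map sending each surviving edge endpoint to its reduction does not increase the vertex count. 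Either way $\#V'\le\#V$, so $G'\preceq G$.

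\medskip

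\noindent\textbf{Main obstacle.} The delicate point is the bookkeeping in the invariant ``the ideal $\langle L\rangle + \langle (f+1)g : (f+1,g)\in E\rangle$ equals $I_F$ is preserved by one pass'', because Line~2 throws away $V,E$ entirely and rebuilds $L'$ only out of the \emph{collapsed} edges, while the non-collapsed edges only contribute to $E'$. One must check that no linear information is lost: a collapsing edge must deposit its forced linear form into $L'$, and the surviving non-linear generators must still be recoverable modulo $\langle L'\rangle$ — but $\langle L'\rangle \supseteq \langle L\rangle$ only after the collapses are accounted for, so there is a mild circularity to untangle. I would handle it by proving the two inclusions of the ideal equality separately and in the right order: first $I_F \supseteq \mathrm{RHS}(L',V',E')$ using Remark~\ref{rem:motivation} and that reductions stay in $I_F$, then $I_F \subseteq \mathrm{RHS}(L',V',E')$ by showing each original generator $(f+1)g$ or $\ell_j$ is recovered. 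The second delicate point, smaller, is making the equality ``$L=L'$'' in the loop condition precise — the algorithm compares sets, so I would phrase the stabilization argument in terms of reduced Gröbner bases of $\langle L\rangleF$ to be safe.
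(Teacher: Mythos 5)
Your overall skeleton is the paper's: a per-pass case analysis showing that each reduced edge either collapses to a linear polynomial deposited in $L'$, becomes a trivial generator that can be dropped ($f'=1$, $g'=0$, or $f'=g'$), or survives as an edge of $E'$; skew-symmetry of $E'$ via $\FF_2$-linearity of $\NR_\sigma(\cdot,L)$; $\sigma$-reducedness because all new vertices are normal remainders modulo $L$; and termination via the bounded dimension of $\langle L\rangleF$ (which the paper packages as Lemma~\ref{lem:DCC}). But there is one genuine flaw, and it sits under the pillar of the whole argument: you misread Line~2. The algorithm sets $(L',V',E')=(L,\emptyset,\emptyset)$, i.e.\ $L'$ is initialized as a \emph{copy of} $L$, and only the graph part is rebuilt from the reduced edges. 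You instead assume that $L'$ is reset to $\emptyset$ and ``rebuilt only from the edges'', and then try to recover $\langle L\rangleF\subseteq\langle L'\rangleF$ by claiming that $\langle L\rangle$ ``is preserved through the $\NR_\sigma(\cdot,L)$ operations applied to the edge endpoints''. That is not a valid argument, and under your reading the needed inclusion is simply false: if $E=\emptyset$ and $L\neq\emptyset$, nothing at all would be put into $L'$, so condition~(1) of Definition~\ref{def:igs} and the relation $G'\preceq G$ would both fail for the output. The ``mild circularity'' you single out as the main obstacle is an artifact of this misreading. With Line~2 read correctly, $\langle L'\rangleF\supseteq\langle L\rangleF$ is immediate, and both inclusions of $I_F=\langle L'\rangle+\langle f'g' \mid (f'+1,g')\in E'\rangle$ follow from your own case analysis exactly as in the paper's proof: each old generator $(f+1)g$ is congruent to $(f'+1)g'$ modulo $\langle L\rangle\subseteq\langle L'\rangle$, and the latter is $0$, an element of $L'$, or a generator coming from a new edge; conversely, every polynomial appended to $L'$ and every new edge generator lies in $I_F$.

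A smaller repair: your first argument for $\#V'\le\#V$, namely $\#V'\le 2\#E'\le 2\#E\le\#V$, is wrong, since nothing in Definition~\ref{def:igs} bounds $\#E$ by $\#V$ (an implication graph may have far more edges than vertices), so $2\#E\le\#V$ fails in general. Your alternative is the correct one and is what the paper implicitly uses: every element of $V'$ equals $\NR_\sigma(v,L)$ for some $v\in V$, so $V'$ is contained in the image of $V$ under this map and hence $\#V'\le\#V$. With these two points fixed, your proposal coincides with the paper's proof (and is in places more explicit than it, e.g.\ on skew-symmetry and on the ideal bookkeeping).
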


\begin{proof}
Since $(L,V,E)$ is an IGS for~$F$, we have $L\subseteq I_F$ and
$(f+1)g\in I_F$ for every pair $(f,g)\in E$.
Thus we see that $(f'+1)g' = (\NR_\sigma(f,L)+1) \NR_\sigma(g,L) \in I_F$ 
holds in Line~$4$.

For $f'=0$, this yields $g'\in I_F$, and for $g'=1$, we get $f'+1\in I_F$. 
For all other cases, where $f'\in\FF_2$, $g'\in\FF_2$, or $f'=g'$, we have
$(f'+1)g' = 0$, and the corresponding edge can be ignored.
This shows that after Lines~$3$-$8$ have been executed, the tuple $G'=(L',V',E')$ 
is indeed an IGS for~$F$. Moreover, $G'$ is $\sigma$-reduced, 
because for all $f'\in V'$ we have~$\LT_\sigma(L)\cap\Supp(f')=\emptyset$ by construction.

Finally, note that we always have $\langle L' \rangleF \supseteq \langle L \rangleF$ and $\#V' \leq \#V$. 
This implies $G' \preceq G$ for every iteration of Lines~$2$-$8$, and this relation 
is strict if $L \neq L'$. By Line~$9$, these steps are repeated as long as this is the case, and the
implication graph structures $(L',V',E')$ resulting from these iterations form a
strictly descending chain.
By Lemma~\ref{lem:DCC}, this chain must be finite, i.e., there can
only be finitely many iterations, and the procedure has to terminate in Line~$10$.
\end{proof}

%%%%%%%%%%%%%%%%%%%%%%%%%%%%%%
%  Subsection 4.2
%%%%%%%%%%%%%%%%%%%%%%%%%%%%%%

\subsection{Pre-Processing Techniques}

In this subsection we present two results which allow us to deduce new information from
a given implication graph structure. The first one derives new linerals,
and the second one finds new edges between the vertices of a given implication
graph. These techniques are computationally rather expensive and should be seen
as pre-processing techniques which are only applied once before the main solving procedure.

\begin{definition}
Let $F$ be a formula in 2-XNF, and let $(L,V,E)$ be an IGS for~$F$. 
\begin{enumerate}
\item[(a)] The \textbf{set of descendants} of a vertex $f\in V$ is defined by
$$
D_f \;=\;  \{f\} \cup \{g\in V \mid \; \text{there is a path } f\to g \text{ in } (V,E)\}.
$$
Note that we consider~$f$ as a descendant of itself, since we have
$(f+1)f=0\in I_F$. 

\item[(b)] The vector space $\Delta_f = \langle D_f \rangleF \subseteq \LL_n$ will be called
the \textbf{space of descendants} of~$f$.
\end{enumerate}
\end{definition}

Note that, for a vector subspace~$U$ of $\LL_n$, we let
$1+U = \{f+1 \mid f\in U\}$ be the affine subspace of~$\LL_n$ representing the
\textit{negation} of~$U$.
The space of descendants of $f\in V$ has the following useful properties. 

\begin{proposition}\label{prop:preprocessing}
Let $F$ be a formula in $2$-XNF, let $(L,V,E)$ be an implication graph
structure for~$F$.
\begin{enumerate}
\item[(a)] For all $f\in V$ and $g\in \Delta_f$, we have $(f+1)g \in I_F$.

\item[(b)] Let $f,g\in V$. If $\Delta_f \cap (1+\Delta_g) \ne \emptyset$
then $(f+1)(g+1)\in I_F$.

\item[(c)] For all $f\in V$, we have $\Delta_f \cap\Delta_{f+1} \subseteq I_F$.
\end{enumerate}
\end{proposition}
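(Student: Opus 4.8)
The plan is to reduce all three parts to Lemma~\ref{lem:IGStransitivity}, exploiting repeatedly that for a fixed $f\in\LL_n$ the map $\LL_n\to\BB_n$ given by $h\mapsto(f+1)h$ is $\FF_2$-linear and that $I_F$ is an ideal, hence closed under $\FF_2$-linear combinations and under multiplication by arbitrary elements of $\BB_n$.

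For part~(a) I would first record that $\Delta_f=\langle D_f\rangleF$, so an arbitrary $g\in\Delta_f$ is a sum of finitely many elements of $D_f$; by the linearity just mentioned it therefore suffices to prove $(f+1)h\in I_F$ for each single $h\in D_f$. If $h=f$ this is the identity $(f+1)f=0\in I_F$ already noted in the definition of $D_f$, and if $h\neq f$ then by definition of $D_f$ there is a path $f\to h$ in $(V,E)$, so $(f+1)h\in I_F$ follows from Lemma~\ref{lem:IGStransitivity}. Summing over the representation of $g$ then yields $(f+1)g\in I_F$.

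For part~(b) the hypothesis $\Delta_f\cap(1+\Delta_g)\neq\emptyset$ produces an element $h$ with $h\in\Delta_f$ and $h+1\in\Delta_g$. Applying part~(a) to the vertex $f$ with descendant $h$, and to the vertex $g$ with descendant $h+1$, gives $(f+1)h\in I_F$ and $(g+1)(h+1)\in I_F$. The one trick needed is to write the constant $1$ as $h+(h+1)$ in $\BB_n$, which yields
$$(f+1)(g+1)=(f+1)(g+1)\bigl(h+(h+1)\bigr)=(g+1)\cdot(f+1)h+(f+1)\cdot(g+1)(h+1),$$
and both summands lie in $I_F$. For part~(c), where $\Delta_{f+1}$ is understood to be defined (which it is for the (extended) trivial IGS and remains so after \GCP, since the vertex set stays closed under $h\mapsto h+1$), I take $h\in\Delta_f\cap\Delta_{f+1}$ and apply part~(a) twice: once to the vertex $f$ with descendant $h$, giving $(f+1)h\in I_F$, and once to the vertex $f+1$ with descendant $h$ (note $(f+1)+1=f$), giving $fh\in I_F$. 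Adding the two and simplifying in characteristic~$2$ gives $h=fh+(f+1)h\in I_F$.

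I do not expect a genuine obstacle: once part~(a) is established, parts~(b) and~(c) are one-line ideal manipulations in the Boolean ring. The only points that call for a little care are the use of $\FF_2$-linearity to pass from the spanning set $D_f$ to all of $\Delta_f$ in~(a), the decomposition $1=h+(h+1)$ in~(b), and the bookkeeping observation in~(c) that $f+1$ is indeed a vertex so that $\Delta_{f+1}$ makes sense.
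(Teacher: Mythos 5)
Your proof is correct and follows essentially the same route as the paper: part (a) by $\FF_2$-linearity from Lemma~\ref{lem:IGStransitivity}, part (b) via the decomposition $(f+1)(g+1)=(g+1)(f+1)h+(f+1)(g+1)(h+1)$, and part (c) by writing $g=fg+(f+1)g$. The extra remarks about $(f+1)f=0$ and about $f+1$ being a vertex are harmless bookkeeping and do not change the argument.
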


\begin{proof}
To show~(a), let $g \in \Delta_f$. We write $g = \sum_{j=1}^k g_j$ with $g_j \in D_f$. 
By Lemma~\ref{lem:IGStransitivity}, we have $(f+1)g_j \in I_F$ for $j\in\{1,\dots,k\}$. 
Hence we obtain $(f+1)g = \sum_{j=1}^k (f+1)g_j \in I_F$.

To prove~(b), let $h \in \Delta_f$ and $h+1\in \Delta_g$. Then~(a) implies
$(f+1)h\in I_F$ and $(g+1)(h+1)\in I_F$. This shows
$$ 
(f+1)(g+1) = (f+1)h(g+1)+(f+1)(h+1)(g+1) \in I_F. 
$$

For the proof of~(c), let $g \in \Delta_f \cap \Delta_{f+1}$. 
From~(a) we get $(f+1)g\in I_F$ and $fg\in I_F$, and therefore $g = fg+(f+1)g \in I_F$.
\end{proof}

\begin{example}\label{ex:igs2}
In the situation of Example \ref{ex:igs}, we have $x_1+x_2 \in \Delta_{x_2} \cap \Delta_{x_2+1}$. 
Proposition~\ref{prop:preprocessing}.c then implies $x_1+x_2\in I_F$, and thus $(L_0\cup\{x_1+x_2\},V_0,E_0)$ is an IGS for~$F$ as well.
Let $\sigma=\lex$ and apply $\GCP_\lex$ to this tuple to get an IGS $(L_1,V_1,E_1)$ for~$F$ where $L_1=\{x_1+x_2\}$ and $(V_1,E_1)$ is graph given in Figure~\ref{fig:exigs2}.
\end{example}

\begin{figure}
    \centering
    \tikzstyle{arrow} = [-latex,line width=0.6pt]
    \tikzstyle{revarrow} = [latex-,line width=0.6pt]
    \begin{tikzpicture}[yscale=0.9]
      % top left component
      \node (x2) at (1,0) {$x_2$};
      \node (x1+x3) at (2,1) {$x_2+x_3$};
      \draw[arrow] (x2) -> (x1+x3);
      \node (x5+x2) at (1,2) {$x_5+x_2$};
      \draw[arrow] (x5+x2) -> (x1+x3);
      \node (x4) at (2,-1) {$x_4$};
      \draw[arrow] (x2) -> (x4);
      \node (x5) at (1,-2) {$x_5$};
      \draw[arrow] (x5) -> (x4);
      \node (x3) at (4,-1) {$x_3$};
      \draw[arrow] (x4) -> (x3);

      % bottom right component
      % NOTE: arrows are reversed!
      \begin{scope}[scale=-1,shift={(-8,0)}]
        \node (x2+1) at (1,0) {$x_2+1$};
        \node (x1+x3+1) at (2,1) {$x_2+x_3+1$};
        \draw[revarrow] (x2+1) -> (x1+x3+1);
        \node (x5+x2+1) at (1,2) {$x_5+x_2+1$};
        \draw[revarrow] (x5+x2+1) -> (x1+x3+1);
        \draw[revarrow] (x1+x3+1) ->node (ellipse2) {} (x3);
        \node (x4+1) at (2,-1) {$x_4+1$};
        \draw[revarrow] (x2+1) -> (x4+1);
        \node (x5+1) at (1,-2) {$x_5+1$};
        \draw[revarrow] (x5+1) -> (x4+1);
        \node (x3+1) at (4,-1) {$x_3+1$};
        \draw[revarrow] (x4+1) -> (x3+1);
        \draw[arrow] (x1+x3) -> (x3+1);
      \end{scope}
    \end{tikzpicture}
    \caption{Implication graph $(V_1,E_1)$ from Example~\ref{ex:igs2}.}
    \label{fig:exigs2}
\end{figure}
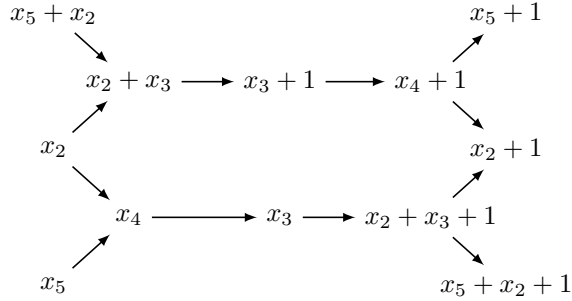

Using this proposition, we construct the following straightforward pre-processing 
Algorithm~\ref{alg:preprocessing}. It runs in polynomial time in the size of~$F$
and can find new linear information as well as new edges.
Notice that numerous intersections of affine $\FF_2$-subspaces of $\FF_2^n$ 
have to be computed.
\smallskip

\begin{algorithm}[ht]
  \DontPrintSemicolon
  \SetAlgoLongEnd
  \SetKwInOut{Input}{Input}
  \SetKwInOut{Output}{Output}
  \Input{An IGS~$G$ for a formula~$F$, a term ordering~$\sigma$.}
  \Output{A $\sigma$-reduced IGS~$G'$ for~$F$.}
  \BlankLine
  Let $(L',V',E')=\GCP_\sigma(G)$, let $L_{\mathtt{pp}}=\emptyset$ and let $E_{\mathtt{pp}}=\emptyset$.\;
  \For{$f\in V$}{
    Add a basis of $\Delta_f \cap \Delta_{f+1}$ to $L_{\mathtt{pp}}$.\;
    \For{$g\in V\setminus\{f+1\}$}{
        \lIf{$\Delta_f \cap (1+\Delta_g) \ne \emptyset$}{add $(f+1,g)$ and $(g+1,f)$ to $E_{\mathtt{pp}}$}
    }
  }
  \lIf{$L_{\mathtt{pp}}\neq\emptyset$ or $E_{\mathtt{pp}}\neq \emptyset$}{set $G=(L'\cup L_{\mathtt{pp}},V', E'\cup E_{\mathtt{pp}})$ 
      and go to Line~$1$.}
  \lElse{\Return $(L',V',E')$}
  \caption{\PP \,--\, (Edge-Extending) Pre-Processing}
  \label{alg:preprocessing}
\end{algorithm}\

It is clear that there is room for optimization of this algorithm if $G$ does not contain any
cycles. In this case it suffices to check whether $\Delta_f \cap
(1+\Delta_g) \ne \emptyset$ (see Line~$5$) initially only for sources $f,g\in V$ of~$G$, 
i.e., for vertices with no incoming edges. Only if those spaces have a non-empty
intersection, we need to consider their corresponding descendants. (This follows
immediately from the fact that $D_g\subseteq D_f$ if there is a path $f\to g$.)
Even with this optimization, finding new edges is still computationally quite
expensive. Hence Lines~$4$ and~$5$ are skipped in our implementation by default.

%%%%%%%%%%%%%%%%%%%%%%%%%%%%%%%%%%%%%
%  Subsection 4.3
%%%%%%%%%%%%%%%%%%%%%%%%%%%%%%%%%%%%%

\subsection{In-Processing Techniques}

Next we introduce two algorithms which deduce new linear polynomials from a
given implication graph structure more efficiently. Therefore they are suited as
default in-processing techniques during the main solving procedure.
In particular, the methods we look at here are (partial) generalizations of
the notions of \textit{equivalent} and \textit{failed} literals, as discussed in~\cite{HMB}.

As usual for directed graphs $G=(V,E)$, a subset $S\subseteq V$ is called a
\textbf{strongly connected component (SCC)} of~$G$ if, for all $f,g\in S$, 
there is a path $f\to g$ in~$G$ and if~$S$ is maximal with this property.
It is well-known that for any directed graph, the set of all SCCs can be computed
in $\calO(\#V+\#E)$ space and time (see~\cite{Tar}). The following
proposition indicates how these components can be used to deduce new linear
information.

\begin{proposition}\label{prop:alg:SCC}
Let~$F$ be a formula in 2-XNF, and let $G=(L,V,E)$ be an IGS
for~$F$. Denote the set of SCCs of~$(V,E)$ by~$\mathcal{C}$.
\begin{enumerate}
\item[(a)] Let $\{f_1,\dots,f_r\} \in \mathcal{C}$. Then $f_1+f_i\in I_F$
for $i\in\{1,\dots,r\}$.

\item[(b)] If $\#\mathcal{C}$ is odd, then $F$ is unsatisfiable, i.e., 
we have $I_F=\langle 1\rangle$.
\end{enumerate}
\end{proposition}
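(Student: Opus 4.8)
The plan is to handle the two parts separately: part~(a) follows directly from Lemma~\ref{lem:IGStransitivity} by adding two products, and part~(b) from a parity argument for the negation involution acting on the set~$\mathcal{C}$ of strongly connected components. For part~(a), let $S=\{f_1,\dots,f_r\}$ be an SCC. The case $i=1$ is trivial since $f_1+f_1=0\in I_F$, so assume $i\ge 2$. By definition of an SCC there are paths $f_1\to f_i$ and $f_i\to f_1$ in $(V,E)$, so Lemma~\ref{lem:IGStransitivity} gives $(f_1+1)f_i\in I_F$ and $(f_i+1)f_1\in I_F$; adding these and using $f_1f_i=f_if_1$ over~$\FF_2$ yields
$$
(f_1+1)f_i+(f_i+1)f_1 \;=\; f_1+f_i \;\in\; I_F ,
$$
which is the assertion.

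For part~(b), I would let $\iota\colon V\to V$ be the (fixed-point-free) involution $f\mapsto f+1$ and observe that, by the skew-symmetry of~$E$ (Definition~\ref{def:igs}(2)), an edge $(a,b)$ lies in~$E$ if and only if $(\iota(b),\iota(a))$ does. Hence $\iota$ reverses all edges, so a path $f\to g$ exists in $(V,E)$ exactly when a path $\iota(g)\to\iota(f)$ exists, and therefore $\iota$ carries each SCC bijectively onto an SCC, inducing an involution $\bar\iota$ on~$\mathcal{C}$. If $\#\mathcal{C}$ is odd, then $\bar\iota$, being an involution on a finite set of odd cardinality, must fix some $S\in\mathcal{C}$, i.e.\ $\iota(S)=S$. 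Choosing any $f\in S$ then gives $f+1=\iota(f)\in\iota(S)=S$, so $f$ and $f+1$ lie in the same SCC, and part~(a) with $f_1=f$ yields $f+(f+1)=1\in I_F$, hence $I_F=\langle 1\rangle$ and $F$ is unsatisfiable.

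I expect the one delicate point to be the claim that $\iota$ permutes the SCCs: besides checking that $\iota(S)$ is strongly connected, one must also argue maximality (if $\iota(S)\subsetneq T$ with $T$ strongly connected, then applying $\iota$ once more contradicts maximality of~$S$). I would also make explicit the standing fact that $V$ is stable under $f\mapsto f+1$ — automatic for the $\sigma$-reduced implication graph structures the solver produces, since there every vertex is an endpoint of an edge and $E$ is skew-symmetric — because an isolated vertex whose negation is absent from~$V$ would be a spurious one-element SCC that breaks the parity count. The remaining ingredients are the elementary fact that a fixed-point-free involution forces an even cardinality (so an involution on an odd set has a fixed point), together with the one-line computations in~$\BB_n$.
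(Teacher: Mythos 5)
Your proof is correct, and for part (b) it is the same parity argument as the paper's: there the authors note that skew-symmetry makes $S\mapsto S+1$ a pairing on $\mathcal{C}$, write $\mathcal{C}=\{S_1,\dots,S_c,S_1+1,\dots,S_c+1\}$, and conclude that odd cardinality forces some $S_i=S_i+1$, whence $1=f+(f+1)\in I_F$ by (a); your involution/fixed-point phrasing is just a cleaner way of saying this. For part (a) you take a slightly more direct route: the paper uses the two paths $f_1\to f_i$, $f_i\to f_1$ together with skew-symmetry to place $f_1+f_i$ in $\Delta_{f_1}\cap\Delta_{f_1+1}$ and then invokes Proposition~\ref{prop:preprocessing}.c, whereas you add the two products $(f_1+1)f_i$ and $(f_i+1)f_1$ from Lemma~\ref{lem:IGStransitivity} and use the identity $(f_1+1)f_i+(f_i+1)f_1=f_1+f_i$ over $\FF_2$; this avoids both the descendant spaces and skew-symmetry in (a), at no cost. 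Your explicit caveat about $V$ being stable under $f\mapsto f+1$ is also well taken: Definition~\ref{def:igs} does not literally forbid an isolated vertex whose negation is missing from $V$, and such a vertex would be a spurious singleton SCC that breaks the parity count (the paper's proof silently assumes $S+1\in\mathcal{C}$ for every $S\in\mathcal{C}$); as you observe, closure under negation does hold for the trivial IGS and for every IGS the algorithms actually produce, since there each vertex is an endpoint of an edge and $E$ is skew-symmetric, so making this hypothesis explicit only sharpens the statement.
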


\begin{proof}
Due to the skew-symmetry of implication graph structures, for
every strongly connected component $S=\{f_1,\dots,f_r\}\in\mathcal{C}$ also
$S+1=\{f_1+1,\dots,f_r+1\}$ is an SCC of $(V,E)$. 

To show~(a), we let $i\in\{1,\dots,r\}$ and note that $f_1,f_i\in S$ implies that
there are paths $f_1\to f_i$ and $f_i\to f_1$ in~$G$. By the skew-symmetry,
we get $f_1+1\to f_i+1$. This shows $f_1,f_i\in D_{f_1}$ and $f_1+1,f_i+1\in D_{f_1+1}$, 
and hence $f_1+f_i\in \Delta_{f_1}$ as well as $f_1+f_i=(f_1+1)+(f_i+1)\in \Delta_{f_1+1}$. 
By Proposition~\ref{prop:preprocessing}.c, we thus have $f_1+f_i\in I_F$.

For the proof of~(b), notice that we can write $\mathcal{C} =\{S_1,\dots,S_c,S_1+1,\dots,S_c+1\}$ 
for some $c\in\NN$, where we have $S_i\neq S_j$ and $S_i\neq S_j+1$ for $i\ne j$. 
If $\#\mathcal{C}$ is odd, there exists an index $i\in\{1,\dots,r\}$ with $S_i = S_i+1$. 
For $f\in S_i$, we then have $f+1\in S_i+1=S_i$ and thus $1 = f+(f+1)\in I_F$ by~(a).
\end{proof}

By repeatedly computing all linear polynomials resulting from the strongly connected
components and propagating them using~\GCP, one can update a given implication
graph structure $(L,V,E)$ such that it contains no cycles, i.e., such that
$(V,E)$ becomes a directed acyclic graph (DAG). 
This is important, as for many graph-theoretic problems there are linear time
algorithms if the underlying graph is a~DAG. Algorithm~\ref{alg:eGCP} performs these 
updates.
\smallskip

\begin{algorithm}[ht]
  \DontPrintSemicolon
  \SetAlgoLongEnd
  \SetKwInOut{Input}{Input}
  \SetKwInOut{Output}{Output}
  \Input{An IGS~$G$ for a formula~$F$, a term ordering~$\sigma$.}
  \Output{An acyclic $\sigma$-reduced IGS~$G'$ for~$F$ with~$G'\preceq G$.}
  \BlankLine
  Compute $(L',V',E')=\GCP_\sigma(G)$ and let $L_\SCC = \emptyset$.\;
  Compute the set~$\mathcal{C}$ of all strongly connected components of~$(V',E')$.\;
  \lIf{$\#\mathcal{C}$ is odd}{\Return $(\LL_n,\emptyset,\emptyset)$.}
  \For{$S\in\mathcal{C}$}{
    Write $S=\{f_1,\dots,f_r\}$ and for all $i\in\{2,\dots,r\}$ append $f_1+f_i$ to $L_\SCC$.\;
  }
  \lIf{$L_\SCC\neq \emptyset$}{set $G=(L'\cup L_\SCC,V',E')$ and go to Line~$1$.}
  \lElse{\Return $(L',V',E')$.}
  \caption{\eGCP \,--\, cycle-removing GGCP}
  \label{alg:eGCP}
\end{algorithm}

\begin{proposition}\label{prop:alg:eGCP}
Let $\sigma$ be a term ordering, let~$F$ be a formula in 2-XNF, and let~$G$
be an IGS for~$F$.
Then $\eGCP$ is an algorithm which returns a tuple $G'=\eGCP_\sigma(G)$ with
the following properties.
\begin{enumerate}
\item[(a)] The tuple $G'=(L',V',E')$ is a $\sigma$-reduced implication 
graph structure for~$F$.

\item[(b)] We have $G'\preceq G$.

\item[(c)] The graph $(V',E')$ is acyclic.
\end{enumerate}
\end{proposition}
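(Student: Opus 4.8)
The plan is to prove the three claims by analyzing the behaviour of $\eGCP$ as a sequence of calls to $\GCP_\sigma$ interleaved with the adjunction of linerals coming from strongly connected components, and to argue termination via the descending chain condition.

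First I would establish termination. Each pass through Lines~1--7 begins by computing $(L',V',E') = \GCP_\sigma(G)$, which by Proposition~\ref{prop:alg:GCP} yields a $\sigma$-reduced IGS for~$F$ with $(L',V',E') \preceq G$. If the algorithm does not return, then $L_\SCC \neq \emptyset$, so at least one lineral $f_1+f_i$ with $f_1 \neq f_i$ is appended; by Proposition~\ref{prop:alg:SCC}.a this lineral lies in $I_F$, so $G = (L' \cup L_\SCC, V', E')$ is again an IGS for~$F$, and moreover $\langle L' \cup L_\SCC \rangleF \supsetneq \langle L' \rangleF$ since $f_1 + f_i$ is a nonzero element not in $\langle L' \rangleF$ (it is killed by $\NR_\sigma(\cdot, L')$ only if it were already reducible, but $\sigma$-reducedness of $(L',V',E')$ means $\LT_\sigma(L') \cap \Supp(f_1+f_i) = \emptyset$, so $\NR_\sigma(f_1+f_i, L') = f_1 + f_i \neq 0$). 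Hence the successive IGSs fed to Line~1 form a strictly descending chain in~$\preceq$, which by Lemma~\ref{lem:DCC} must be finite. So the algorithm terminates, returning either in Line~3 or in Line~7.

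Next, correctness of~(a) and~(b). If the algorithm returns in Line~3, then $\#\mathcal{C}$ is odd, so by Proposition~\ref{prop:alg:SCC}.b we have $I_F = \langle 1 \rangle$; the returned tuple $(\LL_n, \emptyset, \emptyset)$ satisfies condition~(1) of Definition~\ref{def:igs} since $\langle \LL_n \rangle = \langle 1 \rangle = I_F$, and conditions~(2),~(3) hold vacuously, and it is trivially $\sigma$-reduced; moreover $\langle \LL_n \rangleF \supseteq \langle L \rangleF$ and $\#\emptyset \leq \#V$, so $G' \preceq G$. If the algorithm returns in Line~7, then the returned $(L',V',E')$ is the output of the most recent $\GCP_\sigma$ call, hence by Proposition~\ref{prop:alg:GCP} it is a $\sigma$-reduced IGS for~$F$; transitivity of~$\preceq$ together with the chain of inequalities accumulated over all passes (each pass only enlarges the span of the linear part and never increases $\#V$, using that adjoining $L_\SCC$ before re-entering does not affect $\#V$, and the subsequent $\GCP_\sigma$ keeps $G' \preceq$ its input) gives $G' \preceq G$.

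Finally, claim~(c): the graph of the returned IGS is acyclic. This is the crux. When the algorithm returns in Line~7, it is because $L_\SCC = \emptyset$, which by the loop in Lines~5--6 means that every SCC $S \in \mathcal{C}$ of $(V',E')$ satisfies $S = \{f_1\}$, i.e., every SCC is a singleton. A directed graph all of whose strongly connected components are singletons has no directed cycles of length $\geq 2$; and by condition~(3) of Definition~\ref{def:igs} it has no loops either, since $(V',E')$ is part of an IGS (the output of $\GCP_\sigma$). Hence $(V',E')$ is acyclic. The main obstacle I anticipate is a subtle point in the termination argument: one must be careful that the lineral $f_1 + f_i$ adjoined in Line~5 genuinely strictly enlarges $\langle L' \rangleF$ rather than being redundant — here the $\sigma$-reducedness guaranteed by the preceding $\GCP_\sigma$ call is exactly what rules out redundancy, so the interplay between $\sigma$-reducedness and the strict descent in~$\preceq$ is the delicate part of the proof.
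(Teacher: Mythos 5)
Your proof is correct and follows essentially the same route as the paper: termination via the descending chain condition (Lemma~\ref{lem:DCC}), with the $\sigma$-reducedness supplied by the preceding \GCP call guaranteeing that each lineral $f_1+f_i$ coming from a nontrivial SCC strictly enlarges the linear span; correctness of (a) and (b) from Propositions~\ref{prop:alg:GCP} and~\ref{prop:alg:SCC}; and acyclicity because $L_\SCC=\emptyset$ forces every SCC to be a singleton, with loops excluded by condition~(3) of Definition~\ref{def:igs}. Your write-up is, if anything, slightly more explicit than the paper's on the non-redundancy of $f_1+f_i$ and on the Line~3 return case.
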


\begin{proof}
First note that if the procedure terminates in Line~$3$, the output is
correct by~Proposition~\ref{prop:alg:SCC}.b. Thus we may assume
that the procedure does not terminate in Line~3.

The tuples $(L',V',E')$ and $(L'\cup L_\SCC,V',E')$ in Lines~1 and~6 are
implication graph structures for~$F$ with $(L',V',E')\preceq (L,V,E)$ by
Propositions~\ref{prop:alg:GCP} and~\ref{prop:alg:SCC}. 
Moreover, if $L_\SCC \ne \emptyset$ then it contains at least one linear
polynomial~$f$ with $\LT_\sigma(f) \notin \LT_{\sigma}(L')$, as $(L',V',E')$
is a $\sigma$-reduced IGS. This shows $(L'\cup
L_\SCC, V', E') \prec (L',V',E')$.
    
Next we observe that the repeated iterations of Lines~$1$-$6$ yield a strictly descending
chain of IGSs which has to become stationary after
finitely many steps by Lemma~\ref{lem:DCC}. Therefore we eventually
have $L_\SCC=\emptyset$, and the procedure terminates in Line~$7$. In that
case, the graph $(V',E')$ cannot contain any cycles, as otherwise there would be a
strongly connected component, and hence Line~5 would create elements in $L_\SCC$.

Finally, note that $(L',V',E')$ is $\sigma$-reduced by
Proposition~\ref{prop:alg:GCP} and the fact that this
tuple is not changed in the last iteration of Lines~2-6.
\end{proof}

As a second in-processing technique, we adapt the concept of \textit{failed literals}, 
as discussed in~\cite{HMB}, to our more general setting.

\begin{definition}\label{def:failed}
Let $F$ be a formula in 2-XNF, and let $G=(L,V,E)$ be an implication graph structure for~$F$.
\begin{enumerate}
\item[(a)] A vertex $f\in V$ is called a \textbf{failed lineral} of~$G$ if $1\in \Delta_f$.

\item[(b)] A failed lineral $f\in V$ of~$G$ is called \textbf{trivial} 
if there is an element $g\in V$ with $f\to g$, and with $f\to g+1$ or $f\to f+1$.

\end{enumerate}
\end{definition}

These literals are of interest for in-processing, if they can be found
efficiently, since for every failed lineral we learn a new linear polynomial
in $I_F$ in the following way.

\begin{lemma}\label{lem:FXL}
Let $F$ be a formula in 2-XNF, and let $G = (L,V,E)$ be an IGS
for~$F$. If $f\in V$ is a failed lineral of~$G$, then $f+1\in I_F$. 
\end{lemma}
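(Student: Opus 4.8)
The plan is to unwind the definition of a failed lineral and apply Proposition~\ref{prop:preprocessing}.a directly. Recall that $f \in V$ is a failed lineral of $G$ precisely when $1 \in \Delta_f$, i.e.\ the constant polynomial $1$ lies in the space of descendants of $f$. So the key step is simply: take $g = 1$ in Proposition~\ref{prop:preprocessing}.a.

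First I would note that the hypothesis $1 \in \Delta_f$ means we may apply Proposition~\ref{prop:preprocessing}.a with the specific choice $g = 1 \in \Delta_f$. That proposition tells us $(f+1)\cdot g \in I_F$, which in this case reads $(f+1)\cdot 1 = f+1 \in I_F$. That is exactly the claim, so the proof is essentially one line once the right instantiation is identified.

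There is no real obstacle here; the only thing worth double-checking is that $1$ is a legitimate element of $\LL_n$ (so that $\Delta_f \subseteq \LL_n$ can indeed contain it), which is true since $\LL_n = \langle 1, x_1, \dots, x_n\rangleF$ explicitly includes the element $1$, as noted in Section~\ref{sec:basics}. Thus the statement of Proposition~\ref{prop:preprocessing}.a genuinely covers the case $g=1$. I would write the proof as: ``Since $f$ is a failed lineral, we have $1 \in \Delta_f$. Applying Proposition~\ref{prop:preprocessing}.a with $g = 1$ yields $(f+1)\cdot 1 = f+1 \in I_F$.''

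Optionally, for readers who want to see the mechanism without invoking the proposition, one could spell it out: writing $1 = \sum_{j=1}^k g_j$ with each $g_j \in D_f$ a descendant of $f$, Lemma~\ref{lem:IGStransitivity} gives $(f+1)g_j \in I_F$ for each $j$ (using that there is a path $f \to g_j$, or that $g_j = f$ in which case $(f+1)f = 0 \in I_F$), and summing gives $f+1 = (f+1)\cdot 1 = \sum_{j=1}^k (f+1)g_j \in I_F$. But since Proposition~\ref{prop:preprocessing}.a already packages exactly this argument, the cleanest proof is the one-line version above.
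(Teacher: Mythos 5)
Your proof is correct and is essentially the paper's argument: the paper also reduces the claim to Proposition~\ref{prop:preprocessing}, only it invokes part~(c) (observing that $1\in\Delta_f$ gives $f+1\in\Delta_f$, while $f+1\in\Delta_{f+1}$ always, so $f+1\in\Delta_f\cap\Delta_{f+1}\subseteq I_F$), whereas you instantiate part~(a) with $g=1$, which is marginally more direct and equally valid since $1\in\LL_n$ and $\Delta_f$ may contain it.
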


\begin{proof}
Let $f$ be a failed lineral of~$G$. Then $1\in \Delta_f$ yields
$f+1\in \Delta_f$. Using Proposition~\ref{prop:preprocessing}.c and
$f+1\in \Delta_{f+1}$, we get $f+1\in \Delta_f\cap\Delta_{f+1} \subseteq I_F$.
\end{proof}

To find a failed lineral, it is sufficient to check whether the vector subspace
$\Delta_f$ contains the constant polynomial~$1$. This can be done for instance by
computing the row-echelon form of a matrix in~$\FF_2^{\#D_f\times (n+1)}$. Thus we obtain
an in-processing algorithm which runs in polynomial time and space. 
However, trivial failed linerals can be found in
near-linear time, as the next remark indicates.

For an implication graph structure $(L,V,E)$ for a formula~$F$ in 2-XNF, we denote 
the \textbf{set of ancestors} of a vertex $f\in V$ by
$$ 
A_f \;=\; \{ f \} \cup \{g\in V \mid \text{there is a path } g\to f \text{ in } (V,E)\}.
$$

\begin{remark}\label{rem:FXL_commonancestors}
Let $F$ be a formula in 2-XNF, and let $(L,V,E)$ be an acyclic implication graph structure for~$F$.
\begin{enumerate}
\item[(a)] For every $g\in V$, all \textit{common} ancestors of~$g$ and~$g+1$, 
i.e., the elements of $A_g \cap A_{g+1}$, are trivial failed linerals by
definition. Conversely, every trivial failed lineral $f\in V$ is
contained in $A_g\cap A_{g+1}$ for some $g\in V$. Thus the set
$\bigcup_{g\in V} (A_g\cap A_{g+1})$ consists exactly of the trivial
failed linerals of~$G$.

\item[(b)] If $g\in V$ is a failed lineral then every $f\in A_g$ is a
failed lineral as well, since $D_g\subseteq D_f$. Thus, instead of
searching for \textit{all} common ancestors of $g$ and $g+1$, it
suffices to find the so-called \textit{lowest common ancestors},
i.e., the vertices $f\in V$ such that no out-neighbour of~$f$ is a common
ancestor of both~$g$ and~$g+1$.

\item[(c)] For sparse graphs, one of the lowest common ancestors of two
vertices can be found in constant time after a near-linear time pre-processing phase, 
see~\cite{DSHC}. This produces many, but in general not all, trivial failed linerals 
rather quickly under the assumption that the graph $(V,E)$ is sparse (see
Remark~\ref{rem:trivialIGS}).

\end{enumerate}
\end{remark}

Our implementation does not feature this advanced method for finding trivial
failed linerals, as the corresponding algorithms seem hard to implement.
Instead we resort to the following simple Algorithm~\ref{alg:FLS} which can be implemented
using only breadth-first-searches (BFS). Moreover, unlike the method of the previous
remark, it computes all trivial failed linerals.
\smallskip

\begin{algorithm}[ht]
  \DontPrintSemicolon
  \SetAlgoLongEnd
  \SetKwInOut{Input}{Input}
  \SetKwInOut{Output}{Output}
  \Input{An acyclic IGS~$G$ for a 2-XNF formula~$F$.}
  \Output{All trivial failed linerals $L_{\mathtt{TF}}$ of~$G$.}
  \BlankLine
  Write $G=(L,V,E)$, let $M=\emptyset$, and let $L_{\mathtt{TF}}=\emptyset$.\;
  Compute the set $S$ of sources of $(V,E)$.\;
  \For{$s\in S$}{
    \lIf{$s+1\in D_s$}{append $(s,s+1)$ to $M$.}
    \lFor{all $g\in V$ with $s\to g$ and $s\to g+1$}{append $(s,g)$ to $M$.}
  }
  \For{$(s,g)\in M$}{
    Append all common ancestors of $g$ and $g+1$ to $L_{\mathtt{TF}}$.
  }
  \Return $L_{\mathtt{TF}}$ 
  \caption{\FLS \,--\, Trivial Failed Lineral Search}
  \label{alg:FLS}
\end{algorithm}

\begin{proposition}\label{prop:alg:FLS}
Let $F$ be a formula in $2$-XNF, and let $G$ be an acyclic implication graph structure for~$F$. 
Then \FLS\ is an algorithm which returns a set $L=\FLS(G)$
containing all trivial failed linerals of~$G$ satisfying $1+L \subseteq I_F$.

In particular, the algorithm can be implemented to run in $\calO(\#S\cdot(\#V+\#E))$ 
time and space, where~$S$ is the set of sources in~$(V,E)$.
\end{proposition}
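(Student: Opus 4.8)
The plan is to prove three things about \FLS: that it terminates, that its output $L_{\mathtt{TF}}$ is \emph{exactly} the set of trivial failed linerals of $G$ (which is stronger than the two assertions in the statement and implies both of them at once), and that it admits an implementation within the stated time and space bound. Termination is immediate: $S$, $V$ and $E$ are finite, the loops in Lines~3--5 and~6--7 range over finite sets, and each breadth-first search used to decide the reachability queries in Lines~4, 5 and~7 terminates. Hence the substance is correctness together with the complexity estimate, and I would treat soundness and completeness of the output separately.

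For soundness I would argue that every $f$ appended to $L_{\mathtt{TF}}$ in Line~7 is a common ancestor of some complementary pair $\{g,g+1\}$ with $(s,g)\in M$ --- this holds uniformly, whether $(s,g)$ was produced in Line~4 (with second component $g=s+1$) or in Line~5. From $f\to g$ and $f\to g+1$ we get $g,g+1\in D_f$, hence $1=g+(g+1)\in\Delta_f$, so $f$ is a failed lineral by Definition~\ref{def:failed}.a, and it is trivial with the witness of Definition~\ref{def:failed}.b taken to be $g$ itself. Lemma~\ref{lem:FXL} then yields $f+1\in I_F$. This shows simultaneously that $L_{\mathtt{TF}}$ consists of trivial failed linerals and that $1+L_{\mathtt{TF}}\subseteq I_F$; it is precisely the direction already recorded in Remark~\ref{rem:FXL_commonancestors}.a.

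For completeness I would first isolate the reformulation that $f\in V$ is a trivial failed lineral if and only if there is a path $f\to f+1$. The ``if'' direction is clear from Definition~\ref{def:failed} with witness $g=f$. For ``only if'', either the witness already gives $f\to f+1$, or it gives $g$ with $f\to g$ and $f\to g+1$; in the latter case the skew-symmetry axiom turns the path $f\to g$ edgewise into a path $g+1\to f+1$, and composing it with $f\to g+1$ yields $f\to f+1$. Now let $f$ be a trivial failed lineral. Since $(V,E)$ is acyclic and finite, walking backwards from $f$ along edges must reach a source $s$ with $s\to f$; then $f\in D_s$, and since $D_f\subseteq D_s$ and $f+1\in D_f$ we also have $f+1\in D_s$, so Line~5 puts $(s,f)$ into $M$. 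Processing $(s,f)$ in Line~7 appends all common ancestors of $f$ and $f+1$, and $f$ is one of them (via $f\to f$ and $f\to f+1$). Hence $f\in L_{\mathtt{TF}}$, and together with soundness this gives that $L_{\mathtt{TF}}$ equals the set of trivial failed linerals of~$G$.

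Finally, for the complexity bound: the set of sources $S$ is obtained in $\calO(\#V+\#E)$; for each of the $\#S$ sources, one forward breadth-first search computes $D_s$ in $\calO(\#V+\#E)$, and, with a lookup table realizing the complementation map $h\mapsto h+1$ and membership in $D_s$ in constant time, a single linear scan of $D_s$ collects the relevant pairs, so that Lines~3--5 cost $\calO(\#S\cdot(\#V+\#E))$ in total. The loop of Lines~6--7 is then arranged to stay within the same bound, for instance by de-duplicating the pairs of $M$ by their second component and computing, for each source $s$, the required common ancestors via backward searches confined to $D_s$ --- which is legitimate because, by the completeness argument, any trivial failed lineral already lies in $D_s$ for the source $s$ through which it is detected --- and the same accounting bounds the space. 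I expect the main obstacle to be twofold: conceptually, making the completeness reduction airtight, i.e. that launching searches only from the sources loses no trivial failed lineral, which genuinely uses acyclicity (every vertex has a source ancestor) and the ``path to the complement'' reformulation; and technically, pinning down an organization of Lines~6--7 that provably respects $\calO(\#S\cdot(\#V+\#E))$, since a naive per-pair computation of common ancestors would overshoot it.
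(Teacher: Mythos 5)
Your termination, soundness and completeness arguments are correct and follow essentially the paper's route, only in more detail. The paper's proof compresses correctness into the assertion that Line~7 produces exactly the common ancestors of $g$ and $g+1$ ``for all $g\in V$'' and cites Remark~\ref{rem:FXL_commonancestors}; the content you supply --- the reformulation that $f$ is a trivial failed lineral iff there is a path $f\to f+1$ (via skew-symmetry), and the acyclicity argument that every such $f$ has a source ancestor $s$, so that a pair with second component $f$ (or $s+1$, in the edge cases $f=s$ or $f=s+1$) enters $M$ and $f\in A_f\cap A_{f+1}$ is collected in Line~7 --- is precisely what the paper leaves implicit. Your use of Lemma~\ref{lem:FXL} for $1+L\subseteq I_F$ is the same as the paper's.

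On the complexity bound the two treatments diverge, and the obstacle you flag is real. The paper disposes of the bound in one line: it asserts $\#M\le\#S$, implements Lines~4--5 by a single forward BFS per source, and charges two reverse BFSs per pair for Line~7. But $\#M\le\#S$ does not hold for Line~5 as written: a single source can contribute $\Theta(\#V)$ pairs (already a source $s$ with two branches $s\to f_1\to f_1+1\to s+1$ and $s\to f_2\to f_2+1\to s+1$ yields several), so the per-pair accounting gives $\calO(\#M\cdot(\#V+\#E))$, which can exceed the claimed bound; conversely, keeping only one pair per source restores $\#M\le\#S$ but can miss trivial failed linerals (in the two-branch example, the pair $(s,s+1)$ alone finds $s$ but neither $f_1$ nor $f_2$), so the bound and the ``all trivial failed linerals'' claim are in tension in the paper itself. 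Your proposed repair --- de-duplicating $M$ by second components and confining backward searches to $D_s$ --- does not obviously close this either, since the number of distinct second components can still be $\Theta(\#V)$ for one source. So the correctness half of your proposal stands and matches the paper, but the complexity half remains open in your write-up, and you should be aware that the paper's own justification ($\#M\le\#S$) is not valid for the algorithm as stated; an honest statement would either bound the running time by $\calO(\#M\cdot(\#V+\#E))$ or modify the pair-selection and weaken the exactness claim.
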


\begin{proof}
The finiteness of the procedure is clear, since the graph $(V,E)$ is finite. 
The correctness follows from Remark~\ref{rem:FXL_commonancestors} and the
fact that in Line~$7$ the elements of $L_{\mathtt{TF}}$ are exactly the common
ancestors of vertices~$g$ and $g+1$ for all $g\in V$. 
Now $L+1\subseteq I_F$ follows immediately from Proposition~\ref{lem:FXL}. 
The claimed run-time complexity is a consequence of the observation $\#M\leq \#S$ 
and the facts that Lines~$4$-$5$ can be implemented by a single BFS starting at~$s$, 
and that Line~$7$ amounts to two BFSs starting at~$g$ and~$g+1$ on the graph with reversed
edges.
\end{proof}

To end this section we remark that our pre-processing algorithm~\PP\ is superior
to our in-processing methods in that it learns at least the same linear
information, but might also increase the number of edges of the implication graph.

\begin{remark}
Let $\sigma$ be a term ordering, let $F$ be a formula in $2$-XNF, let $G$ be
an IGS for~$F$, and let $G'=(L',V',E')=\PP_\sigma(G)$.
Then we have $1+\FLS(G) \subseteq \langle L'\rangleF $ and $G' \preceq \eGCP(G)$.
    
This follows immediately from the fact that $\FLS$ is based on
Lemma~\ref{lem:FXL} whose proof already shows that all failed linerals
are contained in $\Delta_f \cap \Delta_{f+1} \subseteq I_F$  for some $f\in V$. 
Thus these linerals are also found by~$\PP$. 

Similarly, Algorithm $\eGCP$ is based on Proposition~\ref{prop:alg:SCC} whose proof 
shows that all linear polynomials which can be learnt here are already contained in
$\Delta_f \cap \Delta_{f+1}$ for some $f\in V$. Once again, these linerals
are found and propagated by~$\PP$.
Altogether, we see that $\PP$ essentially emulates both $\eGCP$ and~$\FLS$. 
As a consequence, Algorithm $\PP$ also ensures that its output implication graph
structure~$G'$ is acyclic.
\end{remark}

While this shows that pre-processing with~\PP is more powerful than
in-processing with~\FLS and~\eGCP, keep in mind that it is also rather expensive
due to its polynomial runtime.

\begin{example}\label{ex:igs3}
Let $(L_1,V_1,E_1)$ be the IGS from Example~\ref{ex:igs2}, then we have $\FLS(L_1,V_1,E_1) = \{x_2\}$, since there is a path $x_2\to x_2+1$ (see Figure~\ref{fig:exigs2}). This shows that $x_2$ is a failed lineral and we get $x_2+1 \in I_F$. 
An application of $\GCP_\lex$ to $(L_1 \cup \{x_2+1\},V_1,E_1)$ yields the IGS $(L_2,V_2,E_2)$ where $L_2 = \{x_1+1,\,x_2+1\}$ and $(V_2,E_2)$ is given in Figure~\ref{fig:exigs3}.

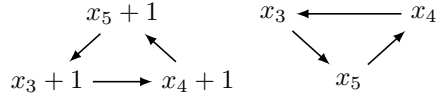
\begin{figure}
    \centering
    \tikzstyle{arrow} = [-latex,line width=0.6pt]
    \tikzstyle{revarrow} = [latex-,line width=0.6pt]
    \begin{tikzpicture}[yscale=0.9]
        \node (x3+1) at (0,0) {$x_3+1$};
        \node (x4+1) at (2,0) {$x_4+1$};
        \node (x5+1) at (1,1) {$x_5+1$};
        \draw[arrow] (x3+1) -> (x4+1);
        \draw[arrow] (x4+1) -> (x5+1);
        \draw[arrow] (x5+1) -> (x3+1);

        \begin{scope}[shift={(3,2)}]
        \node (x3) at (0,-1) {$x_3$};
        \node (x4) at (2,-1) {$x_4$};
        \node (x5) at (1,-2) {$x_5$};
        \draw[revarrow] (x3) -> (x4);
        \draw[revarrow] (x4) -> (x5);
        \draw[revarrow] (x5) -> (x3);
        \end{scope}
    \end{tikzpicture}
    \caption{Implication graph $(V_2,E_2)$ from Example~\ref{ex:igs3}.}
    \label{fig:exigs3}
\end{figure}

Notice that $(V_2,E_2)$ has two strongly connected components. Thus we can use Proposition \ref{prop:alg:SCC}.a with the SCC $\{x_3,\, x_4,\, x_5\}$ to get $x_3+x_5,\, x_4+x_5 \in I_F$.
Another application of $\GCP_\lex$ to $(L_2\cup\{x_3+x_5,\, x_4+x_5\},V_2,E_2)$ yields the IGS $(L_3,\emptyset,\emptyset)$ for~$F$ with $L_3 = \{x_1+1,\, x_2+1,\, x_3+x_5,\, x_4+x_5\}$.
By definition we now have $I_F = \langle L_3 \rangle$, i.e., a solution of~$F$ can be found by solving a system of linear equations.

Note that this is exactly the IGS that is also derived by applying~$\PP_\lex$ to $(L_0,V_0,E_0)$ from Example~\ref{ex:igs}.

\end{example}

%%%%%%%%%%%%%%%%%%%%%%%%%%%%%%%%%%%%%
%  Subsection 4.4
%%%%%%%%%%%%%%%%%%%%%%%%%%%%%%%%%%%%%

\subsection{Decision Heuristics}

Before we introduce our main DPLL-Solving Algorithm in the final subsection, we discuss
decision heuristics, i.e., methods to make \textit{good} guesses. First of all,
let us define what we precisely mean when we talk about \textit{decisions}.

\begin{definition}\label{def:decision}
Let $F$ be a formula in $2$-XNF, and let $G = (L,V,E)$ be an IGS for~$F$. 
A \textbf{decision} for~$G$ is a tuple $(L_0,L_1)$ with $L_0,L_1 \subseteq \LL_n$ such that
the following conditions are satisfied.
\begin{enumerate}
\item[(a)] $L_0\setminus \langle L\rangleF \ne \emptyset$ and
$L_1\setminus \langle L\rangleF \ne \emptyset$. 

\item[(b)] $\calZ(I_F) \subseteq \calZ(I_F+\langle L_0\rangle ) \cup 
\calZ(I_F+\langle L_1\rangle )$.
\end{enumerate}
\end{definition}

These conditions ensure that guessing either $L_0$ or $L_1$ will lead to a solution of~$F$
-- if there exists one at all. Moreover, it means that a decision $(L_0,L_1)$
for $G=(L,V,E)$ implies that $G_0=(L\cup L_0,V,E)$ and $G_1=(L\cup L_1,V,E)$
satisfy $G\succ G_0$ and $G\succ G_1$, i.e., both parts of the decisions improve
our implication graph structure.

Traditionally, CNF-based SAT solvers use decisions of the form $(\{x_i\},\, \{x_i+1\})$
or $(\{x_i+1\},\, \{x_i\})$. Our more general point of view on decisions allows us 
to guess multiple linerals at once. Before we explicitly suggest three decision heuristics,
let us consider the following general constructions.

\begin{proposition}\label{prop:decheu}
Let $\sigma$ be a term ordering, let $F$ be a formula in $2$-XNF, and let
$G=(L,V,E)$ be a $\sigma$-reduced IGS for~$F$.
\begin{enumerate}
\item[(a)] For every $f\in V$, the tuple $(D_f, D_{f+1})$ is a decision for~$G$.

\item[(b)] If $f_1\to \dots \to f_r$ is a path in~$G$  then 
$(\{f_1+f_i \mid i\in\{2\dots,r\}\},\, \{f_1+1,f_r\})$ is a decision for~$G$.
\end{enumerate}
\end{proposition}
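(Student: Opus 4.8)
The plan is to check, for each of the two tuples, the two defining properties~(a) and~(b) of a decision in Definition~\ref{def:decision}; both tuples are plainly pairs of subsets of $\LL_n$, so nothing else is needed. Property~(b) will in each case be obtained by evaluating at a point $a\in\calZ(I_F)$ and propagating $0$'s: by Lemma~\ref{lem:IGStransitivity} (or Proposition~\ref{prop:preprocessing}(a)), any path $h\to h'$ in~$G$ yields $(h+1)h'\in I_F$, so $h(a)=0$ forces $h'(a)=0$. Property~(a), the non-triviality of the two halves, will follow from $\sigma$-reducedness: since the elements of $L$ have pairwise distinct leading terms, every non-zero element of $\langle L\rangleF$ has its leading term in $\LT_\sigma(L)$, whereas $\LT_\sigma(L)$ meets no $\Supp(h)$ with $h\in V$. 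Consequently, for $h\in V$ (so $h\notin\FF_2$) both $h$ and $h+1$, which share the degree-one leading term of $h$, avoid $\langle L\rangleF$; and for distinct $f_1,f_2\in V$ the non-zero lineral $f_1+f_2$ has $\Supp(f_1+f_2)\subseteq\Supp(f_1)\cup\Supp(f_2)$ and hence also avoids $\langle L\rangleF$.

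For part~(a): property~(a) holds because $f\in D_f$, $f+1\in D_{f+1}$, and $f,f+1\notin\langle L\rangleF$ by the preceding remark. For property~(b), take $a\in\calZ(I_F)$ and distinguish the value $f(a)\in\FF_2$. If $f(a)=0$, then for every $g\in D_f$ we have $(f+1)g\in I_F$, hence $g(a)=(f(a)+1)g(a)=0$, so every generator of $I_F+\langle D_f\rangle$ vanishes at~$a$ and $a\in\calZ(I_F+\langle D_f\rangle)$. If $f(a)=1$, then for every $g\in D_{f+1}$ the path $f+1\to g$ gives $fg=((f+1)+1)g\in I_F$, hence $g(a)=f(a)g(a)=0$ and $a\in\calZ(I_F+\langle D_{f+1}\rangle)$. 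Thus $\calZ(I_F)\subseteq\calZ(I_F+\langle D_f\rangle)\cup\calZ(I_F+\langle D_{f+1}\rangle)$.

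For part~(b) put $L_0=\{f_1+f_i\mid i\in\{2,\dots,r\}\}$ and $L_1=\{f_1+1,f_r\}$; I expect property~(b) to be the main obstacle. The key is a monotonicity of the path values at a point $a\in\calZ(I_F)$: for $i<j$, Lemma~\ref{lem:IGStransitivity} applied to the sub-path $f_i\to\cdots\to f_j$ gives $(f_i+1)f_j\in I_F$, so $f_i(a)=0$ implies $f_j(a)=0$; hence $\{\,i\mid f_i(a)=1\,\}$ is an initial segment $\{1,\dots,j_0\}$ with $j_0\in\{0,\dots,r\}$. If $j_0\in\{0,r\}$ all $f_i(a)$ agree, so $(f_1+f_i)(a)=0$ for each $i\geq 2$ and $a\in\calZ(I_F+\langle L_0\rangle)$; if $0<j_0<r$ then $f_1(a)=1$ and $f_r(a)=0$, so $f_1+1$ and $f_r$ both vanish at~$a$ and $a\in\calZ(I_F+\langle L_1\rangle)$. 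For property~(a): $f_r\in V$ gives $f_r\in L_1\setminus\langle L\rangleF$, and since $(f_1,f_2)\in E$ axiom~(3) of Definition~\ref{def:igs} forces $f_1\neq f_2$, so $f_1+f_2\in L_0\setminus\langle L\rangleF$ (this is where one uses $r\geq 2$, which is implicit in the path notation $f_1\to\cdots\to f_r$). The only steps that require genuine care are this monotonicity argument and, for property~(a), the combination of the no-loop axiom with $\sigma$-reducedness that makes each half of the decision strictly enlarge $\langle L\rangleF$.
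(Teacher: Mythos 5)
Your proof is correct and takes essentially the same route as the paper's: $\sigma$-reducedness (via the leading-term argument) yields condition~(a) of Definition~\ref{def:decision}, and path transitivity (Lemma~\ref{lem:IGStransitivity} / Proposition~\ref{prop:preprocessing}) yields the covering condition~(b), your initial-segment monotonicity argument in part~(b) being just a combinatorial rephrasing of the paper's case split on the value of $(f_r+1)f_1$ at~$a$, and your direct evaluation in part~(a) replacing the paper's identity $I_F+\langle f\rangle = I_F+\langle D_f\rangleF$. If anything, your witness $f_1+f_2$ for $L_0\setminus\langle L\rangleF\neq\emptyset$, justified by the no-self-loop axiom, is slightly more careful than the paper's witness $f_1+f_r$, which tacitly assumes $f_1\neq f_r$.
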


\begin{proof}
Let $f\in V$. Then the fact that~$G$ is $\sigma$-reduced yields $f,f+1\notin \langle L\rangleF $.
For every $a \in \calZ(I_F)$, we have $f(a)=1$ or $f(a)=0$. This
shows that $(\{f\},\{f+1\})$ is a decision for~$G$. Now it suffices to note
that $I_F+\langle f\rangle  = I_F + \Delta_f = I_F + \langle D_f \rangleF$ by Proposition~\ref{prop:preprocessing}
and Remark~\ref{rem:motivation}.
    
Next we let $f_1\to\dots\to f_r$ be a path as in~(b). Then we have $f_1+f_r, f_r \notin
\langle L\rangleF $, since $G$ is $\sigma$-reduced. 
Consider a point $a\in\calZ(I_F)$. If $((f_r + 1)f_1)(a) = 1$ then $f_r(a)=0$ and
$f_1(a)=1$, i.e., we have $a\in \calZ(I_F + \langle f_1+1, f_r\rangle )$. 
Otherwise, we have $((f_r+1)f_1)(a)=0$. In this case~$a$ is a zero of
$(f_r+1)f_1$. Using Proposition~\ref{prop:preprocessing}, we deduce from
$a\in\calZ(I_F)$ that $a$ is a zero of $(f_i+1)f_j$ for all
$i,j\in\{1,\dots,r\}$ with $i<j$, as there is a path $f_i\to f_j$ in~$G$.

It follows that $(f_i+1)f_r\cdot f_1 + (f_r+1)f_1\cdot (f_i+1) = (f_i+1)f_1$ vanishes
at~$a$ for all $i\in\{1,\dots,r\}$. This shows that the point~$a$
is a zero of $(f_1+1)f_i + (f_i+1)f_1 = f_1+f_i$ for all $i\in\{2,\dots,r\}$. 
Finally, we get that $a\in\calZ(I_F + \langle f_1+f_i \mid i\in\{2,\dots,r\} \rangle )$, 
and the claim follows.
\end{proof}

This proposition allows us to introduce several simple decision heuristics.
In the next section, we will see that they prove quite effective on certain types of inputs.

\begin{remark}[Decision Heuristics]\label{rem:decheu}
Let $\sigma$ be a term ordering, let $F$ be a 2-XNF formula, and let $(L,V,E)$
be a $\sigma$-reduced acyclic IGS.

\textbf{MaxReach}. Find a source $f\in V$ such that the number of paths
      starting at $f\in V$ is maximal. Then we consider the decision
      $(D_f,\{f+1\})$. Since~$f$ is a source, the vertex~$f$ has no in-going edges. 
      Thus the skew-symmetry of~$G$ implies that $f+1$ has no out-going edges.
      This yields $D_{f+1}=\{f+1\}$.
      
\textbf{MaxBottleneck}. Instead of focusing on the first part of the
      decisions, another approach is to find $f\in V$ such that the sum of the
      number of paths ending in~$f$ and the number of paths starting at~$f$ is
      maximal. Then we consider the decision $(D_f, D_{f+1})$.
      
\textbf{MaxPath}. Let $f_1\to\dots\to f_r$ be a maximal path in~$(V,E)$. Then
      we consider the decision $(\{f_1+f_i \mid 1\leq i\leq r\},\, \{f_1+1,
      f_r\})$. Conceptually speaking, this means that instead of guessing vertices in
      the graph, we guess the edge $f_r\to f_1$, i.e., the polynomial $(f_r+1)f_1$. 
      In view of the proof of Proposition~\ref{prop:decheu} and of Remark~\ref{rem:motivation},
      this yields a strongly connected component of~$(V,E)$.
\end{remark}

While the first two of these heuristics are close to the classical approach to
decisions, the MaxPath heuristic is a rather new one. Note, however, that
these heuristics are just some initial suggestions and should be combined with
well-studied heuristics of established CDCL SAT solvers. Unfortunately, the adaptions 
of those heuristics to linerals are not straightforward.

The heuristics suggested in the previous remark are designed such that we can
compute them efficiently, i.e., in linear time and space. Let us give some more
information on how this can be done.

\begin{remark}[Efficient Implementation of Decision Heuristics]\label{rem:decheu_impl}
Recall that a topological ordering of a directed acyclic graph $(V,E)$ is a
linear ordering $\triangleleft$ of~$V$ such that $(f,g)\in E$ implies
$f\triangleleft g$, and that such an ordering can be computed in linear time and space (see~\cite{APT}).

\textbf{MaxReach}. For $f\in V$, denote the number of paths starting at~$f$
    by~$p_f$. Then we have $p_f = 1 + \sum_{(f,g)\in E}\, p_g$ for every $f\in V$. 
    This means that traversing the graph in a
    reverse topological order once allows us to find $p_f$ for all $f\in V$.
    In particular, the vertex $f\in V$ which maximizes $p_f$ can be found in linear time.
        
\textbf{MaxBottleneck}. Similarly, we can find the number of paths
    ending in each vertex $f\in V$ by a single traversal of the graph in
    topological order. Thus the vertex $f\in V$ which has the most paths
    starting and ending in~$f$ can be found by a total of two graph traversals.
        
\textbf{MaxPath}. For $f\in V$, denote the length of the longest path
    starting at~$f$ by~$\ell_f$. Then we have $\ell_f = 1+\max_{(f,g)\in E} \ell_g$.
    The value $\ell_f$ for all $f\in V$ can now be computed by a single
    traversal of the graph in a reverse topological order. By storing the vertex $g\in V$ 
    for which $\ell_g$ is largest at every $f\in V$ with $(f,g)\in E$, the
    path of length $\ell_f$ starting at~$f$ can be recovered in linear time.
    Altogether, the MaxPath heuristic can be implemented in linear time and space.
\end{remark}

%%%%%%%%%%%%%%%%%%%%%%%%%%%%%%%%%%%%%%%
%  Subsection 4.5
%%%%%%%%%%%%%%%%%%%%%%%%%%%%%%%%%%%%%%%

\subsection{Graph-based 2-XNF DPLL-Solving}

Finally, we have all the tools at our disposal to present our graph-based 2-XNF solver
which is based on the well-known DPLL-technique (see~\cite{DLL}).
\smallskip

\begin{algorithm}[ht]
  \DontPrintSemicolon
  \SetAlgoLongEnd
  \SetKwInOut{Input}{Input}
  \SetKwInOut{Output}{Output}
  \Input{An IGS $(L,V,E)$ for a formula~$F$ in 2-XNF, a term ordering~$\sigma$.}
  \Output{\UNSAT\ or an assignment $a\in\calS(F)$.}
  \BlankLine
  Let $(L,V,E)=\eGCP_\sigma(L,V,E)$. \tcp*{propagation}
  Let $L_{\mathtt{FL}} = \FLS(L,V,E)$ and adjoin $L_{\mathtt{FL}}$ to~$L$. \tcp*{in-processing}
  \lIf{$L_{\mathtt{FL}}\neq\emptyset$}{go to Line~$1$.}
  \lIf{$1\in\langle L \rangleF$}{\Return\ \UNSAT}
  \lIf{$E=\emptyset$}{\Return $a\in\calZ(L) \,\subseteq\, \FF_2^n$.}
  Use Remark~\ref{rem:decheu} to compute a decision $(L_0,L_1)$ for~$(L,V,E)$. \tcp*{decision}
  \lIf{$\SOLVER_{\sigma}(L\cup L_0,V,E)$ returns $a\in\FF_2^n$}{\Return\ $a$.}
  \lElse{\Return $\SOLVER_{\sigma}(L\cup L_1,V,E)$.}
  \caption{\SOLVER \,--\, Graph-Based 2-XNF DPLL-Solver}
\end{algorithm}

\begin{proposition}\label{prop:alg:solver}
Let~$F$ be a formula in 2-XNF with an implication graph structure $(L,V,E)$, and let~$\sigma$ be a term ordering. Then $\SOLVER$
is an algorithm which returns $\UNSAT$ if and only if $\calS(F)=\emptyset$.
Otherwise, it returns an element $a\in\calS(F)$.
\end{proposition}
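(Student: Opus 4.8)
The plan is to prove termination and correctness separately, with termination being the more delicate part because of the recursion interleaved with the two \texttt{go to} statements.

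\smallskip

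\textbf{Termination.} First I would argue that the loop formed by Lines~1--3 (the propagation / in-processing loop entered via the \texttt{go to} in Line~3) terminates. Each pass through Lines~1--2 replaces the current IGS by $\eGCP_\sigma$ of it, which by Proposition~\ref{prop:alg:eGCP}.b only makes it $\preceq$-smaller, and then adjoins $L_{\mathtt{FL}} = \FLS(L,V,E)$; by Proposition~\ref{prop:alg:FLS} we have $1+L_{\mathtt{FL}} \subseteq I_F$, so these are genuine linear polynomials of $I_F$, and the jump back to Line~1 only happens when $L_{\mathtt{FL}} \ne \emptyset$. One must check that adjoining a nonempty $L_{\mathtt{FL}}$ strictly enlarges $\langle L\rangleF$: after $\eGCP_\sigma$ the IGS is $\sigma$-reduced and acyclic, and a failed lineral $f\in V$ satisfies $f\notin\langle L\rangleF$ (its leading term is not among $\LT_\sigma(L)$ since $\Supp(f)\cap\LT_\sigma(L)=\emptyset$), while $f+1\in I_F$; hence $\langle L\cup L_{\mathtt{FL}}\rangleF \supsetneq \langle L\rangleF$, so we get a strict descent in $\preceq$ and Lemma~\ref{lem:DCC} forbids infinitely many such jumps. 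Thus control reaches Lines~4--6 after finitely many steps. For the recursion, I would induct on $\dim_{\FF_2}\BB_n / \langle L\rangleF$ (equivalently on $n - \dim_{\FF_2}\langle L\rangleF + 1$): a decision $(L_0,L_1)$ produced via Remark~\ref{rem:decheu} satisfies condition~(a) of Definition~\ref{def:decision}, i.e.\ $L_0\setminus\langle L\rangleF\ne\emptyset$ and likewise for $L_1$, so each recursive call in Lines~7--8 is made on an IGS whose linear span is strictly larger; the ambient space being finite-dimensional, the recursion depth is bounded, and since each call is itself finite by induction, \SOLVER{} terminates.

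\smallskip

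\textbf{Partial correctness.} The invariant to maintain is that $(L,V,E)$ is always an IGS for~$F$, so that $\calZ(I_F) = \calZ(\langle L\rangle + \langle fg\mid (f+1,g)\in E\rangle)$ and in particular $\langle L\rangle\subseteq I_F$; this is preserved by Lines~1--2 (Propositions~\ref{prop:alg:eGCP} and~\ref{prop:alg:FLS}) and, inside the recursive calls, by the decision property. If the algorithm returns \UNSAT{} in Line~4, then $1\in\langle L\rangleF\subseteq I_F$, hence $I_F=\langle 1\rangle$ and $\calS(F)=\calZ(I_F)=\emptyset$. If it returns $a\in\calZ(L)$ in Line~5, then $E=\emptyset$ forces $I_F=\langle L\rangle$, so $\calZ(L)=\calZ(I_F)=\calS(F)$ and $a$ is a genuine satisfying assignment; here one should note that after Line~1 the set $L$ is $\LT_\sigma$-interreduced, so solving the linear system $\calZ(L)$ and producing a point is straightforward and the set is nonempty exactly when $1\notin\langle L\rangleF$, which is guaranteed because we did not return in Line~4. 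For the recursive branches: by Definition~\ref{def:decision}.b we have $\calZ(I_F)\subseteq \calZ(I_F+\langle L_0\rangle)\cup\calZ(I_F+\langle L_1\rangle)$, and $I_F+\langle L_i\rangle$ is exactly the algebraic representation attached to the IGS $(L\cup L_i, V, E)$; by the inductive hypothesis the call $\SOLVER_\sigma(L\cup L_0,V,E)$ returns a point of $\calS$ of that branch (equivalently of $\calZ(I_F+\langle L_0\rangle)\subseteq\calS(F)$) if one exists and \UNSAT{} otherwise, and similarly for $L_1$. Hence if the first call succeeds we return a point of $\calS(F)$; if it returns \UNSAT{} then $\calZ(I_F+\langle L_0\rangle)=\emptyset$, so $\calZ(I_F)\subseteq\calZ(I_F+\langle L_1\rangle)$, and the second call then either produces a point of $\calS(F)$ or correctly reports \UNSAT. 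This shows \SOLVER{} returns \UNSAT{} iff $\calS(F)=\emptyset$ and otherwise returns an element of $\calS(F)$.

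\smallskip

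\textbf{Main obstacle.} The subtle point is the bookkeeping around the \texttt{go to} in Line~3 versus the recursion: one has to make sure the $\preceq$-descent argument for the Lines~1--3 loop does not spuriously fail (it does not, precisely because $\eGCP_\sigma$ followed by a \emph{nonempty} failed-lineral set is a strict $\preceq$-descent), and that the measure used for the recursion — strict growth of $\langle L\rangleF$ — is actually forced by part~(a) of the definition of a decision, so it is essential that every heuristic in Remark~\ref{rem:decheu} really produces a decision in the sense of Definition~\ref{def:decision}; this is exactly what Proposition~\ref{prop:decheu} provides, and the argument above silently relies on it. Everything else is a routine unwinding of the invariant.
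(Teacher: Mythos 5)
Your proposal is correct and follows essentially the same route as the paper's proof: termination of the Lines~1--3 loop via strict $\prec$-descent and Lemma~\ref{lem:DCC}, termination of the recursion via the strict growth of $\dim_{\FF_2}\langle L\rangleF$ (bounded by $n+1$) forced by Definition~\ref{def:decision}.a, and correctness by induction on this dimension using Definition~\ref{def:decision}.b, with the Line~4/Line~5 cases as base cases; your added justification that a failed lineral lies outside $\langle L\rangleF$ because the IGS is $\sigma$-reduced is a detail the paper leaves implicit. The only slight imprecision is phrasing the invariant as ``$(L,V,E)$ is an IGS for $F$'' inside the recursive calls, where it is really an IGS for the branch formula with ideal $I_F+\langle L_i\rangle$ -- but your subsequent treatment of the branches handles exactly this, as the paper does.
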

\begin{proof}
First notice that Line~1 ensures that the IGS $(L,V,E)$ is always $\sigma$-reduced and acyclic. Hence Line~6 can be performed efficiently, as explained in Remark~\ref{rem:decheu_impl}.

Next we show the finiteness of the procedure.
In every iteration of Lines~$1$-$3$ where $L_{\mathtt{FL}}\neq\emptyset$ the IGS $(L,V,E)$ decreases strictly
w.r.t.~$\prec$. By Lemma~\ref{lem:DCC}, we eventually reach $L_{\mathtt{FL}}=\emptyset$ in Line~$3$,  and the loop stops after finitely many steps.
For the finiteness of the recursive calls observe that if $(L_0,L_1)$ is a decision for $(L,V,E)$ as in Line~6, then $\dim_{\FF_2} \langle L \rangleF < \dim_{\FF_2} \langle L\cup L_0 \rangleF$ and $\dim_{\FF_2} \langle L \rangleF < \dim_{\FF_2} \langle L\cup L_1 \rangleF$.
This means that the dimension of $\langle L\rangleF$ increases strictly with every recursive call. Now it suffices to note that this dimension is bounded from above by $n+1$, and in case $\dim_{\FF_2} \langle L\rangleF = n+1$ we have $1\in\langle L\rangleF$, i.e., the procedure terminates in Line~4.

To prove correctness,
note that if the algorithm terminates in Line~4, then $F$ cannot have any solution since $\langle L\rangleF \subseteq I_F$. Similarly, if it terminates in Line~5, the implication graph must be empty and we get $I_F=\langle L\rangle$, i.e., $a\in\calZ(L)=\calZ(I_F) = \calS(F)$.
Next we show by induction on~$d$ that the output in all lines is correct if $\dim\langle L\rangleF = d$ for $d\in\{0,\dots,{n+1}\}$.
Note that $\dim_{\FF_2}\langle L\rangleF = n+1$ implies $1\in\langle L\rangleF \subseteq I_F$, i.e., the algorithm terminates already in Line~4 and is correct by the above.
Now suppose that the algorithm terminates correctly if $\dim_{\FF_2}\langle L\rangleF > s$ for some $s\in\{0,\dots,n\}$ and let $\dim_{\FF_2} \langle L\rangleF {= s}$. 
It suffices to consider the case where the algorithm terminates in Lines~7 or~8. Note that by definition of the decision $(L_0,L_1)$ from Line~6 we have $\calZ(I_F)\subseteq \calZ(I_F +\langle L_0\rangle ) \cup \calZ(I_F +\langle{L_1}\rangle)$, and as above the dimension of $\langle L\rangleF$ is strictly smaller than the dimensions of $\langle L\cup L_0\rangleF$ and $\langle L\cup L_1\rangleF$, respectively. 
Thus the recursive call in Line~7 terminates correctly, i.e., returns \UNSAT if and only if $\calZ(I_F+\langle{L_0}\rangle) =\emptyset$, otherwise it returns $a\in\calZ(I_F+\langle{L_0}\rangle) \supseteq\calZ(I_F)$. 
If the algorithm does not terminate here, then we must have $\calZ(I_F+\langle L_0\rangle) = \emptyset$ and the algorithm terminates with the recursive call in Line~8. Analogous to the call in Line~7, we get \UNSAT if and only if $\calZ(I_F+\langle L_1\rangle)=\emptyset$, which occurs if and only if $\calZ(I_F)\subseteq \calZ(I_F +\langle L_0\rangle ) \cup \calZ(I_F +\langle{L_1}\rangle) = \emptyset$. Otherwise it returns a satisfying assignment $a\in\calZ(I_F+\langle L_1\rangle) \,\supseteq\, \calZ(I_F) = \calS(F)$ of~$F$.
\end{proof}

%%%%%%%%%%%%%%%%%%

To obtain an efficient implementation we need appropriate data structures which
support fast backtracking. The following method allows us to avoid creating a copy of the entire
implication graph structures in the recursive calls of Lines~$7$ and~$8$.

\begin{remark}[Data Structures for Implication Graph Structures]\label{rem:DataStructures}
In order to store an implication graph structure $(L,V,E)$ internally, it is
beneficial to actually store a graph $(V',E')$ based on integer vertices $V'\subseteq\ZZ$ 
and a map $\lambda:\; V'\to \LL_n$ such that $V=\lambda(V')$ and such that
$E = \{\,(\lambda(v),\lambda(w)) \mid (v,w)\in E'\}$. 

Let us suggest two data structures, one for the labeling map~$\lambda$, 
and one for the graph $(V',E')$ which are tailored towards efficient backtracking.
\begin{enumerate}
\item[(a)] To efficiently represent~$\lambda$, we use a \textit{prefix tree}, 
i.e., a tree whose non-root vertices are elements of $\{1,x_1,\dots,x_n\}$, where the 
children of every node are bigger than their parent w.r.t.~a term ordering $\sigma$, 
and where the root is $t_0=0$.
Then every vertex $v\in V'$ is associated to a vertex $\eta(v)$ of
the tree such that the unique path starting at the root 
$t_0\to \dots \to t_r=\eta(v)$ satisfies $\lambda(v) = t_0+\dots+t_r$.

Note that insertion can be performed in amortized linear time in the size of 
$\Supp(\lambda(v))$ if the children are accessed by hash maps, and deletion can be performed in constant time.
If~$\lambda$ needs to be copied, it suffices to copy $\eta(v)$ for every $v\in V'$. 
The actual linear polynomials $\lambda(v)$ are not copied. For the backtracking, 
we simply replace $\eta$ internally, and the previous $\lambda$ is restored immediately.

\item[(b)] For the graph itself, we suggest to use a modified \textit{lean
hybrid graph representation}, as devised in \cite{AJM,AKPWS,ALN}.
This data structure was proposed only for undirected graphs, but an
extension to directed skew-symmetric graphs is possible. The data
structure is rather advanced and allows backtracking of edge deletions
and vertex contractions in constant time. In particular, it allows us to
store any state of the graph with a space complexity of $\calO(\#V)$.
Backtracking to such a previous state has complexity $\calO(\#V)$.
\end{enumerate}
Altogether, it is possible to implement the algorithm with a space complexity
of $\calO((n+1)\cdot \#V + \#E)$, where $(V,E)$ is part of the initial trivial
IGS.
\end{remark}

Notice that \SOLVER\ is based on the well-known DPLL framework. An extension to a
conflict-driven clause learning (CDCL) directive encounters the following obstacles. 

\begin{remark}[Conflict-Driven XNF Clause Learning]
Although the 2-XNF theory originates from the~\SRES\ proof system which in turn is
a generalization of classical resolution, it is not easy to extend conflict-driven 
clause learning to 2-XNF instances. This is mainly due to two problems: 
\begin{enumerate}
\item[(1)] The resolvent of two clauses may be the \textit{zero clause}, i.e., 
resolving the conflict clause may lead to a clause that is trivially satisfied 
(see~\cite{Hor}).

\item[(2)] In general, the resolvent is not in 2-XNF, i.e., it cannot be added to 
the implication graph structure in a straightforward way.
\end{enumerate}
Overcoming these obstacles is an important objective of future research, because
CDCL techniques promise significant speed-ups of XNF solvers. 
\end{remark}

%%%%%%%%%%%%%%%%%%%%%%%%%%%%%%%%%%%%%%%%%%%%%%%%%%%%%%%%%%%%%%%%
%
%  Section 5: Experiments and Timings
%
%%%%%%%%%%%%%%%%%%%%%%%%%%%%%%%%%%%%%%%%%%%%%%%%%%%%%%%%%%%%%%%%

\section{Experiments and Timings}
\label{sec:timings}

In this section we evaluate the methods of Section~\ref{sec:solving} on random
2-XNF instances and on instances coming from round-reduced \Asconnb\ key-recovery attacks.
For comparison, we ran our \cpp implementation of Algorithm \SOLVER,
which we named \txnfsolver, against SAT solvers with XOR support, i.e., CNF-based SAT solvers
that can read and process XOR constraints on the variables natively. We say that
formulas of the type processed by these solvers are in CNF-XOR. 

State-of-the-art SAT solvers that support CNF-XOR input are~\cms
(see~\cite{SNC}), an established CDCL-based solver, and~\xnfsat
(see~\cite{NLFHB}), which is based on a \textit{stochastic local search}
approach, i.e., it can only be used on satisfiable instances. (Note that \xnfsat,
despite its name, cannot work with XNFs as introduced in this article. It only
supports CNF-XOR instances.)
To use these solvers on XNF instances, we use the following reduction.

\begin{remark}\label{rmk:XNFtoXCNF}
Let $F$ be a 2-XNF formula involving $n$ variables.
Then we can write the XNF clauses of~$F$ as $C_1,\dots,C_r$, $L_1,\dots,L_s$,
where $C_i = L_{i,1} \lor L_{i,2}$ with linerals $L_{i,j}$, and where
$L_1,\dots,L_s$ are already linerals.
Now we introduce $2r$ additional variables $Y_{i,j}$ and consider the CNF-XOR formula~$G$ 
consisting of the clauses $C'_i = Y_{i,1} \lor Y_{i,2}$, the XOR
constraints $\neg Y_{i,j}\oplus L_{i,j}$ for $i\in\{1,\dots,r\}$ and
$j\in\{1,2\}$, and the original XOR constraints $L_1,\dots,L_s$. Then we
have $\calZ(F) \equiv_n \calZ(G)$.
\end{remark}

Furthermore, a 2-XNF instance can also be seen as a system of quadratic Boolean
polynomial equations that can be solved by an algebraic solver
such as \PolyBoRi (see~\cite{BD}). This package offers an implementation of the
Buchberger algorithm adapted to Boolean polynomial rings and employs highly
optimized data structures. Additionally, we consider the solver \Bosphorus (see~\cite{CSCM}),
which employs both algebraic and logical reasoning, and processes ANF (and CNF) input.
For instances with fewer than $40$ variables, we also compare the solvers to
\xnfbf, our \cpp implementation of a \textit{brute-force} XNF solver.
Finally, we also consider the winner of the 2023 SAT competition~\sbva (see~\cite{HGH}) which processes CNF inputs.
The CNF files were generated from the CNF-XOR representation by converting the additional XOR constraints on the variables to a set of CNF clauses. Since a direct encoding of long XORs results in exponentially many CNF clauses, they are split using new variables such that we only consider direct encodings of XOR constraints involving at most~$5$ variables.
This corresponds to a linear encoding with cutting number~$5$ (see~\cite{NLFHB}).
\smallskip

All experiments were run on an \textit{Intel Xeon E5-2623 v3} processor with
\textit{128}GB of RAM under Debian \textit{10}. We used \cms version~$5.8$,
\xnfsat version~$03v$, \Bosphorus version~$3.0$, and~\sbva with~\cadical~2.0 (see~\cite{BFFH}).

\subsection*{Random 2-XNF Clauses}

First we consider \textit{random} 2-XNF instances involving $n$ variables and $m$
clauses. Every clause in the formula is generated by picking two linerals
uniformly at random in $\LL_n\setminus\FF_2$.
With $m=3\cdot n$ and $n\in\{21,\dots,40\}$, experiments showed that such an
instance is \UNSAT with a probability of at least~$98\%$. If a solution is
desired, we simply choose $a\in\FF_2^n$ at random and for every clause that is
not satisfied by~$a$, we randomly flip the constant of one of the two linerals. 
This ensures that $a$ indeed forms a satisfying assignment of the
generated 2-XNF instance.
Two random benchmark suites are considered, each containing $400$ random
instances with $n\in\{21,\dots,40\}$ variables in $m=3\cdot n$ clauses, where we
have $20$ instances for every~$n$. One set contains only satisfiable instances,
the other only unsatisfiable ones.

\begin{figure}
    \centering
    \begin{subfigure}[b]{0.85\textwidth}
        \centering
        \includegraphics[width=\textwidth]{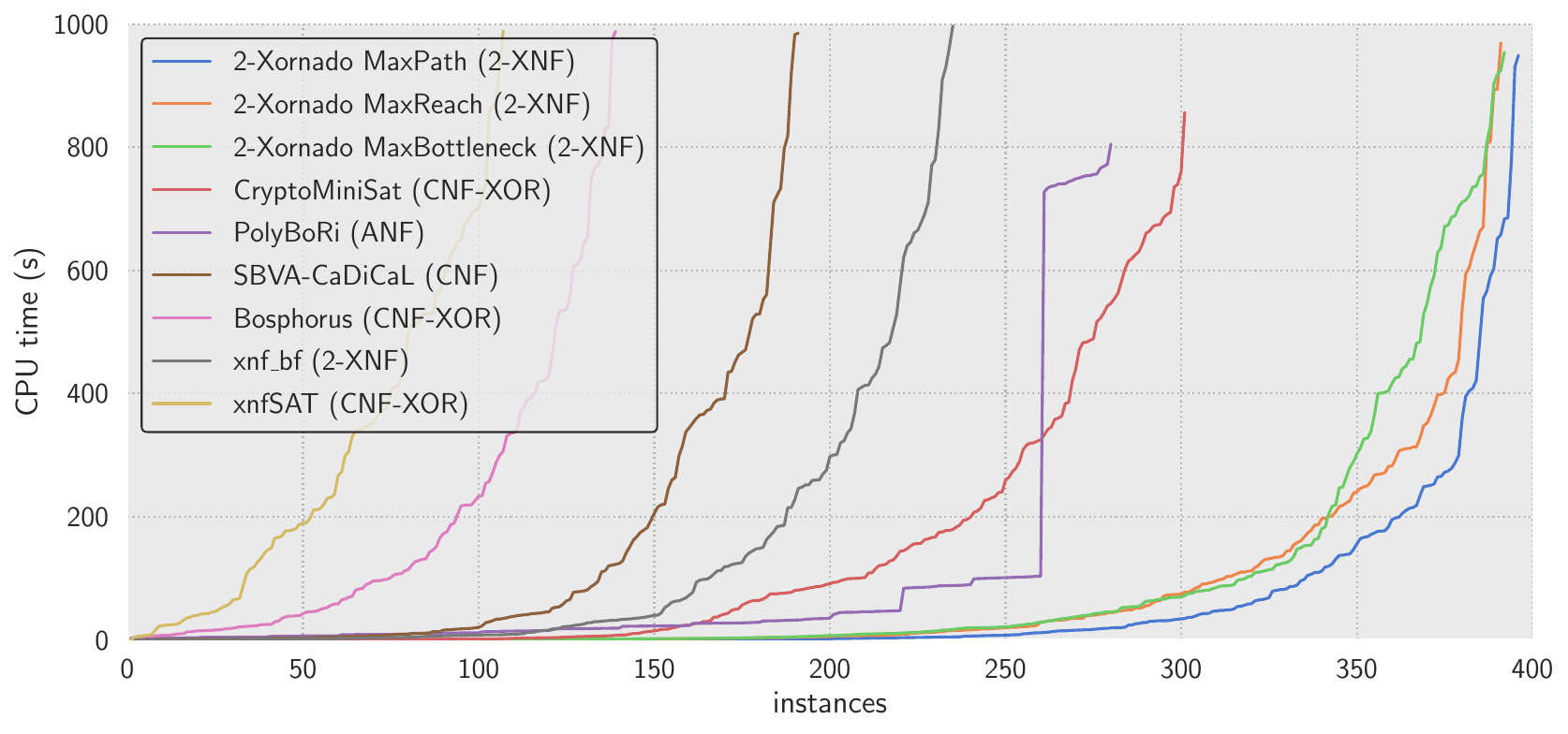}
        \caption{
          Benchmark suite consisting of $400$~random \textit{satisfiable} 2-XNF
          instances in $n$ indeterminates and $3n$ clauses
          where~$n\in\{21,\dots,40\}$.
        }
        \label{fig:cactus_sat}
    \end{subfigure}
    \vfill
    \begin{subfigure}[b]{0.85\textwidth}
        \centering
        \includegraphics[width=\textwidth]{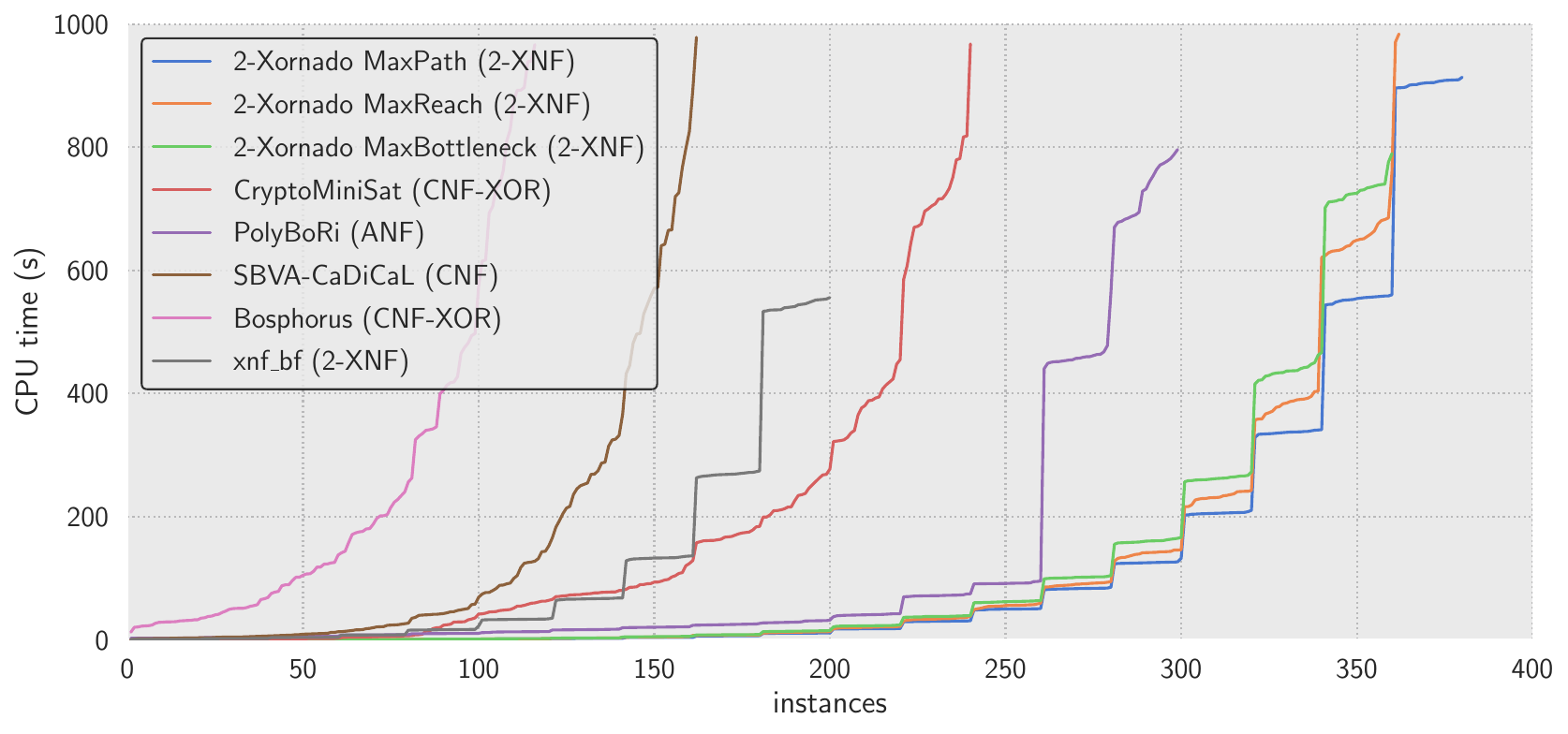}
        \caption{
          Benchmark suite consisting of $400$~random \textit{unsatisfiable}
          2-XNF instances in $n$ indeterminates and $3n$ clauses
          where~$n\in\{21,\dots,40\}$.
        }
        \label{fig:cactus_unsat}
    \end{subfigure}
    \caption{
      Cactus plots for the random benchmark suites.
      %comparing our solver
      %\txnfsolver equipped with different decision heuristics to a brute-force
      %approach \xnfbf, the CNF-XOR SAT solvers~\xnfsat, \cms, an algebraic
      %solver based on \PolyBoRi, and~\Bosphorus.
    }
    \label{fig:bench_rand}
\end{figure}

The cactus plots in Figure~\ref{fig:bench_rand} show that such small random
instances are hard for state-of-the-art CNF and CNF-XOR solvers.
In particular, we see that~\xnfsat and~\sbva are even out-performed by a simple brute-force implementation.
Algebraic solving with \PolyBoRi performs not significantly worse than \cms.
While the plot clearly shows that \txnfsolver performs best on this benchmark,
one should note that this is not due to some clever data structures that allow
very fast propagation. The main reason for its better overall performance is that
the required number of decisions of \txnfsolver (with any heuristic) is smaller
by a factor of $60$-$80$ compared to the number of decisions taken by~\cms.

\subsection*{Round-Reduced Ascon Key Recovery Attacks}

Our second benchmark set consists of instances related to key-recovery attacks
on round-reduced versions of the cipher \Asconnb\
(see~\cite{DEMS}). In particular, we consider attacks where the
$128$-bit nonce and the $320$-bit internal state are known and the goal is to
\textit{undo} the initialization step consisting of $12$~rounds in order to obtain the
$128$-bit secret key. If this problem can be solved efficiently, the cipher is
broken in a nonce-misuse scenario, see~\cite{BCP}. Here we
consider round-reduced variants: $20$~instances with $2$ rounds, $20$~instances
with $3$ rounds and knowledge of the first~$k$ key bits for each
$k\in\{55,\dots,64\}$, and $20$~instances with $4$ rounds and knowledge of the
first~$k$ key bits for each $k\in\{92,\dots,100\}$. The instances were generated
by applying \QAnfToTXnf{} to a polynomial representation of the cryptosystem,
see Example~\ref{exa:ascon_sbox}, augmented with some additional XNF clauses, which
speed up propagation in~\GCP.

\begin{figure}
    \centering
    \includegraphics[width=0.85\textwidth]{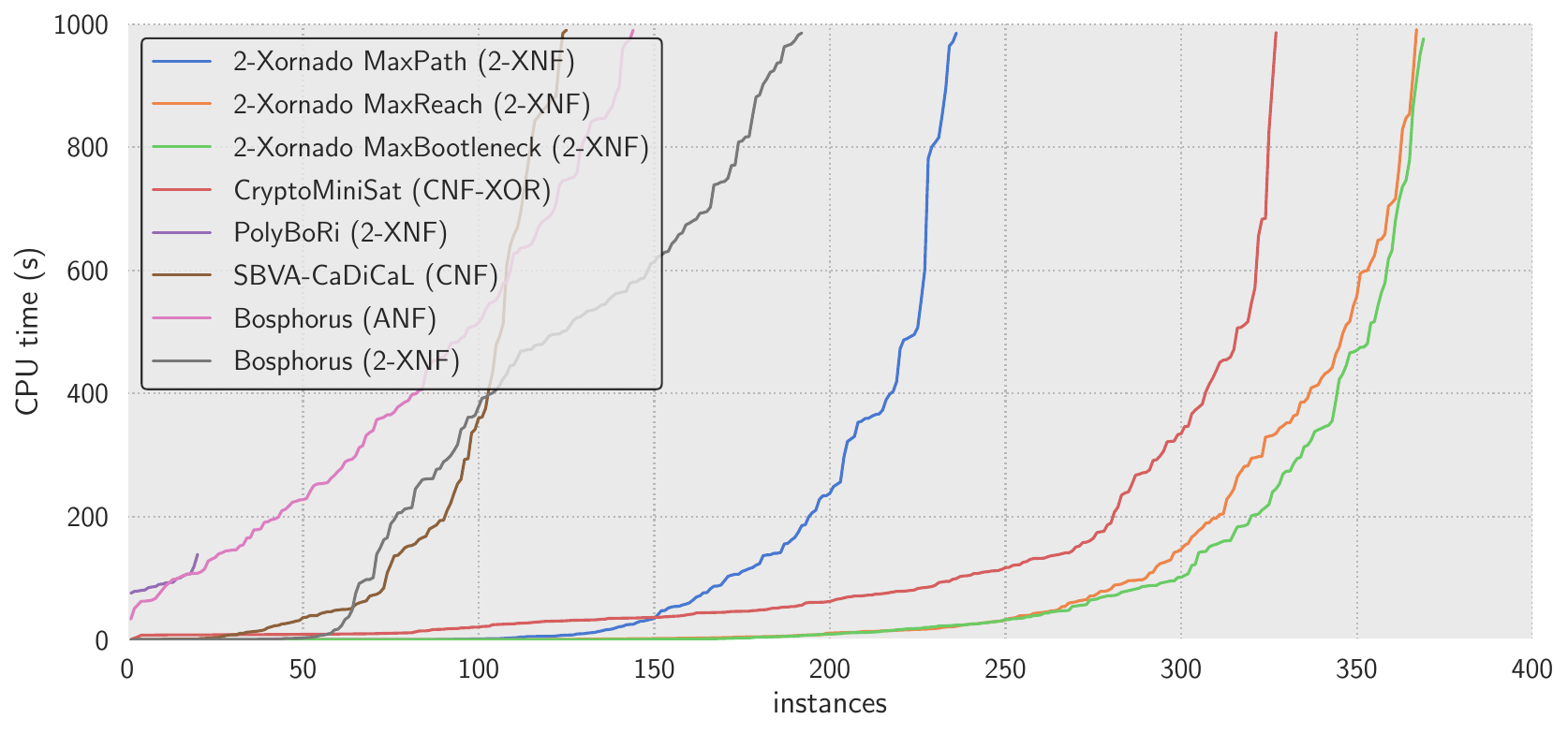}
    \caption{
      Cactus plot for the benchmark suite consisting of $400$~satisfiable
      instances related to key-recovery attacks on round-reduced \Asconnb.
    }
    \label{fig:bench_rr_ascon}
\end{figure}

Figure~\ref{fig:bench_rr_ascon} contains a cactus plot for these cryptographic
instances. Here \xnfsat was not included due to its bad performance on the random
set. It turns out that for the 2-round version \txnfsolver can already solve all 
instances (starting with the trivial IGS) during pre-processing in less than $0.3$ seconds on average.
On these instances, \cms requires more than $80\,000$ decisions and several seconds; and \sbva about $20\,000$ decisions and one second.
Our solver \txnfsolver with the MaxBottleneck or the MaxReach heuristic and in-processing
with \FLS also performs very well on the remaining benchmark and comes out as the average best solver.
The bad performance with the MaxPath heuristic may be attributed to the fact 
that the corresponding decision linerals contain more variables and therefore \eGCP execution 
requires more time, increases the average length of linerals of the implication graph vertices, 
i.e., increases its memory footprint, and thereby makes backtracking more expensive.
It should be noted that on the 4-round instances the CNF-XOR solver \cms 
had a better performance with fewer timeouts.
So its advanced decision heuristics, the highly optimized data structures, and the conflict-learning methods 
do pay off on larger instances. 
Nonetheless \txnfsolver still requires fewer decisions by a factor of $60$-$80$.
The CNF-SAT solver~\sbva, however, with data structures and conflict-learning methods similar to~\cms could only solve $8$ of these instances.
This highlights the effectiveness of the encodings of these innately XOR-rich problems in CNF-XOR and XNF.
Also note that \PolyBoRi could not solve a single instance when given in its ANF format,
however when feeding it with the system of quadratic equations corresponding to the 2-XNF,
some instances could be solved. The situation for~\Bosphorus is similar, with better 
performance on the input that comes from our XNF encoding.

\section*{Conclusions}

A generalization of the well-known CNF that
allows compact representations of XOR-rich problems, like those originating from
cryptographic attacks, has been introduced. On top of that we generalized pre- and in-processing
techniques and introduced a DPLL-based solving algorithm with a simplistic decision
heuristic that outperforms other state-of-the-art solving approaches on random instances and
on problems originating from cryptographic attacks on \Asconnb. An extension to CDCL-based
solving is in preparation and better decision heuristics will be investigated.

\medskip\noindent
{\bf Acknowledgements.} During part of this research, the second author was supported by
the DFG project \textit{Algebraische Fehlerangriffe} KR 1907/6-2. The first author
gratefully acknowledges \textit{Cusanuswerk e.V.}~for financial support.
%\thanks{During part of this research, the second author was supported by
%the DFG project \textit{Algebraische Fehlerangriffe} KR 1907/6-2. The first author
%gratefully acknowledges \textit{Cusanuswerk e.V.}~for financial support.}

%%%%%%%%%%%%%%%%%%%%%%%%%%%%%%%%%%%%
%  The Bibliography
%%%%%%%%%%%%%%%%%%%%%%%%%%%%%%%%%%%%

\end{document}